\documentclass{article}

\usepackage{graphicx} 
\usepackage{xcolor}

 \usepackage[textwidth=158mm, hcentering, vmargin={24mm,32.5mm}, foot=20mm]{geometry}

\usepackage{booktabs} 
\usepackage{float}

\title{Cross-Attention and Encoder--Decoder Transformers: \\ A Logical Characterization}

\usepackage{authblk}

\author[1]{Veeti~Ahvonen}
\author[1]{Damian~Heiman}
\author[1]{Antti~Kuusisto}
\author[1]{Miguel~Moreno}
\author[1]{Matias~Selin}
\affil[1]{Mathematics Research Centre, Tampere University, Tampere, Finland}

\date{May 2026}

\usepackage{tikz}
\usepackage{multicol}
\usepackage{graphics}

\usepackage{arydshln}
\usepackage{makecell}

\usepackage{colonequals}
\usepackage{amsmath, amssymb, amsthm}
\usepackage{mathtools}
\usepackage{thm-restate}

\usepackage{url}
\usepackage{hyperref}

\usepackage{marginnote}

\usepackage{graphicx} 

\usepackage{verbatim}

\usepackage[utf8]{inputenc}
\usepackage[T1]{fontenc}

\usepackage{enumitem}
\usepackage{tikz}
\usepackage{lipsum}
\usepackage{pdfpages}
\usepackage{multicol} 
\usepackage{xspace} 
\usepackage{stmaryrd}

\usetikzlibrary{positioning, arrows.meta, calc, fit, backgrounds, decorations.pathreplacing}
\PassOptionsToPackage{numbers, compress}{natbib}

\usepackage{makecell}
\usepackage{mathtools}
\usepackage{colonequals}

\usepackage{marginnote}

\usepackage{bm}

\theoremstyle{plain}
\newtheorem{theorem}{Theorem}
\newtheorem{lemma}[theorem]{Lemma}
\newtheorem{proposition}[theorem]{Proposition}

\theoremstyle{definition}

\newtheorem{definition}[theorem]{Definition}

\newcommand{\N}{\mathbb N}
\newcommand{\Z}{\mathbb Z}

\newcommand{\bb}{\mathbf{b}}

\newcommand{\bp}{\mathbf{p}}

\newcommand{\bs}{\mathbf{s}}

\newcommand{\bu}{\mathbf{u}}
\newcommand{\bv}{\mathbf{v}}

\newcommand{\bx}{\mathbf{x}}

\newcommand{\cC}{\mathcal{C}}

\newcommand{\cF}{\mathcal{F}}
\newcommand{\cG}{\mathcal{G}}
\newcommand{\cH}{\mathcal{H}}

\newcommand{\cM}{\mathcal{M}}

\newcommand{\softmax}{\mathrm{softmax}}
\newcommand{\T}{\mathsf{T}}
\newcommand{\ReLU}{\mathrm{ReLU}}

\newcommand{\SUM}{\mathrm{SUM}}

\newcommand{\MLP}{\mathrm{MLP}}
\newcommand{\md}{\mathsf{md}}
\newcommand{\width}{\mathsf{width}}

\newcommand{\BOS}{\mathrm{BOS}}
\newcommand{\EOS}{\mathrm{EOS}}

\newcommand{\concat}{\mathrm{concat}}
\newcommand{\GPTL}{\mathrm{GPTL}}
\newcommand{\GPTLF}{\mathrm{GPTL}^{-}}
\newcommand{\CPG}{\mathrm{CPG}}
\newcommand{\EL}{\mathsf{EL}}
\newcommand{\DL}{\mathsf{DL}}
\newcommand{\MHSA}{\mathsf{MHSA}}
\newcommand{\MHCA}{\mathsf{MHCA}}

\begin{document}

\maketitle

\begin{abstract}
  We give a novel logical characterization of encoder-decoder transformers, the foundational architecture for LLMs that also sees use in various settings that benefit from cross-attention. We study such transformers over text in the practical setting of floating-point numbers and soft-attention, characterizing them with a new temporal logic. This logic extends propositional logic with a counting global modality over the encoder input and a past modality over the decoder input. We also give an additional characterization of such transformers via a type of distributed automata, and show that our results are not limited to the specific choices in the architecture and can account for changes in, e.g., masking. Finally, we discuss encoder--decoder transformers in the autoregressive setting.
\end{abstract}

\section{Introduction}

Transformers \cite{vaswani2017attention} are the dominant architecture in modern 
natural language processing. In the typical scenario in which we operate, transformers operate over word-based inputs and incorporate various attention mechanisms and feedforward neural networks to produce numeric representations of the meanings of words (or more precisely, components of words called tokens). The representations are iteratively refined and used to sample from a selection of tokens to generate text. Due to the dominance of transformers, their expressive power has become an active area of 
research, with a rich body of results connecting transformers to formal language 
theory, temporal logic, and circuit complexity; see, e.g., the survey
\cite{Strobl2024}.

So far, the vast majority of these results concern either encoder-only transformers (using only unmasked self-attention layers) or decoder-only transformers (using only causally masked self-attention layers). The encoder--decoder architecture of the original transformer \cite{vaswani2017attention}, which combines unmasked self-attention, 
masked self-attention, 
and cross-attention, 
has been almost entirely absent from the formal expressivity literature. As noted in \cite{Strobl2024}, only the Turing completeness construction of \cite{perez2021attention} employs an encoder--decoder model, and no fine-grained characterization exists. This gap is significant as encoder--decoder models remain standard for various settings, e.g. machine translation \cite{costa2022no} and other tasks that make use of cross-attention \cite{rombach2022high, zheng2024open}.

In Section~\ref{section: characterizations} of this paper, we provide, to our knowledge, the first logical characterization of cross-attention and encoder--decoder transformers. We consider the bare-bones case without positional encodings (PEs), as this is a fundamental case that can later be built upon via various different PEs. 
Concretely, we show that floating-point encoder--decoder transformers with soft attention and no positional encodings are expressively equivalent to the logic $\GPTLF$ (or more precisely, tuples of formulae of $\GPTLF$) which extends propositional logic with a counting global diamond over the encoder input and a past diamond over the decoder input. We also show that both floating-point encoder--decoder transformers and the logic $\GPTLF$ are expressively equivalent to a class of distributed automata called \emph{counting past-global distributed automata ($\CPG$-automata)} where the state transition function depends on 1) a vertex's previous state, and 2) two bounded multisets of previous states of vertices, one attending to the vertices in the encoder input and the other attending to preceding vertices in the decoder input.\footnote{Our results hold also for some variations in the transformer architecture regarding whether attention layers have multiple heads, the strictness of the masking and the inclusion of layer normalization; see Appendix \ref{appendix:transformer-variations}.} We study expressive power neither in terms of text-generation nor in terms of acceptance, but in terms of matching output strings. 

In Section \ref{section:inference-loop}, we study the \textit{autoregressive} setting. In this setting, the transformer works in a periodic fashion to generate an output string from an input string: in the beginning, the output string is initialised as a beginning-of-string token. This, along with the input, is fed to the transformer, which has a softmax output head that gives a probability distribution over tokens, from which a token is sampled and appended to the output, and the process repeats until a distinguished end-of-string token is produced.
We obtain a similar characterization for this framework, but with a looser notion of equivalence since $\softmax$ cannot simulate every bit string. In this setting, our equivalence notion is quite general and based on a similarity relation $\sim$ that connects feature vectors to similar feature vectors; $\sim$ could be any binary relation, such as an equivalence relation. Objects are then equivalent if they give similar outputs when possible. Our core results in the two frameworks are as follows.

\paragraph{Theorems \ref{theorem:main} \& \ref{theorem: softmax characterization}.} \emph{Encoder--decoder transformers without the final $\softmax$, the logic $\GPTLF$ and $\CPG$-automata have the same expressive power. Moreover, for each similarity relation $\sim$,
encoder--decoder transformers with the final $\softmax$, the logic $\GPTLF$ and $\CPG$-automata have the same expressive power w.r.t. $\sim$.}

We discuss the limitations of our work at the end of the paper in Section~\ref{section: conclusion}.

\paragraph{Related work.}
The expressive power of fixed-precision transformers has received much attention in recent years. Most work has focused on characterizing decoder-only transformers via temporal logics like linear temporal logic (LTL). 
The closest to our work is Li and Cotterell \cite{licharacterizing}, showing that soft-attention with strict masking captures the strictly less expressive fragment of LTL that contains only the past diamond ($\text{LTL}[\lozenge^-]$) as well as partially ordered deterministic finite automata (PODFAs). Their characterization also accounted for non-strict masking and the autoregressive setting. We build on this work by simultaneously also considering the encoder component of transformers; our logic modifies $\text{LTL}[\lozenge^-]$ by adding a global modality over the encoder input, and our automaton class consists of distributed automata which operate entirely differently compared to PODFAs.

Regarding other logical characterizations of decoder-only transformers, Yang, Chiang and Angluin \cite{yang2024masked} showed that with reals, masked rightmost-hard-attention transformers recognize exactly the star-free languages, establishing an equivalence with linear temporal logic (LTL).
For fixed precision,
Jerad et al. used the same logic as \cite{licharacterizing} to capture leftmost-hard attention transformers. Yang, Cadilhac and Chiang \cite{yang2025} characterized transformers that round to fixed precision except inside attention via a temporal logic with counting operators. 

Regarding lower and upper bounds for the expressive power of transformers, Chiang, Cholak and Pillay \cite{chiang2023} established $\mathsf{FOC[+;MOD]}$ as an upper bound for fixed-precision encoder-only transformers and as a lower bound for the case with reals. Merrill and Sabharwal \cite{merrill2023} showed that first-order logic with majority quantifiers is an upper bound for log-precision transformers. Yang and Chiang \cite{yangcounting} established Minimal Tense Logic with counting terms as a lower bound for masked soft-attention transformers with unbounded input size. Barceló et al.\ \cite{barcelo2024} gave first-order logic with unary numerical predicates as a lower bound for unique hard-attention transformers with reals as well as linear temporal logic with unary numerical predicates and counting formulae as a lower bound for average hard-attention.

Much work has also been done in the autoregressive setting. Li and Cotterell \cite{licharacterizing} show that their characterization of fixed-precision transformers applies also here, since their logic can capture any function with finite image, including the final output $\softmax$. Another direction has been to investigate the effect of \textit{chain-of-thought} (CoT), i.e., allowing the model to generate ``scratchpad'' tokens that are discarded before the final answer. 
Merrill and Sabharwal \cite{merrill2024} began in the log-precision setting, showing that CoT increases expressive power depending on the number of intermediate steps, a polynomial number, capturing exactly the complexity class $\mathsf{P}$. Li et al.\ \cite{li2024chain} showed that with finite-precision and a constant amount of decoder layers, CoT allows constant-depth transformers to solve inherently serial problems that would otherwise require proportionally many layers. Jiang et al. \cite{jiang2025softmaxtransformersturingcomplete} showed that length-generalizable soft-attention transformers with CoT are Turing-complete, where length-generalizability means that an idealized learning procedure will converge to the language expressible with the transformer model when given all words of some length in the language.

The results above concern encoder-only or decoder-only models and do not address cross-attention. To our knowledge, the sole previous work concerning the formal expressivity of encoder--decoder transformers is Pérez, Barceló and Marinkovic \cite{perez2021attention}. We highlight two key differences to our work: their transformer architecture is quite different, computing with rational numbers of arbitrary precision and using average-hard attention, and their result concerns Turing-completeness rather than giving a tight characterization of expressive power. By contrast, we study 
floating-point transformers with soft attention, for which we give a precise characterization with the logic $\GPTLF$.

In the setting of graphs, Ahvonen et al. \cite{ahvonen2026} characterized floating-point graph transformers by the modal logic $\mathrm{PL+GC}$ extending propositional logic with the counting global modality. This result applied also to word-shaped graphs, essentially giving a characterization of encoder-only transformers. Our transformer model extends these by incorporating masked self-attention and cross-attention. Our logic $\GPTLF$ modifies the logic $\mathrm{PL+GC}$ by restricting the global modality to the encoder input and adding a past modality that is restricted to the decoder input.
We also obtain a characterization via distributed automata which were not considered in \cite{ahvonen2026}.

\section{Preliminaries}

We let $\N$ denote the set of non-negative integers, $\Z$ the set of integers and $\Z_+$ the set of positive integers. For each $n \in \Z_+$, we let $[n]$ denote the set $\{1, \dots, n\}$.
Let $\mathrm{LAB}$
denote a countably infinite set of \textbf{tokens}. 
We denote finite subsets of $\mathrm{LAB}$ by $\Sigma$. 
We denote strings and sequences by bold letters $\bs$, and $\bs_i$ denotes the $i$th bit or component of $\bs$.
For a matrix $X$, we let $X_{i,*}$ denote the $i$th row, $X_{*,j}$ the $j$th column, and $X_{i,j}$ the element in the $i$th row and $j$th column of $X$.
For a set $S$, let $S^*$ denote the set of sequences over $S$, let $S^{* \times *}$ denote the set of matrices over $S$ and let $S^{m \times *}$ and $S^{* \times n}$ denote the restriction of $S^{* \times *}$ to matrices with $m$ rows and $n$ columns, respectively. 
For matrices $X^{(1)}, \dots, X^{(h)} \in S^{m \times n}$, we let $\concat(X^{(1)}, \dots, X^{(h)})$ denote the matrix $Y$ where $Y_{i,(\ell-1)n + j} = X^{(\ell)}_{i,j}$.
Let $\cM(S)$ denote the set of all multisets over $S$, i.e., functions $S \to \N$. For a multiset $M : S \to \N$, let $M_{|k}$ denote the \textbf{$k$-projection} of~$M$, i.e., $M_{|k}(x)=\text{min}\{M(x),k\}$.

\subsection{Word-shaped graphs}

Transformers operate on words, which we represent as graphs.
A \textbf{$\Sigma$-labelled word-shaped graph} is a tuple $\cG = (V, E, \lambda)$ where $V = [n]$ is a finite set of \textbf{vertices} for some $n \in \Z_+$, $E$ is the successor relation over $V$ (i.e., $(i,j) \in E$ if and only if $j = i+1$) whose elements are called \textbf{edges}, and $\lambda : V \to \Sigma$ is a \textbf{labelling function} assigning a token to each vertex. We may use $V(\cG)$ to denote $V$. A \textbf{pointed $\Sigma$-labelled word-shaped graph} is a pair $(\cG,v)$ where $v \in V(\cG)$.

In particular, in a single loop, a transformer operates on two words; an input word and an output word that is iteratively added to. We represent these two inputs as a two-sorted graph.
A \textbf{$\Sigma$-labelled two-sorted word-shaped graph} is a $\Sigma \cup \{\BOS\}$-labelled word-shaped graph $\cG$ where exactly one vertex~$v$ has the token $\BOS$ (which stands for ``beginning of string'').
The subgraph $\cG_p$ induced by vertices $1, \dots, v-1$ is called the \textbf{prefix} and the subgraph $\cG_s$ induced by the vertices $v, \dots, n$ is called the \textbf{suffix}.
For simplicity, we will refer to word-shaped graphs as graphs, and two-sorted word-shaped graphs as two-sorted graphs.

\subsection{Floating-point numbers}

We study encoder--decoder transformers under realistic assumptions of finite numerical precision. Each transformer uses a single floating-point format for all calculations, but different transformers may use different formats.
Our framework encompasses common IEEE~754-style formats used in deep learning, including FP16, FP32, FP64, and bfloat16.

Given $p,q\in\mathbb{Z}_+$, a \textbf{floating-point number} (or \textbf{float}) over $p$ and $q$ is a bit string
\[
  b_0 b_1 \cdots b_{q+p} \in \{0,1\}^{p+q+1}.
\]
The bit $b_0$ represents the \textbf{sign}, the substring $b_1\cdots b_q$ the \textbf{exponent} (denoted by $\mathbf e$, with numeric value~$e$), and $b_{q+1}\cdots b_{q+p}$ the \textbf{significand} (denoted by $\mathbf s$, with numeric value $s$).

Let $a=2^{p-1}$ and $b=2^{q-1}-1$.  
Except for special cases described below, the string $b_0b_1\cdots b_{q+p}$ encodes the real number
$
  (-1)^{b_0}\frac{s}{a}2^{\,e-b}.
$
If $\mathbf{s}=0^p$ and $\mathbf{e}=1^q$, the value is interpreted as $\infty$ for $b_0=0$ and as $-\infty$ for $b_0=1$.  
When unambiguous, we identify a floating-point number with the real value (or $\pm\infty$) it represents.

A float is called \textbf{normalised} if $b_{q+1}=1$, and \textbf{subnormalised} if $b_{q+1}=0$ and $\mathbf e=0^q$.  
The \textbf{floating-point format} $\mathcal F(p,q)$ consists of all normalised and subnormalised numbers over $p$ and $q$, together with the symbols $\infty$, $-\infty$, and NaN.  
When $p$ and $q$ are clear from the context, we simply write $\cF$.

Arithmetic operations $+$, $-$, $\cdot$, $\div$, and $\sqrt{\phantom{x}}$ are defined by first computing the exact result in extended real arithmetic and then rounding to the nearest element of $\mathcal F(p,q)$ (notice that the sum of floats is not associative due to rounding errors).  
If a result exceeds the representable range and would round to a number outside $\cF(p,q)$ if rounded within the format $\cF(p,q+1)$ instead (thus exceeding the maximum float in $\cF(p,q)$ by a significant margin),
then it is rounded to $\pm\infty$; this phenomenon is called overflow.  
Undefined operations (e.g.,\ $\frac{\infty}{\infty}$) and operations involving NaN produce NaN.  
The exponential function $\texttt{exp}(x)$ is implemented via approximation within the floating-point format itself—e.g.\ using Taylor or related series—rather than by exact real evaluation followed by rounding.

\subsection{Encoder--decoder transformers}\label{subsection:encoder-decoder-transformers}

Our encoder--decoder transformer is modelled after the original architecture of \cite{vaswani2017attention}; see Figure~\ref{fig:vanilla-transformer}. 

\begin{figure}[ht]
  \centering
\resizebox{\columnwidth}{!}{
\begin{tikzpicture}[
    remember picture,
]

\definecolor{colorAttn}{RGB}{248,203,153}
\definecolor{colorAddNorm}{RGB}{255,230,153}
\definecolor{colorFF}{RGB}{162,196,228}
\definecolor{colorLinear}{RGB}{169,208,174}
\definecolor{colorSoftmax}{RGB}{192,170,212}
\definecolor{colorEncBlock}{RGB}{255,235,205}
\definecolor{colorDecBlock}{RGB}{220,230,245}

\tikzset{
    block/.style={
        draw, rectangle, minimum width=2.8cm, minimum height=0.65cm,
        font=\small, align=center, rounded corners=2pt
    },
    embedblock/.style={block, fill=black!10},
    attnblock/.style={block, fill=colorAttn},
    addnormblock/.style={block, fill=colorAddNorm},
    ffblock/.style={block, fill=colorFF},
    linearblock/.style={block, fill=colorLinear},
    softmaxblock/.style={block, fill=colorSoftmax},
    stackblock/.style={
        draw, thick, rectangle, minimum width=2.4cm, minimum height=0.8cm,
        font=\small, align=center, rounded corners=3pt
    },
    encstack/.style={stackblock, fill=colorEncBlock},
    decstack/.style={stackblock, fill=colorDecBlock},
    addcircle/.style={
        draw, circle, minimum size=0.45cm, inner sep=0pt, font=\small
    },
    arrow/.style={-{Stealth[length=5pt]}, thick},
    lbl/.style={font=\small},
    brace/.style={
        decorate, decoration={brace, amplitude=8pt}, thick, gray
    },
    bracemirror/.style={
        decorate, decoration={brace, amplitude=8pt, mirror}, thick, gray
    },
}


\node[lbl] (enc-input) {Prefix $\mathcal{G}_p$};
\node[embedblock, above=0.4cm of enc-input, minimum width=2.4cm] (enc-embed) {Prefix embedding};
\node[addcircle, above=0.4cm of enc-embed] (enc-pe) {$+$};
\node[lbl, left=0.8cm of enc-pe, gray, font=\scriptsize] (enc-pe-lbl) 
    {\begin{tabular}{c}(Positional\\encoding)\end{tabular}};

\node[encstack, above=0.5cm of enc-pe] (enc-L1) {Encoder layer $\EL_1$};
\node[encstack, above=0.5cm of enc-L1] (enc-L2) {Encoder layer $\EL_2$};
\node[above=0.50cm of enc-L2, font=\small] (enc-dots) {$\vdots$};
\node[encstack, above=0.50cm of enc-dots] (enc-LN) {Encoder layer $\EL_{L_1}$};


\node[lbl, right=3.5cm of enc-input] (dec-input) {Suffix $\mathcal{G}_s$};
\node[embedblock, above=0.4cm of dec-input, minimum width=2.4cm] (dec-embed) {Suffix embedding};
\node[addcircle, above=0.4cm of dec-embed] (dec-pe) {$+$};
\node[lbl, right=0.8cm of dec-pe, gray, font=\scriptsize] (dec-pe-lbl) 
    {\begin{tabular}{c}(Positional\\encoding)\end{tabular}};

\node[decstack, above=0.5cm of dec-pe] (dec-L1) {Decoder layer $\DL_1$};
\node[decstack, above=0.5cm of dec-L1] (dec-L2) {Decoder layer $\DL_2$};
\node[above=0.50cm of dec-L2, font=\small] (dec-dots) {$\vdots$};
\node[decstack, above=0.50cm of dec-dots] (dec-LN) {Decoder layer $\DL_{L_2}$};


\node[linearblock, above=0.5cm of dec-LN, minimum width=2.4cm] (linear) {Linear};
\node[softmaxblock, above=0.3cm of linear, minimum width=2.4cm] (softmax) {Softmax};
\node[lbl, above=0.4cm of softmax] (output) {\begin{tabular}{c}Output\\probabilities\end{tabular}};

\draw[arrow] (enc-input) -- (enc-embed);
\draw[arrow] (enc-embed) -- (enc-pe);
\draw[arrow,gray] (enc-pe-lbl) -- (enc-pe);
\draw[arrow] (enc-pe) -- (enc-L1);
\draw[arrow] (enc-L1) -- (enc-L2);
\draw[arrow] (enc-L2) -- (enc-dots);
\draw[arrow] (enc-dots) -- (enc-LN);

\draw[arrow] (dec-input) -- (dec-embed);
\draw[arrow] (dec-embed) -- (dec-pe);
\draw[arrow,gray] (dec-pe-lbl) -- (dec-pe);
\draw[arrow] (dec-pe) -- (dec-L1);
\draw[arrow] (dec-L1) -- (dec-L2);
\draw[arrow] (dec-L2) -- (dec-dots);
\draw[arrow] (dec-dots) -- (dec-LN);
\draw[arrow] (dec-LN) -- (linear);
\draw[arrow] (linear) -- (softmax);
\draw[arrow] (softmax) -- (output);

\draw[arrow] (enc-LN.north) -- ([yshift=20pt]enc-LN.north) -- ([yshift=20pt, xshift=70]enc-LN.north)
-- ([xshift=-31.5pt]dec-L1.west)
-- (dec-L1.west);
\draw[arrow] ([xshift=30pt]enc-LN.east) -- (dec-LN.west);
\draw[thick, dotted] ([xshift=32pt, yshift=-35pt]enc-LN.east) -- ([yshift=-34.5pt]dec-LN.west);
\draw[arrow] ([xshift=33pt]enc-L2.east) -- (dec-L2.west);


\node[attnblock, left=3.0cm of enc-L1, yshift=0cm] (edet-mha) 
    {Multi-head\\self-attention};
\node[addnormblock, above=0.3cm of edet-mha] (edet-an1) 
    {Add (\& Norm)};
\node[ffblock, above=0.3cm of edet-an1] (edet-ff) {MLP};
\node[addnormblock, above=0.3cm of edet-ff] (edet-an2) 
    {Add (\& Norm)};

\begin{scope}[on background layer]
\node[draw, thick, rounded corners=3pt, fill=colorEncBlock,
      fit=(edet-mha)(edet-an1)(edet-ff)(edet-an2),
      inner xsep=10pt, inner ysep=8pt] (edet-box) {};
\end{scope}

\draw[arrow] (edet-mha) -- (edet-an1);
\draw[arrow] (edet-an1) -- (edet-ff);
\draw[arrow] (edet-ff) -- (edet-an2);

\coordinate (edet-res-in) at ([yshift=-17pt]edet-mha.south);
\draw[arrow] ([yshift=-10pt]edet-res-in) -- (edet-mha);
\draw[arrow] (edet-res-in) -| ([xshift=0.7cm, yshift=-5pt]edet-an1.east) 
    -- ([yshift=-5pt]edet-an1.east);
\draw[arrow] ([yshift=3pt]edet-an1.east) -| ([xshift=0.7cm]edet-an2.east) 
    -- (edet-an2.east);

\coordinate (edet-res-out) at (edet-an2.north);
\draw[arrow] (edet-an2) -- ([yshift=23pt]edet-res-out);

\draw[bracemirror] 
    ([xshift=70pt, yshift=9pt]edet-res-in) -- ([xshift=70pt, yshift=8pt]edet-res-out)
    coordinate[midway] (enc-brace-mid);
\draw[thick, gray] (enc-L1.west) -- ([xshift=8.3pt]enc-brace-mid);


\node[attnblock, right=3.0cm of dec-L1, yshift=-0.8cm] (ddet-mmha) 
    {Masked multi-head\\self-attention};
\node[addnormblock, above=0.3cm of ddet-mmha] (ddet-an1) 
    {Add (\& Norm)};
\node[attnblock, above=0.3cm of ddet-an1] (ddet-mha) 
    {Multi-head\\cross-attention};
\node[addnormblock, above=0.3cm of ddet-mha] (ddet-an2) 
    {Add (\& Norm)};
\node[ffblock, above=0.3cm of ddet-an2] (ddet-ff) {MLP};
\node[addnormblock, above=0.3cm of ddet-ff] (ddet-an3) 
    {Add (\& Norm)};

\begin{scope}[on background layer]
\node[draw, thick, rounded corners=3pt, fill=colorDecBlock,
      fit=(ddet-mmha)(ddet-an1)(ddet-mha)(ddet-an2)(ddet-ff)(ddet-an3),
      inner xsep=10pt, inner ysep=8pt] (ddet-box) {};
\end{scope}

\draw[arrow] (ddet-mmha) -- (ddet-an1);
\draw[arrow] (ddet-an1) -- (ddet-mha);
\draw[arrow] (ddet-mha) -- (ddet-an2);
\draw[arrow] (ddet-an2) -- (ddet-ff);
\draw[arrow] (ddet-ff) -- (ddet-an3);

\coordinate (ddet-res-in) at ([yshift=-17pt]ddet-mmha.south);
\draw[arrow] ([yshift=-10pt]ddet-res-in) -- (ddet-mmha);
\draw[arrow] (ddet-res-in) -| ([xshift=0.7cm, yshift=-5pt]ddet-an1.east) 
    -- ([yshift=-5pt]ddet-an1.east);
\draw[arrow] ([yshift=3pt]ddet-an1.east) -| ([xshift=0.7cm, yshift=-5pt]ddet-an2.east) 
    -- ([yshift=-5pt]ddet-an2.east);
\draw[arrow] ([yshift=3pt]ddet-an2.east) -| ([xshift=0.7cm]ddet-an3.east) 
    -- (ddet-an3.east);

\draw[arrow] (ddet-res-in) -- (ddet-mmha);
\coordinate (ddet-res-out) at (ddet-an3.north);
\draw[arrow] (ddet-an3) -- ([yshift=23pt]ddet-res-out);

\draw[arrow] ([xshift=-0.6cm]ddet-mha.west) -- (ddet-mha.west);

\draw[brace] 
    ([xshift=-65pt,yshift=9pt]ddet-res-in) -- ([xshift=-65pt,yshift=8pt]ddet-res-out)
    coordinate[midway] (dec-brace-mid);
\draw[thick, gray] (dec-L1.east) -- ([xshift=-8.5pt]dec-brace-mid);

\end{tikzpicture}}
  \caption{The encoder--decoder transformer of \cite{vaswani2017attention}. In this paper, we do not consider positional encodings (marked in \textcolor{gray}{gray}).}
  \label{fig:vanilla-transformer}
\end{figure}

\paragraph{Embedding.} Our input is a sequence of tokens, represented as a two-sorted graph.
First, an \textbf{embedding function} $\mathrm{em}:\Sigma\to \cF^d$ maps each token to a vector in the embedding space. Applying it vertex-wise to the prefix $\mathcal{G}_p$ and suffix $\mathcal{G}_s$ thus produces matrices of dimensions $|\mathcal{G}_p|\times d$ and $|\mathcal{G}_s|\times d$ (where the $i$th row of the former is $\mathrm{em}(\lambda(i))$ and the $j$th row of the latter is $\mathrm{em}(\lambda(|\cG_p| + j))$). 

\paragraph{Masking.}

Masking prevents tokens from attending to future positions which would make the next-token prediction task trivial during training. The \textbf{strict causal mask} (or \textbf{strict future mask}) is the function $\mathrm{mask}$ mapping each square matrix $S \in \cF^{* \times *}$ to a matrix of the same dimensions
where the upper-right-triangular entries are masked to $-\infty$, i.e.,
\[
  (\mathrm{mask}(S))_{ij} := \begin{cases}
      -\infty &\text{when } j\geq i, \\
      S_{ij} &\text{otherwise}.
  \end{cases}
\]
For results related to non-strict causal masking and the shifting operation of \cite{vaswani2017attention}, see Appendix \ref{appendix:masking}.

\paragraph{Softmax.}

For a vector $\bx = (x_1, \dots, x_{|\bx|}) \in \cF^{*}$, let $b_\bx := \max\{ x_1, \dots, x_{|\bx|} \}$. The $\softmax$ function maps each $\bx \in \cF^{*}$ to a vector in $\cF^{*}$ of the same length such that
\[
    \softmax(\bx)_i := \frac{e^{x_i - b_{\bx}}}{\sum_{j = 1}^n e^{x_j - b_{\bx}}}.
\]
Thus, $\softmax$ intuitively turns each vector into a probability distribution; the bias $b_\bx$ is subtracted for numerical stability.
For floating-point numbers, it is important to specify the order of summation in the above denominator (since the sum of floats is not associative). A reasonable choice w.r.t. accuracy is to sum in the increasing order of floats \cite{higham_article, floats_robertazzi, wilkinson}. By \cite{ahvonen_2024}, this means that the sum of a multiset of floats is bounded in the following sense.

Let $N$ be a multiset of floats in $\cF$.
We let $\textsc{SUM}_\mathcal{F}(N)$ denote the value of the sum $f_1+\dots +f_l$, where each $f_i$ appears $N(f_i)$ times in the sum and the floats appear and are summed (in $\mathcal{F}$) in increasing order. Recall that $N_{|k}$ restricts the multiplicities in $N$ to at most $k$.
Proposition 2.3 in \cite{ahvonen_2024} states:

\begin{proposition}[\cite{ahvonen_2024}]\label{proposition: floating-point saturation}
    For all floating-point formats $\cF$, there exists a $k \in \N$ such that for all multisets $M$ over floats in $\cF$, we have $\SUM_{\cF}(M) = \SUM_{\cF}(M_{|k})$.
\end{proposition}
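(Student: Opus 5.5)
The plan is to use the fact that the summation order is sorted. In increasing order, all copies of a given float $f$ are added consecutively. So $\SUM_\cF(M)$ is a left fold over the distinct floats $f_1 < f_2 < \dots < f_m$ in the support of $M$. At each stage the running partial sum $s$ is updated by adding $f_i$ exactly $M(f_i)$ times.

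The key lemma is a saturation statement for repeated addition of a fixed float. For each pair $(s,f) \in \cF^2$, consider the sequence $s_0 = s$, $s_{j+1} = s_j + f$, with addition computed in $\cF$. This sequence lives in the finite set $\cF$, and I claim it becomes constant after at most $|\cF|$ steps.

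To see this, I would argue monotonicity. Rounding to nearest is monotone, so $s_{j+1} \geq s_j$ when $f \geq 0$, and symmetrically $s_{j+1} \le s_j$ when $f \le 0$. A NaN, once reached, stays NaN, and $\pm\infty$ are absorbing; the case $\infty + (-\infty)$ gives NaN, which is then absorbing.

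Given this, the sequence is monotone in the finite totally ordered set $\cF \setminus \{\mathrm{NaN}\}$, or it hits NaN and stays there. Once $s_{j+1} = s_j$, determinism forces the sequence to be constant from then on. Hence $s_j = s_{|\cF|}$ for all $j \ge |\cF|$.

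Now take $k := |\cF|$, which in the worst case is about $2^{p+q+1}$. Replacing each multiplicity $M(f_i)$ by $\min\{M(f_i),k\}$ leaves every stage of the fold unchanged, by the key lemma applied to that stage's starting partial sum. The result then follows by induction on the number of distinct elements. The ordering of $M_{|k}$ is the same as that of $M$, because the support is unchanged, so the two folds run over the same stages.

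The main obstacle is making the monotonicity argument rigorous at the boundary cases. These are signed zeros and the overflow convention, where a sum rounds to $\pm\infty$ only past a margin. They also include the case where $s$ is finite and $f$ is infinite, and NaN inputs in $M$.

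I would handle these by noting that we only need eventual constancy, not strict monotonicity. Each special value is absorbing, which I would check case by case. On finite values, round-to-nearest with the overflow rule is a monotone map from $\mathbb{R}$ into $\cF \cup \{\pm\infty\}$. That suffices to conclude that the orbit is monotone and therefore stabilizes.

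If NaN occurs among the summands, the sum is NaN either way, and the claim holds trivially.
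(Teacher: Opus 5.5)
The paper gives no proof of its own here; it imports the statement as Proposition~2.3 of \cite{ahvonen_2024}. Your argument is a correct, self-contained proof of it.

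The pieces all hold:
\begin{itemize}
\item The increasing order places all copies of each float together, so each group reduces to the orbit of $s \mapsto s+f$ for a fixed $f$.
\item Addition is exact-then-round with a monotone rounding map, and the paper's overflow rule keeps it monotone. So this orbit is monotone in value, with $\pm\infty$ and $\mathrm{NaN}$ absorbing, and it is constant after at most $|\cF|$ steps.
\item The induction over distinct summands is sound, because $k \geq 1$ leaves the support, and hence the summation order, unchanged.
\end{itemize}

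Two small points should be made explicit. First, argue with values rather than bit strings, as the paper does. The elements $+0$ and $-0$ of $\cF$ are distinct but have the same value, while every nonzero value has a unique representation, so stabilization of the value is all you need. Second, $\SUM_\cF$ starts the fold at the first summand rather than at $0$. This only shifts an index by one, and the slack in your bound absorbs it, since an orbit makes fewer strict moves than there are non-$\mathrm{NaN}$ values.

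What your route buys is that it is purely qualitative. It uses only the finiteness of $\cF$ and the monotonicity of rounding, so it survives any rounding mode or overflow convention that keeps rounding monotone. The price is the crude threshold $k = |\cF| \approx 2^{p+q+1}$, which depends on the exponent width. A quantitative absorption argument gives a threshold of order $2^{p}$: a nonzero summand $f$ no longer changes a running sum of magnitude above about $2^{p+2}|f|$, and each effective addition moves the sum by roughly $|f|/2$ or more. The proposition, and its use in the proof of Theorem~\ref{theorem: transformers to automata}, only need the existence of some $k$, so this costs nothing here.
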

In other words, the sum of a multiset of floats only depends on the multiplicities of floats up to some bound $k$. Henceforth, we assume that floats are always summed in the increasing order of floats.

\paragraph{Attention.}
An \textbf{attention head} over dimensions $(d, d_k, d_v)$ is a function $H_{\mathrm{mask}}$ 
that maps a pair of matrices $X,Y \in \cF^{* \times d}$ to a matrix $H_{\mathrm{mask}}(X,Y) \in \cF^{* \times d_v}$ with as many rows as $Y$, defined by:
\[
  H_{\mathrm{mask}}(X,Y) 
  := \mathrm{softmax}\!\left(\mathrm{mask}\!\left(
      \frac{(YW^Q)(XW^K)^{\top}}{\sqrt{d_k}}
     \right)\right)(XW^V),
\]
where $W^Q, W^K \in \cF^{d \times d_k}$, $W^V \in \cF^{d \times d_v}$ and $\softmax$ is applied row-wise.
Intuitively $Y$ is the output of a preceding masked self-attention layer and $X$ is the output of the encoder.
The \textbf{masked self-attention head} $H_{\mathrm{mask}}(X)$ is the special case where $X=Y$.\footnote{After masking and before $\softmax$, the top row is $(-\infty, \dots, -\infty)$. If $\softmax$ is computed normally, this results in a row of NaNs. To avoid this, we adopt the convention that $\softmax(-\infty, \dots, -\infty) = (0, \dots, 0)$. This is a natural choice because masked self-attention is applied to the suffix of a graph and the first row thus corresponds to the $\BOS$-token; the convention simply stops the $\BOS$-token from making predictions about future tokens.} The \textbf{cross-attention head} $H(X,Y)$ is the special case where the masking function is omitted. The \textbf{(unmasked) self-attention head} $H(X)$ is the special case where $X=Y$ and the masking function is omitted.

\paragraph{Multi-head attention.} Let $H^{(1)},\dots,H^{(h)}$ be self-attention heads 
over $(d, d_k, d_v)$.
A \textbf{masked multi-head self-attention layer of width $h$} 
over $(d, d_k, d_v)$
is a function $\mathrm{MHSA}_{\text{mask}}$ 
that maps each matrix $X \in \cF^{*\times d}$ to a matrix in $\cF^{*\times d}$ with the same number of rows, defined by
\[
    \MHSA_{\text{mask}}(X):=\concat(H_{\text{mask}}^{(1)}(X),\dots,H_{\text{mask}}^{(h)}(X))W^O,
\]
where $W^O\in \cF^{h d_v\times d}$ is a parameter matrix. Again, we denote an unmasked multi-head self-attention layer as simply $\MHSA$. 
\textbf{Multi-head cross-attention layers of width $h$}, denoted $\MHCA$, are defined analogously by replacing the self-attention heads with cross-attention heads.
Allowing multiple heads does not increase expressivity (see Appendix \ref{appendix:multi-head-cross-attention}), but it is more convenient.

\paragraph{Multi-layer perceptrons.}

A \textbf{multi-layer perceptron over $(d, d_{ff})$} is a function $M : \cF^{1 \times d} \to \cF^{1 \times d}$ defined w.r.t. matrices $W_1 \in \cF^{d \times d_{ff}}$ and $W_2 \in \cF^{d_{ff} \times d}$ and bias vectors $b_1 \in \cF^{1 \times d_{ff}}$ and $b_2 \in \cF^{1 \times d}$. For each (horizontal) vector $\bx \in \cF^{1 \times d}$, we define
\[
    M(\bx) := \ReLU(\bx W_1 + b_1) W_2 + b_2,
\]
where $\ReLU(x) = \max\{0,x\}$ is applied element-wise.
For a matrix $X \in \cF^{* \times d}$, $M(X)$ is obtained by applying $M$ separately to each row of $X$.

\paragraph{Encoder.} An \textbf{encoder layer with residual connections} over $(d, d_k, d_v, d_{ff})$ is 
a function $\EL$ that maps each matrix $X \in \cF^{* \times d}$ to a matrix $\EL(X) \in \cF^{* \times d}$ with the same number of rows as follows:
\[
    \EL(X) = M(X') + X' \text{ where } X' = \MHSA(X)+X,
\]
where $\MHSA$ and $M$ are defined as above.
Residual connections refer to the fact that the input of a self-attention layer and $\MLP$ is added to its output. Often, a \textbf{layer normalization} function is also applied either before or after the residual connection, but this does not affect expressivity (see Appendix~\ref{appendix:layer-normalization}). An \textbf{encoder of length $L$} is a composition of $L$ such layers:
\[
    \mathsf{E}(X)=(\EL_L\circ\dots\circ \EL_1)(X).
\]

\paragraph{Decoder.} 
A \textbf{decoder layer with residual connections} over $(d,d_k,d_v,d_{ff})$ is a function $\DL$ that maps each pair $(X, Y) \in \cF^{*\times d} \times \cF^{*\times d}$ to a matrix in $\cF^{*\times d}$ with as many rows as $Y$ as follows:
\[
    \DL(X,Y) = M(Y'') + Y'' \text{ where } Y'' = \mathrm{MHCA}(X, Y') + Y' \text{ and } Y' = \mathrm{MHSA}_{\mathrm{mask}}(Y) + Y,
\]
where $\mathrm{MHSA}_{\mathrm{mask}}$, $\mathrm{MHCA}$, $M$ are as above.
A \textbf{decoder of length $L$} is a composition of $L$ layers:
\[
    \mathsf{D}(X, Y)=
    \DL_L(X,\DL_{L-1}(X, \cdots \DL_2(X, \DL_1(X,Y))\cdots)).
\]

\paragraph{Output head.}
Given the final decoder output $Y \in \cF^{* \times d}$ and a fixed output dimension $d_\text{out}$, an 
\textbf{output head} over $(d_\text{},d_\text{out})$ is applied to transform the output into a $|\mathcal{G}_s|\times d_\text{out}$-matrix
\[
    \text{Out}(Y)=Y\cdot  W^{\mathrm{out}} 
    + b^{\mathrm{out}},
\]
where $W^{\mathrm{out}} \in \cF^{d \times d_\text{out}}$ and 
$b^{\mathrm{out}} \in \cF^{1 \times d_\text{out}}$ are learned 
parameters. Usually, $d_\text{out}$ is the size of the vocabulary $|\Sigma\cup\{\EOS\}|$ (appended with an $\EOS$ symbol, standing for ``end of string''), and $\softmax$ is applied to obtain a probability distribution over tokens, from which we sample the next token of the output. This is called \textbf{autoregressive generation} and is considered in Section \ref{section:inference-loop}. Since our definition does not require $d_\text{out} = |\Sigma\cup\{\EOS\}|$, the characterization covers both autoregressive generators and simpler settings such as binary classifiers.

\paragraph{Encoder--decoder transformer.} Let $\mathsf{E}$ be an encoder of length $L_1$, and let $\mathsf{D}$ be a decoder of length $L_2$, both over $(d,d_k,d_v,d_{ff})$, and let $\text{Out}$ be as above.
An \textbf{encoder--decoder transformer} over $(d,d_k,d_v,d_{ff},d_\text{out})$ maps each $\Sigma$-labelled two-sorted graph $\mathcal{G}$ to a matrix in $\cF^{|\cG_s| \times d_\text{out}}$ as follows:
\[
    \text{T}(\mathcal{G})=\text{Out}(\mathsf{D}(\mathsf{E}(\mathrm{em}(\mathcal{G}_p)),\mathrm{em}(\mathcal{G}_s))),
\]
where $\mathcal{G}_p$ is the prefix and $\mathcal{G}_s$ the suffix of $\mathcal{G}$.

At every vertex $v$ in the suffix $\cG_s$, an encoder--decoder transformer $T$ therefore outputs a feature vector $\bv = (f_1, \dots, f_{d_\text{out}}) \in \cF^{d_{\text{out}}}$. 
This can be interpreted as a bit string in two natural ways. 
\begin{itemize}
    \item \textbf{Bit-wise interpretation:} Let $\ell := p + q + 1$. Then we can always interpret $\bv$ as the bit string $b_1 \cdots b_{d_{\text{out}}\ell}$ where for each $d' \in [d_{\text{out}}]$ and each $i \in [\ell]$, $b_{(d'-1)\ell+i} = 1$ if and only if the $i$th bit of $f_{d'}$ is $1$.
    \item \textbf{Feature-wise interpretation:} If $f_i$ represents either $0$ or $1$ for each $i \in [d_{\text{out}}]$, then we can also interpret $\bv$ as the bit string $b_1 \cdots b_{d_{\text{out}}}$ where $b_i = 1$ if and only if $f_i$ represents $1$.
\end{itemize}

\subsection{Logic}

The logic we consider is a particular temporal logic $\GPTLF$. The \textbf{$\Sigma$-formulae of $\GPTLF$} are constructed according to the following grammar:
\[
  \varphi ::= \bot
  \mid p 
  \mid \neg \varphi 
  \mid (\varphi \land \varphi) 
  \mid \langle G\rangle^{\text{pre}}_{\ge k}\,\varphi 
  \mid \langle P\rangle^{\text{suf}}\,\varphi,
\]
where $p \in \Sigma$ and $k \in \mathbb{N}$.
Given a $\Sigma$-formula $\varphi$ of $\GPTLF$ and a pointed $\Sigma$-labelled two-sorted graph $(\cG,w)$, we define $\cG,w\models \varphi$ (the truth of $\varphi$ in $(\cG,w)$) as follows: $\cG,w\models \bot$ never holds; $\neg$ and $\wedge$ are defined as usual; the cases $p\in \Sigma$, $\langle G\rangle^{\text{pre}}_{\ge k}\,\varphi$ and $\langle P\rangle^{\text{suf}}\,\varphi$ are given by
\begin{itemize}
    \item $\cG,w\models p  \text{ if and only if } \lambda(w)=p$
    \item $\cG,w\models \langle G\rangle^{\text{pre}}_{\ge k}\,\varphi \text{ if and only if } |\{v\in V(\cG_p)\mid \cG,v\models \varphi\}|\ge k,$
    \item $\cG,w\models \langle P\rangle^{\text{suf}}\,\varphi \text{ if and only if } \exists v\in V(\cG_s) \text{ such that } v < w \text{ and }\ \cG,v\models \varphi.$
\end{itemize}
Notice that $w \not< w$, and thus $\cG,w\models \varphi$ does not imply $\cG,w\models \langle P\rangle^{\text{suf}}\,\varphi$. 
The unrestricted version $\GPTL$ is a modification of the logic $\GPTLF$ obtained by replacing the prefix $\cG_p$ and suffix $\cG_s$ in the last two bullets with the whole graph $\cG$.
Consider a sequence $\vec\varphi = (\varphi_1, \dots, \varphi_m)$ of formulae of $\GPTLF$, a two-sorted graph $\cG$ and a vertex $v$ in the suffix of $\cG$; the \textbf{output of $\vec\varphi$ at $(\cG,v)$} is the bit string $b_1 \cdots b_m$ such that $b_i = 1$ if and only if $\cG,v \models \varphi_i$.

\subsection{Automata}\label{subsection: automata}

A \textbf{counting past-global distributed automaton} ($\CPG$-automaton) over the vocabulary $\Sigma$ is a tuple $A = (Q, \pi, \delta, n, b)$ 
where $Q \subseteq \{0,1\}^m$ (where $m \in \Z_+$) is a finite non-empty set of \textbf{states}, $\pi : \Sigma \to Q$ is an \textbf{initializing function} (assigning an initial state to each vertex depending on its token), 
$
  \delta \colon Q \times \mathcal{M}_k(Q) \times \mathcal{M}_k(Q) \to Q
$
is a \textbf{transition function},
$n \in \N$ is the \textbf{output round} and $b \in \Z_+$ is the \textbf{output length}. 
Here $\mathcal{M}_k(Q)$ denotes the set of \emph{$k$-capped multisets} over $Q$;
multiplicities are truncated at $k$ (i.e., counts larger than $k$ are identified with $k$).

These automata operate over two-sorted graphs.
At each synchronous step, each vertex updates its state using:
\begin{enumerate}
  \item its current state,
  \item the $k$-capped multiset of states received from all vertices in the prefix,
  \item the $k$-capped multiset of states received from all prior vertices in the suffix.
\end{enumerate}
More formally, we define the state $x_{v}^{t}$ of a vertex $v$ of a two-sorted graph $\cG = (V, E, \lambda)$ after $t$ iterations of $A$ as follows. For $t = 0$, $x_v^0 := \pi(\lambda(v))$. Next, assume we have defined $x_u^t$ for each $u \in V$, and assume $w$ is the unique vertex in $\cG$ labelled $\BOS$. Then
\[
    x_v^{t+1} := \delta(x_v^t, \,\{\{\, x_u^t \mid u < w \,\}\}_{|k}, \,\{\{\, x_u^t \mid w \leq u < v \,\}\}_{|k}).
\]

The vertex updates stop in round $n$, i.e., after $n$ iterations of $\delta$.
The prefix of length $b$ of the state of a pointed two-sorted graph $(\cG,v)$ (with $v$ in the suffix) in round $n$ is called the \textbf{output of $A$ at $(\cG,v)$}. Intuitively, this truncation means that the output does not have to contain the auxiliary information used in internal computation; for example, if $b = 1$, then the output is binary: ``yes'' or ``no''.

\subsection{Equivalence}

Let $\cC_1$ be a class of transformers over $\Sigma$, a class of tuples of $\Sigma$-formulae of $\GPTLF$ or a class of $\CPG$-automata over $\Sigma$, and likewise for $\cC_2$. 
We say that objects $x \in \cC_1$ and $y \in \cC_2$ are \textbf{equivalent} if for each two-sorted graph $\cG$ and each vertex $v$ in the suffix of $\cG$, $x$ and $y$ give the same output.
When transformers are involved, \emph{the interpretation of the output of the transformer depends on the direction:} we interpret the output bit-wise when translating a transformer into a logic tuple or automaton (and we call this \textbf{bit-wise equivalence}), and feature-wise when translating in the opposite direction (and we call this \textbf{feature-wise equivalence}).
We say that $\cC_1$ and $\cC_2$ have the \textbf{same expressive power} if for each $x \in \cC_1$ there exists an equivalent $y \in \cC_2$ and vice versa.

\section{Characterizing encoder--decoder transformers}\label{section: characterizations}

In this section, we present our characterization theorem. It is comprised of three translations: from logic to transformers, from transformers to automata and from automata to logic.

The first direction we consider is from logic to transformers. Here, our proof technique utilises a phenomenon of floating-point arithmetic known as 
\textbf{underflow}: when the output of a floating-point operation is as close or closer to zero than to any other number in the float format, then the result rounds to zero. 
This may occur, for example, when a floating-point number is multiplied by another floating-point number very close to zero.
The following result from \cite{ahvonen2026} exemplifies this.

\begin{proposition}[\cite{ahvonen2026}]\label{proposition: underflow}
    Let $\mathcal{F}$ be a floating-point format, $f$ be the smallest positive float in $\cF$, and $k$ be some integer such that $\frac{k}{2}$ is accurately representable in $\cF$. Then, for all $F\in \cF$, $|F|\leq |\frac{k}{2}|$ if and only if $F\cdot (\frac{k}{2}\cdot f) = 0$.
\end{proposition}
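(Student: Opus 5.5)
The plan is a direct computation with the rounding rule: reduce ``$F\cdot(\frac{k}{2}\cdot f)=0$'' to a comparison of the exact real product with $f/2$, then read off the threshold on $|F|$. Before starting I flag a concern: carried out literally, the computation yields the threshold $|F|\le\frac1k$, not the stated $|F|\le|\frac{k}{2}|$. The plan therefore has to end by reconciling these two numbers, and I do not yet see how to do that.

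First, I will verify that the multiplier $c:=\frac{k}{2}\cdot f$ is computed exactly in $\cF$. Here $f$ is the smallest positive subnormal, and every float of absolute value at most $2^{1-b}$ is an integer multiple of $f$. Since $\frac{k}{2}$ is accurately representable, the product $\frac{k}{2}f$ is itself a representable number (a subnormal, or a normalised float with the minimal exponent offset), provided it does not overflow. So $c=\frac{k}{2}f$ exactly, and I treat $c$ as a fixed positive float. The degenerate cases $F\in\{\pm\infty,\mathrm{NaN}\}$ and $k=0$ are handled separately: $\pm\infty$ and $\mathrm{NaN}$ make both sides fail, and $c=0$ makes both sides trivially true.

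Second, I will characterise when a product underflows. By the arithmetic semantics fixed in the preliminaries, $F\cdot c$ equals the exact real $|F|c$ rounded to the nearest element of $\cF$, with the appropriate sign. The two elements of $\cF$ closest to $0$ in absolute value are $0$ and $f$. Hence the result is $0$ precisely when $|F|c\le f/2$, where the tie at exactly $f/2$ goes to $0$ under round-half-to-even, since $0$ has an even significand. The inequality $|F|c\le f/2$ is equivalent to $|F|\le \frac{f}{2c}$. Substituting $c=\frac{k}{2}f$ turns this into $|F|\le\frac1k$.

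Third, the main obstacle is matching this computed threshold with the stated one. Literally, $\frac1k$ equals $|\frac{k}{2}|$ only when $k^2=2$, which is impossible for an integer $k$. I would first check whether ``$\frac{k}{2}\cdot f$'' is meant as the float whose value is $f$ scaled by the reciprocal of $\frac{k}{2}$. If instead the multiplier is $f/k$, the same rounding argument gives the threshold $|F|\le\frac{f/2}{f/k}=\frac{k}{2}$, which is exactly the stated inequality. That is the version I could actually prove, but it is a modified claim, not the statement as worded. For the multiplier exactly as written, the steps above establish only the equivalence with $|F|\le\frac1k$. I therefore cannot complete a proof of the stated threshold $|\frac{k}{2}|$, and I expect no argument to do so without changing either the multiplier or the threshold. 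The remaining check in any version is that the tie case sits on the correct side: the rounding convention must send exactly $f/2$ to $0$, so that the inequality is non-strict, as in the statement.
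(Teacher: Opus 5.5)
Your rounding computation is right, and your suspicion is justified. The statement as printed is false, so you should not merely ``expect'' that no argument exists; give a counterexample. Take $k=2$: then $\frac{k}{2}=1$ and $\frac{k}{2}\cdot f=f$ exactly, and $F=1$ satisfies $|F|\le|\frac{k}{2}|$ while $F\cdot f=f\neq 0$. Your aside that $k=0$ makes both sides ``trivially true'' is also wrong: then the right-hand side holds for every finite $F$, whereas $|F|\le 0$ forces $F=0$.

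The paper gives no proof of its own---it imports the result from \cite{ahvonen2026}---but its use of the proposition fixes the intended reading. In the attention constructions, $F$ plays the role of a softmax weight $\approx\frac{1}{\ell}$, the value entry is $\frac{k}{2}\cdot f$, and underflow must occur exactly when $\ell\ge k$. That requires exactly your derived threshold $|F|\le\frac1k$, so the printed $|\frac{k}{2}|$ is a slip. Your alternative repair with multiplier $f/k$ is not viable: for $k\ge 2$ we have $0<f/k\le f/2<f$, so $f/k\notin\cF$. Computing it in $\cF$ already underflows to $0$ (at $k=2$, under round-half-to-even).

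There is also a genuine error in your first step, and the corrected statement needs it fixed. In this format every finite float has value $\frac{s}{2^{p-1}}2^{e-b}=s\,2^{e}f$, so every float is an integer multiple of $f$. Hence $\frac{k}{2}f$ is representable only for even $k$. For odd $k$ the float product $\frac{k}{2}\cdot f$ is a tie between $\frac{k-1}{2}f$ and $\frac{k+1}{2}f$, so the threshold moves to $\frac{1}{k-1}$ or $\frac{1}{k+1}$. For $k=1$, round-half-to-even gives $\frac12\cdot f=0$, so every finite $F$ ``underflows''.

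So representability of $\frac{k}{2}$ is the wrong hypothesis. You need $\frac{k}{2}\cdot f$ itself to be exact, which, given $\frac{k}{2}\in\cF$, holds iff $k$ is even. This is why the paper remarks that the construction ``only works for half of the possible values of $k$''. For even $k>0$, your second step then gives $F\cdot(\frac{k}{2}\cdot f)=0\iff|F|\le\frac1k$. The non-strict inequality relies on ties at $\frac f2$ rounding to $0$. The paper only specifies ``round to nearest'', so state that tie rule as an explicit assumption; it matters only when $\frac1k\in\cF$, i.e., when $k$ is a power of two.
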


This can be utilised in attention layers to count how many vertices satisfy a particular property \cite{ahvonen2026}. 

\begin{restatable}[Logic $\Rightarrow$ transformers]{theorem}{logictotransformers}\label{theorem: logic to transformers}
For each tuple of formulae of $\GPTLF$, we can construct a feature-wise equivalent encoder--decoder transformer without the final $\softmax$.
\end{restatable}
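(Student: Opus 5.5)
We argue by induction on formulae, maintaining an invariant on the feature vectors. Fix a tuple $(\varphi_1,\dots,\varphi_m)$ and let $\mathrm{SUB}$ be the finite set of all subformulae, ordered so that each formula comes after its proper subformulae. Its modal depth is bounded by the number of nested modalities. We build a transformer of dimension $d = |\mathrm{SUB}| + c$, where $c$ is a small number of auxiliary coordinates, with one coordinate $i_\psi$ reserved for each $\psi \in \mathrm{SUB}$.

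The invariant is: after the layers handling modal depth $\le \ell$, coordinate $i_\psi$ of the row for vertex $v$ equals $1$ if $\cG,v\models\psi$ and $0$ otherwise, for every $\psi$ of modal depth at most $\ell$. This must hold at \emph{prefix} vertices in the encoder rows and at \emph{suffix} vertices in the decoder rows. An auxiliary coordinate records whether the vertex lies in the prefix or in the suffix; this is determined by the token, since only the $\BOS$ vertex starts the suffix and the two embeddings are applied separately.

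\textbf{Base case.} The embedding sets the coordinates for $\bot$ and for the atoms $p$ directly from the token.

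\textbf{Boolean connectives.} These are handled by MLPs using $\ReLU$: $\neg\psi$ is $1-x$, and $\psi\wedge\chi$ is $\ReLU(x+y-1)$. A subformula is written into its own coordinate by adding the value through the residual connection, since that coordinate is initially $0$. All attention layers not in use are zeroed with $W^V=0$, so the residual connection preserves the row. To keep the decoder's copy of subformulae evaluated over the prefix, we mirror every encoder computation in the decoder via cross-attention, described next.

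\textbf{Counting modality $\langle G\rangle^{\text{pre}}_{\ge k}\psi$.} Its truth value is the same at every vertex, so it is computed in the decoder by a cross-attention head attending uniformly to the encoder output: set $W^Q=W^K=0$, so every score is $0$ and $\softmax$ gives weight $1/n$ to each of the $n$ prefix vertices. Take $W^V$ to extract $x_{i_\psi}$. The head then yields (a float rounding of) $c/n$, where $c$ counts the prefix vertices satisfying $\psi$. Uniform averaging loses $n$, so we use a second head that averages a constant-$1$ coordinate. This gives $1/n$ rounded, and $c$ can be recovered as the ratio in the MLP. Recovering $c$ this way is problematic for large $n$, because $1/n$ underflows.

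The better route follows \cite{ahvonen2026} and Proposition~\ref{proposition: underflow}. Let the scores be $\alpha$ for vertices with $\psi$ and $0$ otherwise, obtained by a query constant times a key equal to $x_{i_\psi}$, with $\alpha$ large. Then $\softmax$ assigns weight roughly $1/c$ to each satisfying vertex and underflowed $0$ to the others whenever $c\ge1$. Averaging a value constant equal to $1$ restricted to non-satisfying vertices gives $0$ exactly when some vertex satisfies $\psi$. Combining this with sums over the satisfying vertices, whose weights are $\approx 1/c$, lets the MLP compare $1/c$ against a threshold. Applying Proposition~\ref{proposition: underflow} — multiplying by $\tfrac{k}{2}f$ and testing for zero, after a suitable scaling — decides $c\ge k$. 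Here Proposition~\ref{proposition: floating-point saturation} guarantees that the float sums depend only on multiplicities up to a bound, so there are finitely many threshold cases and the MLP can realise the threshold with $\ReLU$ steps. Replicating this in the encoder is unnecessary, since prefix-evaluated formulae whose truth is global only need their truth values at suffix vertices. Any nested $\psi$ under $\langle G\rangle^{\text{pre}}$ is computed in the encoder layers by self-attention with the same gadget, since $\langle G\rangle^{\text{pre}}$ is global.

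\textbf{Past modality $\langle P\rangle^{\text{suf}}\psi$.} This uses strictly masked self-attention in the decoder. With score $\alpha x_{i_\psi}$ and value $x_{i_\psi}$, the output at $v$ is positive iff some earlier suffix vertex satisfies $\psi$. The result is at least $1/(\text{number of earlier vertices})$ if unweighted, but with a large $\alpha$ the weights concentrate on satisfying vertices, so the output is $\approx1$ exactly when one exists. The $\BOS$ row has no earlier vertices; by convention it gets $0$, which is correct. The MLP then thresholds at $1/2$ to produce an exact bit.

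\textbf{Layers, output, and the main obstacle.} The encoder has as many layers as the nesting depth of $\langle G\rangle^{\text{pre}}$, and the decoder one layer per depth level. The output head projects the coordinates $i_{\varphi_1},\dots,i_{\varphi_m}$, which gives a feature-wise equivalent transformer. The main obstacle is the exact floating-point bookkeeping in the counting gadget. It must produce exact $0/1$ values for every $n$, despite rounding in $\softmax$, in the approximated $\texttt{exp}$, and in the summation order. My plan is to choose scores so that the non-attended weights underflow to $0$ exactly, and then rely on Propositions~\ref{proposition: floating-point saturation} and~\ref{proposition: underflow} for the threshold test.
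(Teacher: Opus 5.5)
Your proposal follows essentially the same route as the paper's proof in Appendix~\ref{appendix: logic to transformers}: $\MLP$s for $\neg$ and $\land$; strictly masked self-attention followed by a nonzero-to-one normalisation for $\langle P\rangle^{\text{suf}}$; and, for $\langle G\rangle^{\text{pre}}_{\ge k}$, a cross-attention head with a large constant query against the $\psi$-column key, combined with the underflow trick of \cite{ahvonen2026}, with nested counting subformulae evaluated by encoder self-attention. One point should be stated precisely: the comparison of the weight $\approx 1/c$ with $1/k$ must happen inside the head, by letting $W^V$ map the $\psi$-column to $\tfrac{k}{2}f$ so that each individual product underflows, because the summed output over satisfying vertices is $\approx 1$ regardless of $c$ and the $\MLP$ never sees $1/c$; moreover, as the paper notes, this directly handles only half of the values of $k$, and the rest need the additional arithmetic of \cite{ahvonen2025expressive}.
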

\begin{proof}
    (Sketch) We calculate the truth values of all subformulae of the sequence one at a time. Boolean operators do not require communication between vertices and can thus be simulated by the $\MLP$s. For the modalities $\langle G \rangle_{\geq k}^{\text{pre}}$, we can either use the self-attention sub-layers in the encoder layers or the cross-attention sub-layers in the decoder layers. As first shown in \cite{ahvonen2026}, this is possible due to Proposition~\ref{proposition: underflow}; we can construct the attention head such that underflow is triggered when enough vertices in the graph satisfy a given subformula. For the modalities $\langle P \rangle^{\text{suf}}$, we construct a masked self-attention head that simulates the modality. The details are in Appendix~\ref{appendix: logic to transformers}.
\end{proof}

For the direction from transformers to automata, we essentially encode the sub-layers of the transformer into the transition function of the automaton.

\begin{restatable}[Transformers $\Rightarrow$ automata]{theorem}{transformerstoautomata}\label{theorem: transformers to automata}
For each floating-point encoder--decoder transformer without the final $\softmax$, we can construct an equivalent $\CPG$-automaton.
\end{restatable}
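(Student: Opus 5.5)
Fix a transformer $T$ over format $\cF$ with encoder layers $\EL_1,\dots,\EL_{L_1}$ and decoder layers $\DL_1,\dots,\DL_{L_2}$. The plan is to let each vertex of the two-sorted graph simulate its own row of the transformer computation, one sub-layer per automaton round. A state will be a bit string that encodes three things: a phase counter, a flag recording whether the vertex is in the prefix or the suffix, and the current feature vector in $\cF^d$ written bit-wise. There are finitely many such strings, so $Q$ is finite. The initializing function $\pi$ sends a token to the state encoding $\mathrm{em}$ of that token, together with phase $0$ and the correct sort flag. The $\BOS$ token is the only suffix-marked token in $\Sigma\cup\{\BOS\}$. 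For ordinary tokens, the sort is determined by position rather than by the token. I would therefore spend round $1$ letting each vertex check whether the suffix-multiset it receives is empty, using the strictly-prior convention. Suffix vertices other than $\BOS$ receive a nonempty suffix multiset, because it contains $\BOS$. Prefix vertices receive an empty one, since their set $\{u : w \le u < v\}$ is empty; $\BOS$ receives an empty one as well, but it recognizes itself from its label, which is kept in its state. The sort and the label are stored, and every vertex advances its phase in lockstep.

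The core step is to show that each attention sub-layer, evaluated at a vertex $v$, depends only on three inputs. These are $v$'s own feature vector, the $k$-capped multiset of feature vectors of the prefix, and the $k$-capped multiset of feature vectors of the suffix vertices strictly before $v$. Here $k$ is the constant from Proposition~\ref{proposition: floating-point saturation}.

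1. Start with the unmasked self-attention used in the encoder, at a prefix vertex $v$. Its output is $\softmax$ of the scores $s_u = (x_vW^Q)(x_uW^K)^\top/\sqrt{d_k}$, weighted by the values $x_uW^V$, and each summand is a function of the pair $(x_v,x_u)$.
2. The maximum $b$ in the $\softmax$ depends only on the set of feature vectors present.
3. The denominator is $\SUM_\cF$ of the multiset $\{\{e^{s_u-b}\}\}$. This multiset is the image of the multiset $\{\{x_u\}\}$ under a function that depends on $x_v$ and $b$.
4. The numerator in each coordinate is again $\SUM_\cF$ of an image multiset. Here I interpret the computation as summing the products (attention weight)$\cdot$(value) in increasing order, in line with the paper's convention for all float sums.
5. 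By Proposition~\ref{proposition: floating-point saturation}, each of these sums depends only on the image multiset capped at $k$.
6. Capping commutes with taking images in the needed direction: $(f(M))_{|k}$ is determined by $M_{|k}$.

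Hence knowing $\{\{x_u\}\}_{|k}$ suffices. The masked self-attention in the decoder is handled identically using the suffix multiset over $w \le u < v$, which matches the strict mask exactly. For $\BOS$ this multiset is empty and yields the zero row, per the stated $\softmax$ convention. The cross-attention is handled identically using the prefix multiset. The $\MLP$s, residual additions and the output head are vertex-local, so they are directly functions of the current state.

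The schedule then runs as follows.

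1. In encoder rounds, $\delta$ updates prefix vertices by the $\MHSA$ sub-layer and then the $\MLP$ sub-layer, over $2L_1$ rounds, while suffix vertices idle. Heads are handled jointly, since all of them are functions of the same multiset data.
2. The suffix vertices then need the *final* encoder outputs during every decoder layer. Prefix vertices therefore freeze their states after round $2L_1$, still carrying the final encoder vectors.
3. Decoder layers then proceed over three rounds each: masked self-attention, then cross-attention, then $\MLP$. All of these are computed from the suffix multiset and from the frozen prefix multiset.

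Because the phase counter is in every state and is synchronized, $\delta$ knows which sub-layer to apply and can extract the relevant feature vectors from the multisets, ignoring the bookkeeping bits. A final round applies $\text{Out}$ and places its bit-wise encoding at the front of the state. The output round is then $n = 1 + 2L_1 + 3L_2 + 1$ and the output length is $b = d_{\text{out}}(p+q+1)$.

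The main obstacle is the saturation argument in the core step. It has to be justified that the floating-point attention, including the fixed summation order, the implementation of $\texttt{exp}$, and the two normalization steps, is truly a function of capped multisets. Two details need care. First, phase and bookkeeping bits differ among vertices, so we must project the multisets onto feature vectors before capping, using $k$-capped counts of the projected vectors. These are determined from capped counts of full states, because each projected count equals $\min(k,\text{sum})$. Second, some values may be NaN or $\infty$, which the $\SUM_\cF$ bound is assumed to cover.
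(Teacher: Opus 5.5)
Your proposal is correct and follows essentially the same route as the paper: one automaton round per attention sub-layer (including its residual) and per $\MLP$, with the first multiset serving encoder self-attention and cross-attention, the second multiset of strictly prior suffix vertices serving masked self-attention, and Proposition~\ref{proposition: floating-point saturation} justifying that $k$-capped multisets suffice. Several of your additions make precise what the paper's dependency-chasing argument leaves informal: the observation that $(f(M))_{|k}$ is determined by $M_{|k}$, the explicit treatment of the $\BOS$ vertex (which sees an empty second multiset just as prefix vertices do) via its stored label, and the correct output length $d_{\text{out}}(p+q+1)$.
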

\begin{proof}
    (Sketch) We show that each self-attention sub-layer (with or without masking), each cross-attention sub-layer and each $\MLP$ can be simulated by a single $\CPG$-automaton that uses a single transition per attention sub-layer or $\MLP$. For $\MLP$s, we can simply map the state of a vertex to its output state after the $\MLP$. For cross-attention layers and self-attention layers without masking, we calculate the whole attention sub-layer in a single transition using the first multiset in the transition function, and for masked self-attention sub-layers, we do the same with the second multiset. Due to Proposition~\ref{proposition: floating-point saturation}, the sums in the calculation of the output are bounded. By careful analysis, this means that it suffices for the automaton to only receive bounded multisets up to the saturation threshold of the floating-point format. The details are in Appendix~\ref{appendix: transformers to automata}.
\end{proof}

Finally, from automata to logic, we utilise \emph{types}, which are formulae that encode all the information of a vertex that can be expressed by a formula with up to a given number of nested modalities.

\begin{restatable}[Automata $\Rightarrow$ logic]{theorem}{automatatologic}\label{theorem: automata to logic}
For each $\CPG$-automaton, we can construct an equivalent tuple of formulae of $\GPTLF$.
\end{restatable}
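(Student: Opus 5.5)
The approach is to define, by induction on the round $t \le n$, for every state $q \in Q$ a $\GPTLF$-formula $\varphi^t_q$ such that for every two-sorted graph $\cG$ and every vertex $v$ of $\cG$ we have $\cG, v \models \varphi^t_q$ if and only if $x^t_v = q$. These play the role of the types. The output tuple will be $(\psi_1,\dots,\psi_b)$ with $\psi_i := \bigvee\{\varphi^n_q \mid q \in Q,\ q_i = 1\}$ (or $\bot$ if the disjunction is empty). Because $Q$ is finite, all these disjunctions are finite.

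Two points need care. First, the formulae are evaluated at vertices on both sides of $\BOS$. In particular $\varphi^t_q$ must be correct at prefix vertices, since the global modality quantifies over them. So the truth condition above is required for all $v \in V(\cG)$, not only for suffix vertices. Second, the automaton distinguishes prefix from suffix only through the position of $\BOS$. So I first define a formula $\mathrm{suf}$ true exactly at suffix vertices: $\mathrm{suf} := \BOS \lor \langle P\rangle^{\text{suf}}\top$. This works because a vertex $u$ satisfies $\langle P\rangle^{\text{suf}}\top$ iff some suffix vertex lies strictly before it, which for $u$ in the suffix other than $\BOS$ holds via $\BOS$, and fails for prefix vertices.

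The base case is $\varphi^0_q := \bigvee\{p \in \Sigma \mid \pi(p) = q\}$, together with the token $\BOS$ if $\pi(\BOS)=q$ (I assume $\pi$ is defined on $\BOS$ as well). For the inductive step, for each triple $(q', M_1, M_2)$ with $\delta(q', M_1, M_2) = q$, I express three conditions:
\begin{itemize}
\item the current state: $\varphi^t_{q'}$;
\item the prefix multiset equals $M_1$: the conjunction over $r \in Q$ of $\langle G\rangle^{\text{pre}}_{\ge M_1(r)}\varphi^t_r$, plus $\neg\langle G\rangle^{\text{pre}}_{\ge M_1(r)+1}\varphi^t_r$ whenever $M_1(r) < k$;
\item the suffix multiset equals $M_2$.
\end{itemize}
For the prefix condition, $\langle G\rangle^{\text{pre}}$ ranges exactly over $V(\cG_p) = \{u \mid u < w\}$, which is precisely the first multiset; cap $k$ is handled by omitting the upper bound.

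The main obstacle is the suffix condition, since $\GPTLF$ has only an existential past modality and no counting past modality. The second multiset counts states among suffix vertices $w \le u < v$. My plan is to exploit that the count is capped at $k$ and to express ``at least $j$ earlier suffix vertices in state $r$'' by nesting: define $\chi^r_1 := \langle P\rangle^{\text{suf}}\varphi^t_r$ and $\chi^r_{j+1} := \langle P\rangle^{\text{suf}}(\varphi^t_r \land \chi^r_j)$. Strictness of $<$ makes the witnesses distinct positions, and the converse holds by picking the $j$ occurrences in order. Since $\langle P\rangle^{\text{suf}}$ already restricts witnesses to the suffix, this correctly ranges over $w \le u < v$.

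Then $\varphi^t_{q'}$, the prefix condition, and the suffix condition (the analogous exact-count conjunction using the $\chi^r_j$) are conjoined. The formula $\varphi^{t+1}_q$ is the finite disjunction of these conjunctions over all triples $(q', M_1, M_2) \in \delta^{-1}(q)$.

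For prefix vertices $v$ the past modality is vacuously false, so the suffix condition reads as $M_2 = \emptyset$. I must check this matches the automaton's definition, where for $v < w$ the set $\{u \mid w \le u < v\}$ is empty; it does. Correctness then follows by a routine induction on $t$, with each $x^t_v$ satisfying exactly one $\varphi^t_q$.
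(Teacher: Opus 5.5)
Your proof is correct, but it takes a different route from the paper. You build automaton-specific formulae $\varphi^t_q$ by induction on $t$, reading them directly off $\delta^{-1}(q)$ and expressing the two $k$-capped multisets by exact-count conjunctions.

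The paper instead goes through $\GPTLF_{\geq}$-types $\tau^{(\cG,i)}_{k,d}$. These record the label together with the $k$-capped counts of depth-$(d-1)$ types, among prefix vertices and among earlier suffix vertices. Using type automata, the paper proves that two pointed graphs share a type of depth $d$ iff every $\CPG$-automaton gives them the same states up to round $d$. It then sets $\varphi_q := \bigvee_{\tau\in\Phi_q}\tau$, where $\Phi_q$ collects the types of pointed graphs that reach $q$ in round $d$.

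What each route buys:
\begin{itemize}
\item The paper's detour yields automaton-independent normal forms: every formula is a disjunction of types, and types are exactly what width-$k$ automata can distinguish.
\item Your route needs none of those lemmas and is constructive in the literal sense, since $\varphi^{t+1}_q$ is computed syntactically from $\delta$. By contrast, $\Phi_q$ is defined by quantifying over all pointed graphs.
\end{itemize}

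Your counting formula $\chi^r_{j+1} := \langle P\rangle^{\text{suf}}(\varphi^t_r \land \chi^r_j)$, with the conjunct $\varphi^t_r$ inside, is exactly right. The paper's abbreviation $\langle P\rangle^{\text{suf}}_{\geq k+1}\varphi := \langle P\rangle^{\text{suf}}\langle P\rangle^{\text{suf}}_{\geq k}\varphi$ omits that conjunct. As written, it only says that some $\varphi$-vertex lies at least $k+1$ suffix positions back, not that there are $k+1$ distinct $\varphi$-vertices. So your nesting is what actually justifies translating counting past modalities back into $\GPTLF$.

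Minor details:
\begin{itemize}
\item Set $\chi^r_0 := \top$ for the case $M_2(r)=0$.
\item If $\BOS$ is not an atom of the logic, use $\bigwedge_{p\in\Sigma}\neg p$ in its place.
\item Your assumption that $\pi$ is defined on $\BOS$ is implicitly needed in the paper as well.
\item The formula $\mathrm{suf}$ is never used. Your last paragraph already shows that the past modality is vacuously false at prefix vertices, which matches the empty second multiset there.
\end{itemize}
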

\begin{proof}
    (Sketch) Let $A = (Q, \pi, \delta, d, b)$ be a $\CPG$-automaton.
    For each state $q \in Q$ we construct a formula $\varphi_q^d$ such that $\varphi_q^d$ is true in a pointed two-sorted graph $(\cG,i)$ if and only if the state given by $A$ to $(\cG,i)$ in round $d$ is $q$. 
    For each pointed two-sorted graph $(\cG,i)$ that gets the state $q$ in round $d$, we can construct a logic type $\tau$ that specifies $i$ to the extent that every vertex where $\tau$ is true also gets the same state as $i$ in the first $d$ rounds of any $\CPG$-automaton.
    Thus $\varphi_q^d$ is of the form $\bigvee_{\tau \in \Phi_q}\tau$, where $\Phi_q$ is the set such logic types $\tau$. 
    Thus $\cG,i\models\varphi_q^d$ if and only if the state given by $A$ to $(\cG,i)$ in round $d$ is~$q$. Then it is easy to construct formulae $\psi_n$ that deconstruct $q$ into bits by taking the disjunction of all $\varphi_q^d$ where the $n$th bit is $1$. The details are in Appendix~\ref{appendix: automata to logic}.
\end{proof}

Combining Theorems~\ref{theorem: logic to transformers}, \ref{theorem: transformers to automata} and \ref{theorem: automata to logic}, we obtain our first characterization theorem.

\begin{theorem}\label{theorem:main}
Encoder--decoder transformers without the final $\softmax$, the logic $\GPTLF$ and $\CPG$-automata have the same expressive power.
\end{theorem}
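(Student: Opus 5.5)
The plan is to chain the three translations already established, Theorems~\ref{theorem: logic to transformers}, \ref{theorem: transformers to automata} and \ref{theorem: automata to logic}, into a cycle logic $\Rightarrow$ transformers $\Rightarrow$ automata $\Rightarrow$ logic. For each of the three pairs of classes, both directions then follow by composition. For example, a transformer yields an automaton directly and a logic tuple via Theorem~\ref{theorem: automata to logic}. A logic tuple yields a transformer directly and an automaton by passing through the transformer.

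The main obstacle is that equivalence is not transitive in a naive sense. When a transformer is involved, its output is read bit-wise in one direction and feature-wise in the other. I would therefore check each composed direction against the exact equivalence notion in the definition.

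For transformer $\Rightarrow$ automaton, Theorem~\ref{theorem: transformers to automata} gives bit-wise equivalence, which is the required notion. Composing with Theorem~\ref{theorem: automata to logic} gives a logic tuple whose output at every suffix vertex equals the automaton's output bit string. That string is the transformer's bit-wise output, so the tuple is bit-wise equivalent to the transformer.

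For logic $\Rightarrow$ automaton, I would go through Theorem~\ref{theorem: logic to transformers} to get a transformer $T$ whose output features all lie in $\{0,1\}$ and encode the formula values. Applying Theorem~\ref{theorem: transformers to automata} to $T$ gives an automaton that reproduces the full bit-wise encoding of $T$'s output, which is longer than the logic output of length $m$. To fix this, I would post-process the automaton. Each output float is either $0$ or $1$, so one fixed bit position in its encoding distinguishes the two values; for instance, the exponent bits of $0$ and $1$ differ at some position $j$, with the same $j$ for every feature. I would permute the state bits so that these $m$ distinguishing bits come first, flipping them where needed so that the value $1$ corresponds to bit $1$, and set the output length to $m$. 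This only relabels states, so it is routine. Alternatively, I would redo the construction of Theorem~\ref{theorem: automata to logic} in reverse, but the relabelling suffices.

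For automaton $\Rightarrow$ transformer, I would go automaton $\to$ logic by Theorem~\ref{theorem: automata to logic} and then logic $\to$ transformer by Theorem~\ref{theorem: logic to transformers}. This yields a transformer that is feature-wise equivalent, as the definition requires for this direction.

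The remaining directions are logic $\Rightarrow$ transformer and transformer $\Rightarrow$ automaton, which are given directly by Theorems~\ref{theorem: logic to transformers} and \ref{theorem: transformers to automata}. Automaton $\Rightarrow$ logic is given directly by Theorem~\ref{theorem: automata to logic}. Together these give the theorem.
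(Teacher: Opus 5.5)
Your proof is correct and follows the paper's: it also closes the cycle logic $\Rightarrow$ transformers $\Rightarrow$ automata $\Rightarrow$ logic and handles logic $\Rightarrow$ automata by passing through Theorem~\ref{theorem: logic to transformers}. The only difference is that the paper adds one extra round whose transition maps the encodings of the floats $0$ and $1$ to single bits, whereas you use a bijective relabelling of states and need no extra round; note that for $q=1$ the exponent fields of $0$ and $1$ coincide, but the leading significand bit $b_{q+1}$ always separates $1$ from both $0$ and $-0$, so your argument goes through unchanged.
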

\begin{proof}
    Theorems~\ref{theorem: logic to transformers}, \ref{theorem: transformers to automata} and \ref{theorem: automata to logic} provide three of the translations. Translations from encoder--decoder transformers to $\GPTLF$ and from $\CPG$-automata to transformers are obtained by applying the translations back-to-back. From $\GPTLF$ to $\CPG$-automata, we first translate a tuple of $\GPTLF$ into a transformer using Theorem~\ref{theorem: logic to transformers}. Then we modify the construction from the proof of Theorem~\ref{theorem: transformers to automata} by increasing the output round by one and having the transition function essentially map the substrings representing the floating-point numbers $0$ and $1$ to the bits $0$ and $1$.
\end{proof}

\section{Autoregressive generation}\label{section:inference-loop}

As described in Section \ref{subsection:encoder-decoder-transformers}, 
when we wish to recursively generate an output string of tokens with the transformer, 
the output head has output dimension $|\Sigma\cup\{\EOS\}|$, after which 
a final softmax layer is applied to generate a probability distribution over the output vocabulary at each position.

In practice, some command $\mathrm{select}$ is applied to the feature vector $\bp \in \cF^{|\Sigma\cup\{\mathrm{EOS}\}|}$ of the last vertex to determine the next token to be added. We do not specify how $\mathrm{select}$ works or if it is deterministic or not, only requiring that once it is defined for encoder--decoder transformers over the vocabulary $\Sigma$ and floating-point format $\cF(p,q)$, it is defined in the same way for tuples of formulae of $\GPTLF$ and $\CPG$-automata, provided that they also operate over $\Sigma$ and output a string of length $|\Sigma \cup \EOS|(p+q+1)$ which can then be interpreted as a floating-point string by partitioning it into substrings of length $p+q+1$. The word is then updated as specified next.

\paragraph{Autoregressive generation.}
Let $T$ be an encoder--decoder transformer with the output dimension $d_\text{out}=|\Sigma\cup\{\EOS\}|$, and let 
$\mathcal{G}$ be a two-sorted graph over $\Sigma$ with prefix $\mathcal{G}_p$ and initial 
suffix $\mathcal{G}_s^{(0)}$ (consisting of just one vertex labelled $\BOS$). The \textbf{autoregressive generation loop} 
generates a sequence of tokens $y_1, y_2, \ldots$ as follows. At step 
$t \geq 1$: 
\begin{enumerate}
    \item Compute 
    $H^{(t)} = T(\mathcal{G}_p,\mathcal{G}^{(t-1)}_s)$.
    \item Compute the output 
    distribution $\bp^{(t)} = \softmax(H^{(t)})_{n_t}$ in the last position $n_t = |\mathcal{G}_s^{(t-1)}|$.
    \item Set $y_t = \mathrm{select}(\bp^{(t)})$.
    \item Append $y_t$ to form $\mathcal{G}_s^{(t)}$, extending the 
    suffix $\mathcal{G}_s^{(t-1)}$ with a new vertex labelled $y_t$.
\end{enumerate}
The loop terminates when $y_t = \mathrm{EOS}$ or when some maximum number 
of steps is reached. 
The \textbf{output sequence} is 
$y_1, \ldots, y_t$.

When translating a tuple of formulae or an automaton into a transformer that does the additional $\softmax$ step, it is naturally not always possible to get an exact match between outputs, because a formula tuple is not guaranteed to give an output that is also an output of the $\softmax$ function. To obtain a characterisation at this level, we establish equivalence with respect to a notion of similarity. 

With the output dimension $d_{\text{out}}$ fixed, we have a fixed number of input vectors in $\cF^{d_{\text{out}}}$ and likewise a fixed number of outputs of $\softmax$. We define our equivalence notion w.r.t. a similarity relation $\sim\, \subseteq \cF^{d_{\text{out}}} \times \cF^{d_{\text{out}}}$. The relation $\sim$ could be any binary relation over $\cF^d$, for example an equivalence relation.
Let $a$ and $b$ independently be an encoder--decoder transformer with the final $\softmax$, a tuple of formulae of $\GPTLF$ or a $\CPG$-automaton. We say that \textbf{$a$ is equivalent to $b$ w.r.t. $\sim$} if for each two-sorted graph $\cG$ and each vertex $v$ in the suffix of $\cG$, the following holds: if the output of $a$ at $(\cG,v)$ is $\bv$ and there exists an output $\bu$ in the range of $b$ such that $\bv \sim \bu$, then $b$ outputs such an output $\bu$ at $(\cG,v)$.
We say that two classes $A$ and $B$ of objects \textbf{have the same expressive power w.r.t. $\sim$} if for each $a\in A$ there exists some $b \in B$ that is equivalent to $a$ w.r.t. $\sim$ and vice versa.

\begin{theorem}\label{theorem: softmax characterization}
    For each similarity relation $\sim$, encoder--decoder transformers with the final $\softmax$, the logic $\GPTLF$ and $\CPG$-automata have the same expressive power w.r.t. $\sim$.
\end{theorem}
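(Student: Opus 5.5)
The plan is to reuse the three translations behind Theorem~\ref{theorem:main} and handle the final $\softmax$ separately in each direction. The key observation is that, for a fixed format $\cF$ and fixed $d_{\text{out}}$, $\softmax\colon \cF^{d_{\text{out}}}\to\cF^{d_{\text{out}}}$ is a function with finite domain and finite range $R\subseteq\cF^{d_{\text{out}}}$. Logic tuples and automata can therefore absorb it exactly. Transformers can only produce vectors in $R$, and this is precisely why the weaker notion ``equivalent w.r.t.\ $\sim$'' is used.

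\emph{Transformers with $\softmax$ $\Rightarrow$ automata and logic.} Given $T$ with the final $\softmax$, I first apply Theorem~\ref{theorem: transformers to automata} to $T$ without the $\softmax$. This yields a $\CPG$-automaton whose state at the output round contains the bit-wise encoding of the pre-softmax vector. I then add one extra round in which the transition function applies $\softmax$ to that encoding and keeps the result as the output prefix. This works because $\delta$ may be an arbitrary function on the finite set $Q$. The resulting automaton is bit-wise equivalent to $T$, and exact equivalence trivially implies equivalence w.r.t.\ any $\sim$: if $T$ outputs $\bv$ and some $\bu$ in the automaton's range satisfies $\bv\sim\bu$, the automaton may output $\bu=\bv$ only if $\bv\sim\bv$. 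That need not hold for an arbitrary relation. So I must check the definition more carefully: for $a=T$ and $b=$ automaton, the condition requires that when $\bv\sim\bu$ for some $\bu$ in the range of $b$, $b$ outputs \emph{some} $\bu'$ with $\bv\sim\bu'$. Since the ranges coincide, I will instead let $b$ output at $(\cG,v)$ a fixed choice $c(\bv)\in\{\bu\in\mathrm{range}(b)\mid \bv\sim\bu\}$ when this set is nonempty, and $\bv$ otherwise. This choice map is again a finite function, so it can be composed into the last transition. Theorem~\ref{theorem: automata to logic} then gives the logic tuple, and composing a finite function with a formula tuple is done by the disjunction-of-types trick from that proof. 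Conversely, for translations into automata or logic, the same post-composition with a finite choice function $c$ handles arbitrary $\sim$, so those directions hold in full generality.

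\emph{Logic/automata $\Rightarrow$ transformers with $\softmax$.} This is the main obstacle, since the transformer's output must lie in $R$. Given a tuple $\vec\varphi$ (or an automaton, translated to logic by Theorem~\ref{theorem: automata to logic}) whose output at $(\cG,v)$ is a bit string $\bv$ read as a vector in $\cF^{d_{\text{out}}}$, I define a finite map $g$ from possible outputs $\bv$ to $R$. If some $\bu\in R$ satisfies $\bv\sim\bu$, then $g(\bv)$ is such a $\bu$; otherwise $g(\bv)$ is arbitrary in $R$. For each $\bu\in R$, I fix a preimage $\bx_{\bu}\in\cF^{d_{\text{out}}}$ with $\softmax(\bx_{\bu})=\bu$. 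I then build, purely in the logic, formulae $\chi_{j,i}$ expressing ``bit $i$ of $\bx_{g(\vec\varphi)}$ at coordinate $j$ is $1$'' as Boolean combinations of the $\varphi_l$. Theorem~\ref{theorem: logic to transformers} yields a transformer producing these as $0/1$ features.

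It remains to turn feature-wise $0/1$ outputs into the exact float vector $\bx_{g(\cdot)}$ before the output head. Since only finitely many vectors arise, I will use an additional $\MLP$ layer, with residual terms cancelled via weights, that maps each one-hot indicator of the chosen $\bu$ to $\bx_{\bu}$. Concretely, the logic computes indicator formulae $\theta_{\bu}$ for ``$g(\text{output})=\bu$''. The last decoder $\MLP$ outputs $\sum_{\bu}\theta_{\bu}\,\bx_{\bu}$, which is exact because exactly one indicator is $1$ and the other terms are exact zeros. The output head is then the identity. The delicate point is ensuring exactness in floating point, which holds because all the products are $0$ or $1$ times a float and the sums contain one nonzero term. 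Finally, $\softmax$ yields $g(\bv)$, which witnesses equivalence w.r.t.\ $\sim$.
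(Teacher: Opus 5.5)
Your proposal is correct and follows essentially the same route as the paper. In one direction you post-compose the translations of Theorems~\ref{theorem: transformers to automata} and~\ref{theorem: automata to logic} with $\softmax$ and a finite choice of a $\sim$-similar output; pick that output among all bit strings of the output length, as the paper does, rather than from the circular ``$\mathrm{range}(b)$''. In the other direction you compute by Boolean completeness which $\softmax$-output to target, and let $\MLP$s plus the output head act as a look-up table that emits a fixed pre-image. The only cosmetic difference is that you do the similarity look-up in the logic via the indicators $\theta_{\bu}$ rather than in extra $\MLP$s.

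One point, which the paper's sketch also leaves implicit, needs tightening. The sum $\sum_{\bu}\theta_{\bu}\bx_{\bu}$ and the identity output head are exact only if every chosen pre-image $\bx_{\bu}$ is finite, because $0\cdot\infty$ and $0\cdot\mathrm{NaN}$ give NaN. Some $\softmax$-outputs, such as $\softmax(-\infty,\dots,-\infty)=(0,\dots,0)$, need not have a finite pre-image. You can fix this in either of two ways:
\begin{itemize}
\item Restrict $g$ to outputs with finite pre-images. This is enough if ``range of $b$'' means the actual range of the constructed transformer.
\item Produce the remaining outputs conditionally via overflow. For example, with $F$ the largest finite float, $\ReLU(\theta F)\cdot(-F)$ overflows to $-\infty$ exactly when $\theta=1$, and an output-head row of positive weights spreads this $-\infty$ to every coordinate.
\end{itemize}
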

\begin{proof}
    The translation from transformers to $\CPG$-automata extends the translation in the proof of Theorem~\ref{theorem: transformers to automata} by mapping each bit string $\bb$ to a bit string $\bb'$ that is similar (w.r.t. $\sim$) to the output of $\softmax$ with the input $\bb$ (interpreted as a sequence of floats). The translation from $\CPG$-automata to tuples of $\GPTLF$ is extended similarly; we build upon the constructed formulae to map the bit string~$\bb$ given by their truth values to a bit string similar (w.r.t. $\sim$) to 
    $\bb$; this is possible due to the Boolean completeness of propositional logic, meaning that Boolean operators can define all functions from bit strings to bit strings.
    
    From tuples of $\GPTLF$ to transformers, we first copy the construction for the formulae except that the output of the formula tuple is already simulated in an $\MLP$ of a decoder layer rather than by the linear transformation after the final decoder layer. Then, using further $\MLP$s and the final linear transformation, we essentially construct a look-up table that checks whether the output of the formula tuple is similar to a possible output of the $\softmax$ function; if so, the final linear transformation then outputs one of the pre-images of the corresponding $\softmax$ outputs, after which the final $\softmax$ naturally gives an output that is similar to the output of the formula tuple.
\end{proof}

\section{Conclusion}\label{section: conclusion}

We have given a novel logical characterization of encoder--decoder 
transformers and cross-attention. More specifically, we showed that floating-point 
encoder--decoder transformers with soft attention are expressively 
equivalent to a modal logic featuring a counting 
global diamond 
over the encoder input 
and a strict past diamond 
over the decoder 
input.
The global diamond captures the counting behaviour of both unmasked (encoder) self-attention and cross-attention, while 
the past diamond captures the causal structure of masked (decoder) self-attention.
We also showed that these transformers have the same expressive power as a class of distributed automata where messages are broadcast everywhere from the encoder input and forward from the decoder input.

\paragraph{Limitations.} Our work is purely theoretical, and no experiments were performed. In our transformer architecture, we only considered the case without positional encodings; it would be interesting to also characterize both the sinusoidal positional encodings of \cite{vaswani2017attention} and more modern choices like rotary positional embeddings (RoPE) \cite{su2024roformer}. 
While we allow different transformers to operate over different floating-point formats, there are also other ways to take floats into account.

\section*{Acknowledgements}

Veeti Ahvonen was supported by the Vilho, Yrjö and Kalle Väisälä Foundation of the Finnish Academy of Science and Letters. Damian Heiman was supported by the Magnus Ehrnrooth Foundation. Antti Kuusisto, Miguel Moreno and Matias Selin were supported by the project \emph{Perspectives on computational logic}, funded by the Research Council of Finland, project number 369424.

\bibliography{literature}

@inproceedings{vaswani2017attention,
 author = {Vaswani, Ashish and Shazeer, Noam and Parmar, Niki and Uszkoreit, Jakob and Jones, Llion and Gomez, Aidan N and Kaiser, \L ukasz and Polosukhin, Illia},
 booktitle = {Advances in Neural Information Processing Systems},
 editor = {I. Guyon and U. Von Luxburg and S. Bengio and H. Wallach and R. Fergus and S. Vishwanathan and R. Garnett},
 pages = {},
 publisher = {Curran Associates, Inc.},
 title = {Attention {I}s {A}ll {Y}ou {N}eed},
 url = {https://proceedings.neurips.cc/paper_files/paper/2017/file/3f5ee243547dee91fbd053c1c4a845aa-Paper.pdf},
 volume = {30},
 year = {2017}
}

@misc{ahvonen2025expressive,
      title={Expressive {P}ower of {G}raph {T}ransformers via {L}ogic}, 
      author={Veeti Ahvonen and Maurice Funk and Damian Heiman and Antti Kuusisto and Carsten Lutz},
      year={2026},
      eprint={2508.01067},
      archivePrefix={arXiv},
      primaryClass={cs.LO},
      url={https://arxiv.org/abs/2508.01067}, 
}

@article{ahvonen2026, 
  title={Expressive {P}ower of {G}raph {T}ransformers via {L}ogic}, 
  volume={40}, url={https://ojs.aaai.org/index.php/AAAI/article/view/39036}, 
  DOI={10.1609/aaai.v40i24.39036}, 
  abstractNote={Transformers are the basis of modern large language models, but relatively little is known about their precise expressive power on graphs. We study the expressive power of graph transformers (GTs) by Dwivedi and Bresson (2020) and GPS-networks by Rampásek et al. (2022), both under soft-attention and average hard-attention. Our study covers two scenarios: the theoretical setting with real numbers and the more practical case with floats. With reals, we show that in restriction to vertex properties definable in first-order logic (FO), GPS-networks have the same expressive power as graded modal logic (GML) with the global modality. With floats, GPS-networks turn out to be equally expressive as GML with the counting global modality. The latter result is absolute, not restricting to properties definable in a background logic. We also obtain similar characterizations for GTs in terms of propositional logic with the global modality (for reals) and the counting global modality (for floats).}, 
  number={24}, 
  journal={Proceedings of the AAAI Conference on Artificial Intelligence}, 
  author={Ahvonen, Veeti and Funk, Maurice and Heiman, Damian and Kuusisto, Antti and Lutz, Carsten}, 
  year={2026}, 
  month={Mar.}, 
  pages={19569-19579} 
}

@inproceedings{licharacterizing,
 author = {Li, Jiaoda and Cotterell, Ryan},
 booktitle = {Advances in Neural Information Processing Systems},
 editor = {D. Belgrave and C. Zhang and H. Lin and R. Pascanu and P. Koniusz and M. Ghassemi and N. Chen},
 pages = {159510--159554},
 publisher = {Curran Associates, Inc.},
 title = {Characterizing the {E}xpressivity of {F}ixed-Precision {T}ransformer {L}anguage {M}odels},
 url = {https://proceedings.neurips.cc/paper_files/paper/2025/file/e9e250537b0345111d50a5f8f392cffc-Paper-Conference.pdf},
 volume = {38},
 year = {2025}
}

@article{perez2021attention,
  author  = {P{\'e}rez, Jorge and Barcel{\'o}, Pablo and Marinkovic, Javier},
  title   = {Attention is {T}uring {C}omplete},
  journal = {Journal of Machine Learning Research},
  year    = {2021},
  volume  = {22},
  number  = {75},
  pages   = {1--35},
  url     = {http://jmlr.org/papers/v22/20-302.html}
}

@inproceedings{barcelo2024,
 author = {Barcelo, Pablo and Kozachinskiy, Alexander and Lin, Anthony W. and Podolskii, Vladimir},
 booktitle = {International Conference on Learning Representations},
 editor = {B. Kim and Y. Yue and S. Chaudhuri and K. Fragkiadaki and M. Khan and Y. Sun},
 pages = {22077--22087},
 title = {Logical {L}anguages {A}ccepted by {T}ransformer {E}ncoders with {H}ard {A}ttention},
 url = {https://proceedings.iclr.cc/paper_files/paper/2024/file/5f0fdc1acd47431f7f3bb8ee85598cef-Paper-Conference.pdf},
 volume = {2024},
 year = {2024}
}

@inproceedings{merrill2023,
 author = {Merrill, William and Sabharwal, Ashish},
 booktitle = {Advances in Neural Information Processing Systems},
 editor = {A. Oh and T. Naumann and A. Globerson and K. Saenko and M. Hardt and S. Levine},
 pages = {52453--52463},
 publisher = {Curran Associates, Inc.},
 title = {A {L}ogic for {E}xpressing {L}og-{P}recision {T}ransformers},
 url = {https://proceedings.neurips.cc/paper_files/paper/2023/file/a48e5877c7bf86a513950ab23b360498-Paper-Conference.pdf},
 volume = {36},
 year = {2023}
}

@article{Strobl2024,
   title={What {F}ormal {L}anguages {C}an {T}ransformers {E}xpress? {A} {S}urvey},
   volume={12},
   ISSN={2307-387X},
   url={http://dx.doi.org/10.1162/tacl_a_00663},
   DOI={10.1162/tacl_a_00663},
   journal={Transactions of the Association for Computational Linguistics},
   publisher={MIT Press},
   author={Strobl, Lena and Merrill, William and Weiss, Gail and Chiang, David and Angluin, Dana},
   year={2024},
   pages={543–561} }

@inproceedings{yang2024masked,
 author = {Yang, Andy and Chiang, David and Angluin, Dana},
 booktitle = {Advances in Neural Information Processing Systems},
 doi = {10.52202/079017-0327},
 editor = {A. Globerson and L. Mackey and D. Belgrave and A. Fan and U. Paquet and J. Tomczak and C. Zhang},
 pages = {10202--10235},
 publisher = {Curran Associates, Inc.},
 title = {Masked {H}ard-{A}ttention {T}ransformers {R}ecognize {E}xactly the {S}tar-{F}ree {L}anguages},
 url = {https://proceedings.neurips.cc/paper_files/paper/2024/file/13d7f172259b11b230cc5da8768abc5f-Paper-Conference.pdf},
 volume = {37},
 year = {2024}
}

@inproceedings{merrill2024,
 author = {Merrill, William and Sabharwal, Ashish},
 booktitle = {International Conference on Learning Representations},
 editor = {B. Kim and Y. Yue and S. Chaudhuri and K. Fragkiadaki and M. Khan and Y. Sun},
 pages = {7690--7706},
 title = {The {E}xpressive {P}ower of {T}ransformers with {C}hain of {T}hought},
 url = {https://proceedings.iclr.cc/paper_files/paper/2024/file/1f59721c106ea80f613299039112f651-Paper-Conference.pdf},
 volume = {2024},
 year = {2024}
}

@inproceedings{ahvonen_2024,
 author = {Ahvonen, Veeti and Heiman, Damian and Kuusisto, Antti and Lutz, Carsten},
 booktitle = {Advances in Neural Information Processing Systems},
 doi = {10.52202/079017-3311},
 editor = {A. Globerson and L. Mackey and D. Belgrave and A. Fan and U. Paquet and J. Tomczak and C. Zhang},
 pages = {104205--104249},
 publisher = {Curran Associates, Inc.},
 title = {Logical {C}haracterizations of {R}ecurrent {G}raph {N}eural {N}etworks with {R}eals and {F}loats},
 url = {https://proceedings.neurips.cc/paper_files/paper/2024/file/bca7a9a0dd85e2a68420e5cae27eccfb-Paper-Conference.pdf},
 volume = {37},
 year = {2024}
}

@InProceedings{kuusisto2013,
  author =	{Kuusisto, Antti},
  title =	{{Modal Logic and Distributed Message Passing Automata}},
  booktitle =	{Computer Science Logic 2013 (CSL 2013)},
  pages =	{452--468},
  series =	{Leibniz International Proceedings in Informatics (LIPIcs)},
  ISBN =	{978-3-939897-60-6},
  ISSN =	{1868-8969},
  year =	{2013},
  volume =	{23},
  editor =	{Ronchi Della Rocca, Simona},
  publisher =	{Schloss Dagstuhl -- Leibniz-Zentrum f{\"u}r Informatik},
  address =	{Dagstuhl, Germany},
  URL =		{https://drops.dagstuhl.de/entities/document/10.4230/LIPIcs.CSL.2013.452},
  URN =		{urn:nbn:de:0030-drops-42132},
  doi =		{10.4230/LIPIcs.CSL.2013.452}
}

@misc{jiang2025softmaxtransformersturingcomplete,
      title={Softmax {T}ransformers are {T}uring-{C}omplete}, 
      author={Hongjian Jiang and Michael Hahn and Georg Zetzsche and Anthony Widjaja Lin},
      year={2025},
      eprint={2511.20038},
      archivePrefix={arXiv},
      primaryClass={cs.FL},
      url={https://arxiv.org/abs/2511.20038}, 
}

@article{su2024roformer,
  title={Ro{F}ormer: {E}nhanced {T}ransformer with {R}otary {P}osition {E}mbedding},
  author={Su, Jianlin and Ahmed, Murtadha and Lu, Yu and Pan, Shengfeng and Bo, Wen and Liu, Yunfeng},
  journal={Neurocomputing},
  volume={568},
  pages={127063},
  year={2024},
  publisher={Elsevier}
}

@inproceedings{chiang2022overcoming,
  title={Overcoming a {T}heoretical {L}imitation of {S}elf-{A}ttention},
  author={Chiang, David and Cholak, Peter},
  booktitle={Proceedings of the 60th Annual Meeting of the Association for Computational Linguistics (Volume 1: Long Papers)},
  pages={7654--7664},
  year={2022}
}

@inproceedings{li2024chain,
  title={Chain of {T}hought {E}mpowers {T}ransformers to {S}olve {I}nherently {S}erial {P}roblems},
  author={Li, Zhiyuan and Liu, Hong and Zhou, Denny and Ma, Tengyu},
  booktitle={The Twelfth International Conference on Learning Representations},
  year={2024}
}

@inproceedings{rombach2022high,
  title={High-{R}esolution {I}mage {S}ynthesis with {L}atent {D}iffusion {M}odels},
  author={Rombach, Robin and Blattmann, Andreas and Lorenz, Dominik and Esser, Patrick and Ommer, Bj{\"o}rn},
  booktitle={Proceedings of the IEEE/CVF conference on computer vision and pattern recognition},
  pages={10684--10695},
  year={2022}
}

@article{costa2022no,
  title={No {L}anguage {L}eft {B}ehind: {S}caling {H}uman-{C}entered {M}achine {T}ranslation},
  author={Costa-Juss{\`a}, Marta R and Cross, James and {\c{C}}elebi, Onur and Elbayad, Maha and Heafield, Kenneth and Heffernan, Kevin and Kalbassi, Elahe and Lam, Janice and Licht, Daniel and Maillard, Jean and others},
  journal={arXiv preprint arXiv:2207.04672},
  year={2022},
  eprint={2207.04672},
  archivePrefix={arXiv},
  primaryClass={cs.CL},
  url={https://arxiv.org/abs/2207.04672}, 
}

@misc{zheng2024open,
      title={Open-{S}ora: {D}emocratizing {E}fficient {V}ideo {P}roduction for {A}ll}, 
      author={Zangwei Zheng and Xiangyu Peng and Tianji Yang and Chenhui Shen and Shenggui Li and Hongxin Liu and Yukun Zhou and Tianyi Li and Yang You},
      year={2024},
      eprint={2412.20404},
      archivePrefix={arXiv},
      primaryClass={cs.CV},
      url={https://arxiv.org/abs/2412.20404}, 
}

@inproceedings{wilkinson,
  author       = {James Hardy Wilkinson},
  title        = {Rounding {E}rrors in {A}lgebraic {P}rocesses},
  booktitle    = {Information Processing, Proceedings of the 1st International Conference
                  on Information Processing, UNESCO, Paris 15-20 June 1959},
  pages        = {44--53},
  publisher    = {{UNESCO} (Paris)},
  year         = {1959},
  bibsource    = {dblp computer science bibliography, https://dblp.org}
}

@article{floats_robertazzi,
  author       = {Thomas G. Robertazzi and
                  Stuart C. Schwartz},
  title        = {Best "{O}rdering" for {F}loating-{P}oint {A}ddition},
  journal      = {{ACM} Trans. Math. Softw.},
  volume       = {14},
  number       = {1},
  pages        = {101--110},
  year         = {1988},
  doi          = {10.1145/42288.42343},
  bibsource    = {dblp computer science bibliography, https://dblp.org}
}

@article{higham_article,
  author       = {Nicholas J. Higham},
  title        = {The {A}ccuracy of {F}loating {P}oint {S}ummation},
  journal      = {{SIAM} J. Sci. Comput.},
  volume       = {14},
  number       = {4},
  pages        = {783--799},
  year         = {1993},
  doi          = {10.1137/0914050},
  bibsource    = {dblp computer science bibliography, https://dblp.org}
}

@inproceedings{yangcounting,
title={Counting {L}ike {T}ransformers: {C}ompiling {T}emporal {C}ounting {L}ogic {I}nto {S}oftmax {T}ransformers},
author={Andy Yang and David Chiang},
booktitle={First Conference on Language Modeling},
year={2024},
url={https://openreview.net/forum?id=FmhPg4UJ9K}
}

@inproceedings{chiang2023,
author = {Chiang, David and Cholak, Peter and Pillay, Anand},
title = {Tighter {B}ounds on the {E}xpressivity of {T}ransformer {E}ncoders},
year = {2023},
publisher = {JMLR.org},
booktitle = {Proceedings of the 40th International Conference on Machine Learning},
articleno = {221},
numpages = {19},
location = {Honolulu, Hawaii, USA},
series = {ICML'23}
}

@inproceedings{Yang2025,
 author = {Yang, Andy J and Cadilhac, Micha\"{e}l and Chiang, David},
 booktitle = {Advances in Neural Information Processing Systems},
 editor = {D. Belgrave and C. Zhang and H. Lin and R. Pascanu and P. Koniusz and M. Ghassemi and N. Chen},
 pages = {37184--37225},
 publisher = {Curran Associates, Inc.},
 title = {Knee-{D}eep in {C}-{RASP}: {A} {T}ransformer {D}epth {H}ierarchy},
 url = {https://proceedings.neurips.cc/paper_files/paper/2025/file/354a6e7be4f1a35127746b8147a31bfe-Paper-Conference.pdf},
 volume = {38},
 year = {2025}
}

\cleardoublepage

\appendix

\section{Proofs for Theorem \ref{theorem:main}}

\subsection{Proof of Theorem~\ref{theorem: logic to transformers}}\label{appendix: logic to transformers}

In this section, we prove Theorem~\ref{theorem: logic to transformers}.

\logictotransformers*
\begin{proof}
    We intuitively use the $\MLP$s to simulate Boolean operators, unmasked self-attention layers and/or cross-attention layers to simulate modalities $\langle G \rangle_{\geq k}^{\text{pre}}$ and masked self-attention layers to simulate modalities $\langle P \rangle^{\text{suf}}$.

    Let $\vec\varphi = (\varphi_1, \dots, \varphi_m)$ be a tuple of formulae of $\GPTLF$. Let $\psi_1, \dots, \psi_d$ enumerate the subformulae of $\varphi_1, \dots, \varphi_m$ including all tokens and $\top$ and $\bot$ such that $\psi_i = \varphi_i$ for each $i \in [m]$. We basically use $d$ as the internal dimension of our transformer, but there is no harm in padding the internal dimension up to, say, an integer multiple of $d$ to avoid problems resulting from residual connections, ensuring that the inputs and outputs of sub-layers do not interfere with each other by shifting the output to a different section of the feature dimension while unnecessary input values can be negated by the $\MLP$s; the details of shifting are already covered in \cite{ahvonen2025expressive}. For the embedding function, we map each token $p_i$ to the feature vector in $\cF^d$ such that the feature in position $j$ is $1$ if $\psi_j \in \{p_i, \top\}$ and $0$ otherwise. We then move to compute the truth values of subformulae other than tokens, $\top$ and $\bot$, one-by-one. The method of computing the truth value will depend on the connective used to construct the subformula in question.

    \paragraph{Simulating Boolean operators:} 
    For Boolean operators $\neg$ and $\land$, we use the $\MLP$s in decoder layers. We set all weights in all attention sub-layers to zero; because of the residual connection, this means that these attention sub-layers only achieve an identity transformation. For the $\MLP$s, we recall from Lemma~38 of \cite{ahvonen2025expressive} that a $\ReLU$-activated $\MLP$ with arbitrarily many layers can simulate Boolean operators in the sense that columns of the feature matrix correspond to subformulae and the truth values of more complex subformulae are computed into their own columns based on the columns corresponding to their immediate subformulae.
    Thus, a stack of encoder or decoder layers with two-layer $\MLP$s can do the same.

    \paragraph{Simulating $\langle P \rangle^{\text{suf}}$:} 
    We use masked self-attention sub-layers to simulate $\langle P \rangle^{\text{suf}}$ as follows. Assume that column $i$ encodes the truth values of $\varphi$; we want to calculate the truth value of $\mathsf{P} \varphi$. Let $W^Q$ and $W^K$ be $(d \times 1)$-matrices where all elements are $0$s. Now $\frac{(X W^Q) (X W^K)^{\T}}{\sqrt{d_k}}$ is the $n \times n$ matrix where every element is $0$.  

    After masking, every element on and to the right of the diagonal becomes $-\infty$. 
    Biasing does not affect any of the rows.
    After $\softmax$, all values to the left of the diagonal are non-zero while all the values to the right of the diagonal are $0$. 
    Once we define $W^V$ as the $(d \times 1)$-matrix where the $i$th row is $1$ and others are $0$s, the attention head outputs an $(n \times 1)$-matrix where each row is either some non-zero number if $\mathsf{P}\varphi$ is true in the corresponding vertex and $0$ if $\mathsf{P}\varphi$ is false instead. We can then use the MLPs in the decoder layers to normalise all non-zero values to $1$ while keeping all $0$s as $0$s; see Lemma 40 in \cite{ahvonen2025expressive}.

    \paragraph{Simulating $\langle G \rangle_{\geq k}^{\text{pre}}$:} 
    It was already shown in Theorem 47 in \cite{ahvonen2025expressive} how a stack of encoder layers can simulate $\langle G \rangle_{\geq k}^{\text{pre}}$. However, we also need to be able to simulate them with a stack of decoder layers since modalities $\langle G \rangle_{\geq k}^{\text{pre}}$ can be stacked on top of $\langle P \rangle^{\text{suf}}$; since we used decoder layers to simulate $\langle P \rangle^{\text{suf}}$, this means that the decoder layers must be able to handle both $\langle G \rangle_{\geq k}^{\text{pre}}$ and $\langle P \rangle^{\text{suf}}$. 

    Assume that $X$ and $Y$ are the Boolean input matrices to the cross-attention sub-layer where the $i$th column encodes the truth value of $\varphi$ and the $j$th column encodes $\top$ (i.e., it is a column of $1$s). The matrix $W^Q$ has a single column where the $j$th row is $F$ and others are zero, where $F$ is the greatest non-infinite float in the floating-point format. 
    Thus, $YW^Q$ is a single-column matrix of $F$s.
    The matrix~$W^K$ has likewise a single column where the $i$th row is $1$ and others are $0$. 
    Hence, the matrix $XW_K$ has a single column which encodes the truth values of $\varphi$.
    Now $(YW^Q)(XW^K)^\T$ is a matrix where each row is the $i$th column of $X$ multiplied by $F$. This is otherwise the same outcome as in the construction from \cite{ahvonen2025expressive, ahvonen2026}, except there are no rows of zeros unless $\varphi$ is true in zero vertices. This is strictly beneficial as it only eliminates the possibility of false positives. The rest of the construction follows the same scheme as \cite{ahvonen2025expressive, ahvonen2026}; the idea is to construct $W_V$ in a simple way that triggers underflow (see Proposition~\ref{proposition: underflow}) exactly when the number of vertices satisfying $\varphi$ is at least $k$. This only works for half of the possible values of~$k$ but, as covered in \cite{ahvonen2025expressive}, the remaining cases can be covered with additional arithmetical analysis and the $\MLP$s can be used to normalise all values back into $0$s and $1$s correctly.

    The final linear transformation of the transformer simply projects the first $m$ elements of the feature vectors, giving us the simulated truth values of $\varphi_1, \dots, \varphi_m$.
\end{proof}

\subsection{Proof of Theorem~\ref{theorem: transformers to automata}}\label{appendix: transformers to automata}

In this section, we provide the proof of Theorem~\ref{theorem: transformers to automata}. 

\transformerstoautomata*
\begin{proof}
    We construct a $\CPG$-automaton that simulates the whole transformer such that we only need to perform a single state transition per attention sub-layer and $\MLP$. All the internal arithmetic is essentially encoded into the transition function.

    Let $T$ be an encoder--decoder transformer of dimension $d$ over the float format $\cF(p,q)$ with $N_1$ encoder layers and $N_2$ decoder layers. We construct an equivalent $\CPG$-automaton $(Q, \pi, \delta, 2N_1 + 3N_2 + 1, d)$ where $Q = \{0,1\}^{d(p+q+1)+m}$ where the first $d$ bits are used to simulate $T$ while the last $m$ bits encode which sub-layer of which encoder or decoder layer we are simulating: $0\cdots 0$ for the self-attention sub-layer of the first encoder layer, $0 \cdots 01$ for the $\MLP$ of the first encoder layer, etc.

    Let $\mathrm{em} : \Sigma \to \cF^{d}$ be the embedding function of the transformer $T$. This can be identified with a function $f : \Sigma \to \{0,1\}^{d(p+q+1)}$ where the first $p+q+1$ bits encode the first float in the output of $\mathrm{em}$, the next $p+q+1$ bits encode the second one, and so on. We define that $\pi(p_i)$ is $\mathrm{em}(p_i)$ concatenated with the string $000$, indicating that the calculation begins with a unmasked self-attention head.

    Finally, we define the transition function; we do this based on which sub-layer we are simulating. For the remainder of the proof, we will omit the last $m$ bits of the states of the automaton, since the last $m$ bits simply tick upward in a lexicographic order. We will further simplify notation by writing $q_{\bv}$ to denote the state corresponding to the feature vector $\bv \in \cF^d$. We also simplify the setting with one more core observation; each vertex can tell whether it is in the encoder-part or decoder-part of the graph based on the size of the second multiset in the transition function; thus, it is easy to define that during encoder layers, decoder vertices do not update, and during decoder layers, encoder vertices do not update.

    \paragraph{Simulating unmasked self-attention:}

    We analyse the arithmetic operations in the attention sub-layer and analyse the dependencies to show that a single transition suffices to simulate the whole layer.
    
    Assume for each vertex $v$ and each feature vector $\bv$ that $v$ is in state $q_{\bv}$ if and only if $v$ has the feature vector $\bv$ before applying the self-attention module. We focus on the simpler case where there is only one attention head (i.e., $h = 1$) and each feature vector is a single floating-point number; the technique we use generalises easily to the case with more attention heads and longer feature vectors. 
    Letting $\bv_i$ denote the feature vector of vertex $i$ (in this case a single float) before multiplication with $W^Q$ and $\bv_i^Q$ after, and letting $Q_j$ denote the $j$th column of $W^Q$ (in this case a single float), we have 
    \[
        \bv_i^Q = (\bv_i \cdot Q_1, \dots, \bv_i \cdot Q_{d_k}).
    \]
    The vectors $\bv_i^K$ and $\bv_i^V$ are calculated analogously, and each of them only depends on $\bv_i$ (and not on any $\bv_j$ where $j \neq i$).

    Next, we compute $F_{ij}^{QK}$, which will be the float in row $i$ and column $j$ of $(XW^Q)(XW^K)^{\T}$. We get 
    \[
        F_{ij}^{QK} = \sum_{k = 1}^{d_k} \bv_{i,k}^Q \bv_{j,k}^K.
    \]
    Clearly, each $F_{ij}^{QK}$ only depends on $\bv_{i}^Q$ and $\bv_{j}^K$.

    Next, we compute $F_{ij}^{\sqrt{d_k}}$, which will be the float in row $i$ and column $j$ of $\frac{(XW^Q)(XW^K)^{\T}}{\sqrt{d_k}}$:
    \[
        F_{ij}^{\sqrt{d_k}} = \frac{F_{ij}^{QK}}{\sqrt{d_k}}.
    \]

    Next, we compute $F_{ij}^{\softmax}$, which will be the float in row $i$ and column $j$ of $\softmax\left(\frac{(XW^Q)(XW^K)^{\T}}{\sqrt{d_k}}\right)$. Let $i_{\max} := \max\{\, F_{ij}^{\sqrt{d_k}} \mid 1 \leq j \leq n \,\}$. We get
    \[
        F_{ij}^{\softmax} = \frac{e^{F_{ij}^{\sqrt{d_k}} - i_{\max}}}{\sum_{k = 1}^{n} e^{F_{ik}^{\sqrt{d_k}} - i_{\max}}}.
    \]
    Due to Proposition~\ref{proposition: floating-point saturation}, $F_{i}^{\softmax}$ depends only on $F_{ik}^{\sqrt{d_k}}$ for a bounded number of $k$-values.

    Next, we compute $F_{ij}$, which will be the float in row $i$ and column $j$ of the output matrix of the attention head.
    \[
        F_{ij} = \sum_{k = 1}^{n} F_{ik}^{\softmax} \bv_{k,j}^V.
    \]
    Again, due to Proposition~\ref{proposition: floating-point saturation}, each $F_{ij}$ depends on products $F_{ik}^{\softmax} \bv_{k,i}^V$ for only a bounded number of $k$-values.

    Finally, we consider the final output matrix $O$ of the attention module. We have 
    \[
        O_{ij} = \sum_{k = 1}^{hd_v} F_{ik} W^O_{kj}.
    \]
    Due to the number of attention heads $h$ being fixed, $O_{ij}$ only depends on $F_{ik}$ for a bounded number of $k$-values.

    To show that the automaton $A$ can compute $O_i$ for each $i$ in a single step, we only need to show that $O_{i}$ depends on $\bv_j$ for only a bounded number of $j$-values. Clearly $O_{i}$ only depends on $F_{i}$. Since $F_{ij}$ depends on products $F_{ik}^{\softmax} \bv_{k,i}^V$ for only a bounded number of $k$-values, so do $F_i$ and thus $O_i$ because the number of columns $j$ is fixed. 

    For each float, there are only a finite number of pairs of floats that, when multiplied, will result in the float in question; thus, $O_i$ depends on $F_{ik}^{\softmax}$ and $\bv_{k,i}^V$ for only a bounded number of $k$-values.
    
    Because $\bv_{k,i}^V$ only depends on $\bv_k$, we only need to further consider $F_{ik}^{\softmax}$. Because $F_{i}^{\softmax}$ depends only on $F_{ik}^{\sqrt{d_k}}$ for a bounded number of $k$-values (as stated earlier), the same is true for $O_{i}$. Because $F_{ik}^{\sqrt{d_k}}$ only depends on $F_{ik}^{QK}$, $O_i$ only depends on $F_{ik}^{QK}$ for a bounded number of $k$-values. Finally, since $F_{ik}^{QK}$ only depends on $\bv_i^Q$ and $\bv_k^K$ which only depend on $\bv_i$ and $\bv_k$ respectively, $O_i$ only depends on $\bv_k$ for a bounded number of $k$-values. Thus, the automaton $A$ can calculate the transition for each vertex in a single step by only looking at the multiset of states $q_{\bv}$ of vertices in the graph.
    Moreover, since the residual does not add any dependencies, we can skip past $O_i$ and calculate the output of the residual connection in the single transition step.

    \paragraph{Simulating $\MLP$s:}
    
    Because $\MLP$s perform only local transformations, we construct the transition function to act as a look-up table for the outputs of the $\MLP$. 
    
    Since the weight matrices and biases are constant, for each $\bv$ there exists a unique feature vector $\bv'$ that is the output of the $\MLP$ when given the input $\bv$. Thus, we define $\delta(q_{\bv}, M, M') = q_{\bv'}$.

    \paragraph{Simulating masked self-attention sub-layers:}

    We modify the case for unmasked self-attention to show that in the case of masked self-attention, the transitions depend only on the feature vectors of a bounded number of preceding vertices in the decoder part of the graph (and not at all on vertices that follow, or on vertices in the encoder part). 
    
    First, we replace each occurrence of $X$ with $Y$, as masked self-attention occurs in decoder layers. 
    The core difference occurs in the step where we compute $F_{ij}^{\softmax}$; in the case of masked self-attention, we have $i_{\max} := \max\{\, F_{ij}^{\sqrt{d_k}} \mid 1 \leq j \leq i \,\}$ and the denominator only sums up to the bound $j$ rather than $n$, i.e., we have
    \[
        F_{ij}^{\softmax} = \frac{e^{F_{ij}^{\sqrt{d_k}} - i_{\max}}}{\sum_{k = 1}^{j} e^{F_{ik}^{\sqrt{d_k}} - i_{\max}}}
    \]
    for all $j \leq i$ and $F_{ij}^{\softmax} = 0$ for all $j > i$.

    The analysis on dependencies is the same as before with the following additional observations. Because $F_{ij}^{\softmax} = 0$ for all $j > i$, $O_{i}$ depends on products $F_{ik}^{\softmax} \bv_{k,i}^V$ for only a bounded number of values $k \leq i$, and it thus only depends on $\bv_{k,i}^V$ for a bounded number of values $k \leq i$. Likewise, because $F_{ij}^{\softmax}$ only depends on $F_{ik}^{\sqrt{d_k}}$ for a bounded number of values $k \leq i$, so does $O_i$, and this upper limit $i$ cascades in the analysis until we reach $\bv_k^K$. Thus, $O_i$ only depends on $\bv_k'$ for a bounded number of value $k \leq i$ and not at all for values $k > i$, and so we can read the transition from the second multiset in the same way as we did with the first multiset in the prior case. 

    \paragraph{Simulating cross-attention sub-layers:}

    We modify the case for unmasked self-attention to show that in the case of cross-attention, the transitions depend only on the feature vectors of a bounded number of vertices in the encoder part of the graph (and not at all other vertices in the decoder part). Naturally, we replace $X$ with $Y$ in the product $XW^Q_i$. The dependence of $O_{i}$ on $\bv^Q_k$ is only for the value $k = i$, and the transition function can read this from the first input. The rest of the analysis remains unchanged.

    \paragraph{Simulating the final linear transformation:}

    Similar to $\MLP$s, the final linear transformation can be calculated in a single step just by looking at the previous state of a vertex. This concludes the construction.
\end{proof}

\subsection{Proof of Theorem~\ref{theorem: automata to logic}}\label{appendix: automata to logic}

In this section, we prove Theorem~\ref{appendix: automata to logic}.

We start by defining types. Intuitively, these are formulae of $\GPTLF$ that specify a pointed graph uniquely up to some modal depth and width. These are analogous to graded types in \cite{ahvonen_2024} and types in \cite{kuusisto2013}.

Before defining types, we need some auxiliary definitions. 
We define the abbreviated formula $\langle P \rangle_{\geq k}^{\text{suf}} \varphi$ recursively as follows. For $k = 1$, we define $\langle P \rangle_{\geq 1}^{\text{suf}} \varphi := \langle P \rangle^{\text{suf}} \varphi$. Assume we have defined $\langle P \rangle_{\geq k}^{\text{suf}} \varphi$. We define
\[
    \langle P \rangle_{\geq k+1}^{\text{suf}} \varphi := \langle P \rangle^{\text{suf}} \langle P \rangle_{\geq k}^{\text{suf}} \varphi.
\]

Let $\GPTLF_{\geq}$ be the logic obtained from $\GPTLF$ by replacing $\langle P \rangle^{\text{suf}}$ in the grammar with $\langle P \rangle^{\text{suf}}_{\geq k}$. More formally, a $\Sigma$-formula $\varphi$ of $\GPTLF_{\geq}$ is constructed according to the following grammar:
\[
  \varphi ::= \bot
  \mid p 
  \mid \neg \varphi 
  \mid (\varphi \land \varphi) 
  \mid \langle G\rangle^{\text{pre}}_{\ge k}\,\varphi 
  \mid \langle P\rangle^{\text{suf}}_{\geq k}\,\varphi,
\]
where $p \in \Sigma$ and $k \in \N$.

Since the modality $\langle P\rangle^{\text{suf}}_{\geq k}$ can be defined as an abbreviation in $\GPTLF$ and since $\langle P\rangle^{\text{suf}}_{\geq 1}$ has the same semantics as $\langle P\rangle^{\text{suf}}$, the two logics $\GPTLF$ and $\GPTLF_{\geq}$ have the same expressive power: each formula of $\GPTLF$ can be translated into a formula of $\GPTLF_{\geq}$ and vice versa.

The \textbf{modal depth} $\md(\varphi)$ and \textbf{width} $\width(\varphi)$ of a formula $\varphi$ of $\GPTLF_{\geq}$ are defined recursively as follows:
\[
\begin{array}{rcl@{\qquad}rcl}
\md(\bot) &=& 0 & \width(\bot) &=& 0 \\
\md(p) &=& 0 & \width(p) &=& 0 \\
\md(\neg\psi) &=& \md(\psi) & \width(\neg\psi) &=& \width(\psi) \\
\md(\psi \land \theta) &=& \max\{\md(\psi),\md(\theta)\} & \width(\psi \land \theta) &=& \max\{\width(\psi),\width(\theta)\} \\
\md(\langle G\rangle^{\text{pre}}_{\geq k}\psi) &=& \md(\psi)+1 & \width(\langle G\rangle^{\text{pre}}_{\geq k}\psi) &=& \max\{\width(\psi),k\} \\
\md(\langle P\rangle^{\text{suf}}_{\geq k}\psi) &=& \md(\psi)+1 & \width(\langle P\rangle^{\text{suf}}_{\geq k}\psi) &=& \max\{\width(\psi),k\}
\end{array}
\]

Finally, we define some auxiliary formulae. Let $\varphi$ be a formula of $\GPTLF$ or $\GPTLF_{\geq}$. We define
\[
    \langle G \rangle_{=k}^{\text{pre}} \varphi := \langle G \rangle_{\geq k}^{\text{pre}} \varphi \land \neg \langle G \rangle_{\geq k+1}^{\text{pre}} \varphi.
\]
We also define $\langle P \rangle_{= k}^{\text{suf}} \varphi := \langle P \rangle_{\geq k}^{\text{suf}} \varphi \land \neg \langle P \rangle_{\geq k+1}^{\text{suf}} \varphi$.

\begin{definition}\label{type_definition}
    The \textbf{$\GPTLF_{\geq}$-type of depth $d \in \N$ and width $k \in \N$} of a pointed graph $(\cG,i)$ where $\cG = (V, E, \lambda)$, denoted by $\tau^{(\cG,i)}_{k,d}$, is defined recursively as follows. For $d = 0$, we define 
    \[
        \tau^{(\cG,i)}_{k,0} := \bigwedge_{\lambda(i) = p} p \land \bigwedge_{\lambda(i) \neq p} \neg p.
    \]
    Assume we have defined $\tau^{(\cG,i)}_{k,d}$ for each pointed graph $(\cG,i)$ and let $T_{k,d}$ denote the set of all such types. We define $\tau^{(\cG,i)}_{k,d+1}$ intuitively as follows
    \[
        \begin{aligned}
            \tau^{(\cG,i)}_{k,d+1} := \tau^{(\cG,i)}_{k,0} &\land \bigwedge_{\ell = 0}^{k-1}\{\,\langle G \rangle_{= \ell}^{\text{pre}} \tau \mid \tau \in T_{k,d} \text{ and } \cG,i \models \langle G \rangle_{= \ell}^{\text{pre}} \tau\,\} \\
            &\land \{\,\langle G \rangle_{\geq k}^{\text{pre}} \tau \mid \tau \in T_{k,d} \text{ and } \cG,i \models \langle G \rangle_{\geq k}^{\text{pre}} \tau\,\} \\
            &\land \bigwedge_{\ell = 0}^{k-1}\{\,\langle P \rangle_{= \ell}^{\text{suf}} \tau \mid \tau \in T_{k,d} \text{ and } \cG,i \models \langle P \rangle_{= \ell}^{\text{suf}} \tau\,\} \\
            &\land \{\,\langle P \rangle_{\geq k}^{\text{suf}} \tau \mid \tau \in T_{k,d} \text{ and } \cG,i \models \langle P \rangle_{\geq k}^{\text{suf}} \tau\,\}.
        \end{aligned}
    \]
\end{definition}

\begin{lemma}\label{lemma:type-equivalence-implies-logic-equivalence}
    If two pointed graphs satisfy the same $\GPTLF_{\geq}$-type of depth $d$ and width $k$, then they cannot be separated by any formula of $\GPTLF_{\geq}$ of depth $\leq d$ and width $\leq k$.
\end{lemma}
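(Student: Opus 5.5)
Since $\GPTLF_{\geq}$ has no $\langle P\rangle^{\text{suf}}_{\geq k}$ as a primitive with independent counting semantics, I first need to pin down what $\langle P\rangle^{\text{suf}}_{\geq k}\varphi$ means semantically. Unfolding the abbreviation, $\cG,w\models\langle P\rangle^{\text{suf}}_{\geq k}\varphi$ holds iff there is a chain $v_k<\dots<v_1<w$ of suffix vertices with $\cG,v_k\models\varphi$. On a word this is simply: some suffix vertex $v<w$ has $\varphi$ true and has at least $k-1$ suffix vertices before it. It is \emph{not} a count of $\varphi$-vertices, and the type definition uses these abbreviations syntactically, so I will treat the modality as an arbitrary semantic operator. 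The proof only needs that its truth at $w$ is determined by which vertices satisfy $\varphi$, and that the types of depth $d$ partition vertices.

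I prove, by induction on $d$, the following statement for all $k$: if $(\cG,i)$ and $(\cH,j)$ satisfy the same $\tau_{k,d}$, then they agree on every $\varphi$ with $\md(\varphi)\le d$ and $\width(\varphi)\le k$. I first note that each pointed graph satisfies its own type, $\cG,i\models\tau^{(\cG,i)}_{k,d}$. This follows directly from the definition, since every conjunct was chosen because it holds at $(\cG,i)$. Hence "satisfy the same type" means $\tau^{(\cH,j)}_{k,d}=\tau^{(\cG,i)}_{k,d}$, up to treating the conjunctions as sets.

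For $d=0$, both points carry the same label, and formulae of depth $0$ are Boolean combinations of tokens.

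For the step $d+1$, I run an inner induction on the structure of $\varphi$. Tokens are fixed by the $\tau_{k,0}$ conjunct, and Boolean connectives are immediate. Now take $\varphi=\langle G\rangle^{\text{pre}}_{\ge m}\psi$ with $\md(\psi)\le d$ and $m\le k$. By the outer induction hypothesis, $\psi$ is a union of depth-$d$ width-$k$ types: every vertex $v$ satisfies exactly one $\tau\in T_{k,d}$, namely $\tau^{(\cG,v)}_{k,d}$. That type is satisfied by every vertex having it, by the hypothesis applied to these types themselves (a type of depth $d$ has depth $\le d$ and width $\le k$). So the number of prefix $\psi$-vertices is $\sum_{\tau\models\psi}|\{v\in \cG_p: v\models\tau\}|$, where "$\tau\models\psi$" is well defined by the IH. The type of depth $d+1$ records each of these counts exactly below $k$, or records that it is $\ge k$. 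Since $m\le k$, whether the sum is $\ge m$ is determined by these capped counts: if any capped count is $\ge k$, the sum is $\ge m$; otherwise the sum is exact. Agreement of the conjuncts for $(\cG,i)$ and $(\cH,j)$ then transfers the truth of $\varphi$. $T_{k,d}$ is finite, so the conjunctions are finite formulae; this is checked by induction on $d$.

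For $\varphi=\langle P\rangle^{\text{suf}}_{\ge m}\psi$, I argue analogously. The truth of $\varphi$ at $w$ amounts to the existence of a suffix vertex $v<w$ satisfying $\psi\wedge\langle P\rangle^{\text{suf}}_{\geq m-1}\top$ (or just $\psi$ when $m=1$). Here I expect the main obstacle: the type conjuncts $\langle P\rangle^{\text{suf}}_{=\ell}\tau$ do not literally count $\tau$-vertices, so the "sum over types" argument does not apply verbatim.

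My plan is to use the conjuncts $\langle P\rangle^{\text{suf}}_{\geq m}\tau$ for $\tau\in T_{k,d}$ directly. If $\psi$ is equivalent to the disjunction of the types $\tau$ it contains, then $\langle P\rangle^{\text{suf}}_{\geq m}\psi$ is equivalent to $\bigvee_{\tau\models\psi}\langle P\rangle^{\text{suf}}_{\geq m}\tau$. This holds because the chain semantics is existential in its last element, so the operator distributes over disjunction. The type of depth $d+1$ fixes the truth value of each $\langle P\rangle^{\text{suf}}_{\geq m}\tau$ for $m\le k$. For $m=k$ it is listed explicitly. For $m<k$, the conjuncts $\langle P\rangle^{\text{suf}}_{=\ell}\tau$, $\ell<k$, together with $\langle P\rangle^{\text{suf}}_{\ge k}\tau$ partition the possibilities, and in each case the truth of $\langle P\rangle^{\text{suf}}_{\ge m}\tau$ is decided. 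Since the $\geq$-operators are monotone decreasing in $m$, exactly one "$=\ell$" holds or "$\geq k$" holds, and each of these determines whether $\ge m$ holds.

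Thus both pointed graphs agree on $\varphi$, which completes the inner and outer inductions. The same distribution-over-disjunction argument also gives the $\langle G\rangle$ case a uniform treatment if the counting argument proves awkward.
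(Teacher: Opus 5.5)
Your proof is correct. For $\langle G\rangle^{\text{pre}}_{\geq m}$ it is the paper's argument: sum, over the depth-$d$ types that imply $\psi$, the counts capped at $k$ that the depth-$(d+1)$ type records.

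The genuine difference is the $\langle P\rangle^{\text{suf}}_{\geq m}$ case, which the paper dismisses as ``handled analogously''. As you noticed, a literal analogue of the counting argument cannot work. With $\langle P\rangle^{\text{suf}}_{\geq k+1}\varphi := \langle P\rangle^{\text{suf}}\langle P\rangle^{\text{suf}}_{\geq k}\varphi$, the conjunct $\langle P\rangle^{\text{suf}}_{=\ell}\tau$ does not say that $\ell$ earlier suffix vertices satisfy $\tau$. It says that the earliest suffix $\tau$-vertex lies exactly $\ell$ positions before the point. So the truth of $\langle P\rangle^{\text{suf}}_{\geq m}\psi$ is not a function of summed counts.

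Your replacement is the right argument for the definition as written. $\langle P\rangle^{\text{suf}}_{\geq m}$ is a composite of diamonds, so it commutes with the finite disjunction $\psi\equiv\bigvee_{\tau\models\psi}\tau$. Antitonicity in $m$ then lets the single listed conjunct per $\tau$ decide every $\langle P\rangle^{\text{suf}}_{\geq m}\tau$ with $m\le k$. The partition step tacitly reads the undefined $\langle P\rangle^{\text{suf}}_{\geq 0}$ as $\top$.

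The trade-off is that your argument is tied to this non-counting semantics and would fail under a counting reading, where only the paper's sum argument works. The paper's later type-automaton lemma in fact presupposes the counting reading: see its items 4--5, and the transition sending $M'(\sigma)=\ell$ to $\langle P\rangle^{\text{suf}}_{=\ell}\sigma$. So your observation exposes an inconsistency in the paper rather than a defect in your proof. Structurally, your inner induction applies the outer hypothesis only to the arguments of modalities. You therefore also avoid the paper's appeal to ``equal depth-$d$ types give equal lower-depth types''.

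Two statements in your write-up are false, though neither is load-bearing.

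\emph{Reversed chain.} The chain $v_k<\dots<v_1<w$ with $\cG,v_k\models\varphi$ means the $\varphi$-vertex has at least $k-1$ suffix vertices \emph{between it and $w$}, not before it. Equivalently, $\varphi$ holds at some suffix vertex $v\le w-k$. Hence $\langle P\rangle^{\text{suf}}_{\geq m}\psi$ is not $\langle P\rangle^{\text{suf}}(\psi\wedge\langle P\rangle^{\text{suf}}_{\geq m-1}\top)$. Take the suffix $\BOS\,a\,b$, the $b$-vertex, $m=2$ and $\psi=a$: the former is false and the latter is true. Your argument only uses distributivity over $\vee$ and antitonicity in $m$. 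Both hold for the correct semantics (for antitonicity, drop one link of the chain), so nothing breaks.

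\emph{Closing remark.} $\langle G\rangle^{\text{pre}}_{\geq m}$ genuinely counts. So $\langle G\rangle^{\text{pre}}_{\geq 2}(\tau_1\vee\tau_2)$ is not equivalent to $\langle G\rangle^{\text{pre}}_{\geq 2}\tau_1\vee\langle G\rangle^{\text{pre}}_{\geq 2}\tau_2$; a prefix with one vertex of each type separates them. The capped-count argument you gave is necessary there, so drop that sentence.
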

\begin{proof}
    We prove the claim by induction over $d$.
    Let $k$ be fixed for the entire proof, and let $d=0$. Pointed graphs $(\cG,i)$ and $(\cH,j)$ satisfying the same $\GPTLF_{\geq}$-type of depth $0$ means that they agree on all tokens, and thus on all Boolean combinations thereof. Therefore they cannot be separated by any formula of depth $\leq 0$, no matter its width.

    Suppose then that the claim holds for all $d'<d$ and that $(\cG,i)$ and $(\cH,j)$ satisfy the same $\GPTLF_{\geq}$-type of depth $d$ and width $k$. We prove that $(\cG,i)$ and $(\cH,j)$ are indistinguishable by structural induction. Since $(\cG,i)$ and $(\cH,j)$ satisfy also the same $\GPTLF_{\geq}$-types of depths $<d$ and width $k$, they are indistinguishable by any formula of depth $<d$ and width $k$. Thus it is sufficient to consider only the cases where $\langle G\rangle^{\text{pre}}_{\geq \ell}$ or $\langle P\rangle^{\text{suf}}_{\geq \ell}$ are applied for any $\ell \leq k$.

    Suppose $\varphi=\langle G\rangle^{\text{pre}}_{\geq \ell}\psi$. Since $\psi$ has depth $d-1$, then by the induction hypothesis, the truth value of $\psi$ at any vertex is determined by the truth value of the $\GPTLF_{\geq}$-type of depth $d-1$ and width $k$ at that vertex. Consider the set of all $(k,d-1)-$types $\tau$ that imply $\psi$:
    \[
        S_\psi:=\{\tau \in T_{k,d-1}\mid \tau\models\psi\}.
    \]
    We write $n^{\text{pre}}_\tau(\cG,i)$ for the number of vertices in the prefix 
    whose $(k,d-1)$-type is $\tau$. Observe now that $\cG,i\models\langle G\rangle ^{\text{pre}}_{\geq k}\psi$ if and only if $\sum_{\tau\in S_\psi}n^{\text{pre}}_\tau(\cG,i)\geq k$. For equivalence, it thus suffices to show that $\sum_{\tau\in S_\psi}n^{\text{pre}}_\tau(\cG,i)\geq k$ if and only if $\sum_{\tau\in S_\psi}n^{\text{pre}}_\tau(\cH,i)\geq k$.

    First, note that since the type $\tau^{(\cG,i)}_{k,d}$ records $\langle G\rangle^{\text{pre}}_{=\ell}$ for each $\ell\in\{0,\dots,k-1\}$ and $\langle G\rangle^{\text{pre}}_{\geq k}$, it determines $\min(n^{\text{pre}}_\tau(\cG,i),k)$ for each $\tau \in T_{k,d-1}$. Since $(\cH,j)$ also satisfies $\tau^{(\cG,i)}_{k,d}$, this is equal to $\min(n^{\text{pre}}_\tau(\cH,j),k)$. Consider now two cases:
    \begin{enumerate}
        \item If $n^{\text{pre}}_{\tau_0}(\cG,i)\geq k$ for some $\tau_0\in S_\psi$, then $\min(n^{\text{pre}}_{\tau_0}(\cG,i),k)=\min(n^{\text{pre}}_{\tau_0}(\cH,j),k)=k$, and hence both sums are at least $k$.
        \item If $n^{\text{pre}}_{\tau}(\cG,i)<k$ for all $\tau\in S_\psi$, then since $\min(n^{\text{pre}}_\tau(\cG,i),k)=\min(n^{\text{pre}}_\tau(\cH,j),k)$, we must have $n^{\text{pre}}_\tau(\cG,i)=n^{\text{pre}}_\tau(\cH,j)$, meaning the sums are equal.
    \end{enumerate}
    We have thus shown that $\mathcal{G},i\models\langle G\rangle^{\text{pre}}_{\geq\ell}\psi$ if and only if $\mathcal{H},j\models\langle G\rangle^{\text{pre}}_{\geq\ell}\psi$. The case $\varphi=\langle P\rangle^{\text{suf}}_{\geq\ell}$ is handled analogously.
\end{proof}

Next, we define type automata for the above defined types. These are, intuitively, $\CPG$-automata that have maximal ability to distinguish vertices, i.e., if two vertices have the same state after $t$ iterations of a type automaton, they also have the same state after $t$ iterations of any other $\CPG$-automaton (that have the same width bound $k$ for multisets as the type automaton). This notion is analogous to counting type automata in \cite{ahvonen_2024} and type automata in \cite{kuusisto2013}.

A \textbf{$\GPTLF_{\geq}$-type automaton of width $k$} is a $\CPG$-automaton $A = (Q, \pi, \delta, n, b)$ defined as follows. We let $Q$ be the set of all $\GPTLF_{\geq}$-types of width $k$ (up to depth $n$).\footnote{Here we abuse notation since formally $Q$ should be a set of bit strings. This is not a problem since the set of all $\GPTLF_{\geq}$-types of width $k$ up to depth $n$ is finite, and it can thus be mapped injectively to a set of bit strings of some length $m\in\mathbb{Z}_+$.} For the initializing function, we define 
\[
    \pi(p) := p \land \bigwedge_{p \neq p' \in \Sigma} \neg p'.
\]
For the transition function $\delta$, let $\tau$ be a $\GPTLF_{\geq}$-type of width $k$, and let $M$ and $M'$ be multisets of such types of the same depth as $\tau$ (when the depths of all types in $\{\tau\}, M$ and $M'$ do not match, we may define transitions arbitrarily). We define
\[
    \begin{aligned}
        \delta(\tau, M, M') := \tau_0 &\land \bigwedge_{\ell = 0}^{k-1} \{\, \langle G \rangle_{= \ell}^{\text{pre}} \sigma \mid M(\sigma) = \ell \,\} \\
        &\land \{\, \langle G \rangle_{\geq k}^{\text{pre}} \sigma \mid M(\sigma) = k \,\} \\
        &\land \bigwedge_{\ell = 0}^{k-1} \{\, \langle P \rangle_{= \ell}^{\text{suf}} \sigma \mid M'(\sigma) = \ell \,\} \\
        &\land \{\, \langle P \rangle_{\geq k}^{\text{suf}} \sigma \mid M'(\sigma) = k \,\},
    \end{aligned}
\]
where $\tau_0$ is the unique $\GPTLF_{\geq}$-type of depth $0$ that is consistent with $\tau$.

\begin{lemma}\label{Analogous_B6}
    Each formula $\varphi$ of $\GPTLF_{\geq}$ of modal depth $d$ and width $k$ has a logically equivalent finite disjunction of $\GPTLF_{\geq}$-types of depth $d$ and width $k$.
\end{lemma}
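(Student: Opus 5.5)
The plan is to use Lemma~\ref{lemma:type-equivalence-implies-logic-equivalence} together with a finiteness observation about the set $T_{k,d}$ of types of depth $d$ and width $k$.

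First, $T_{k,d}$ is finite, which I would show by induction on $d$. For $d=0$, a type is determined by the label, and $\Sigma\cup\{\BOS\}$ is finite. For the step, a type of depth $d+1$ is built from a depth-$0$ type together with, for each $\tau\in T_{k,d}$, a choice of a value in $\{0,\dots,k-1,\geq k\}$ for the prefix count and one for the suffix count. Since $T_{k,d}$ is finite by the induction hypothesis, only finitely many such conjunctions arise. One should also note that each type is a genuine formula of $\GPTLF_{\geq}$ of depth $\leq d$ and width $\leq k$: the conjunctions are finite, and the counting modalities used have index at most $k+1$ only through the abbreviation $\langle\cdot\rangle_{=\ell}$ with $\ell\leq k-1$. (If one insists on width exactly $\leq k$, this holds because $\langle G\rangle^{\text{pre}}_{=\ell}$ with $\ell\le k-1$ uses indices $\le k$.)

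Second, every pointed graph $(\cG,i)$ satisfies its own type $\tau^{(\cG,i)}_{k,d}$. This follows by a routine induction on $d$ directly from the definition, since each conjunct was included precisely because $\cG,i$ satisfies it.

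Third, given $\varphi$ of depth $d$ and width $k$, set
\[
\Phi_\varphi := \{\,\tau^{(\cG,i)}_{k,d} \mid \cG,i\models\varphi\,\}\subseteq T_{k,d},
\]
which is finite by the first step, and take the disjunction $\bigvee\Phi_\varphi$, with $\bot$ if the set is empty. If $\cG,i\models\varphi$, then its own type is a disjunct and it satisfies that type by the second step. Conversely, suppose $\cH,j\models\tau^{(\cG,i)}_{k,d}$ for some $(\cG,i)$ satisfying $\varphi$. I would check that satisfying a type implies having that type as one's own type, i.e., $\tau^{(\cH,j)}_{k,d}=\tau^{(\cG,i)}_{k,d}$; this holds because the conjuncts fix each bounded count exactly. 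Then Lemma~\ref{lemma:type-equivalence-implies-logic-equivalence} gives $\cH,j\models\varphi$.

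The main obstacle is the claim in the third step that satisfying a type forces having it. I would prove it by induction on $d$, using that distinct types of depth $d$ are mutually exclusive: the counts $\min(n_\tau,k)$ are determined exactly by the $=\ell$ and $\geq k$ conjuncts, and, inductively, the depth-$(d-1)$ types partition the vertices.
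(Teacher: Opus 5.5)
Your proposal is correct and is essentially the paper's argument. Both take the disjunction of the finitely many types in $T_{k,d}$ compatible with $\varphi$ and use Lemma~\ref{lemma:type-equivalence-implies-logic-equivalence} to show it is equivalent to $\varphi$; the paper phrases this as ``every type entails $\varphi$ or $\neg\varphi$'', while you index the disjunction by the types realized by models of $\varphi$ and make explicit the finiteness of $T_{k,d}$ and self-satisfaction of types, which the paper leaves implicit. The step you flag as the main obstacle needs no induction: the depth-$0$ conjunct fixes the label, and for each $\tau\in T_{k,d-1}$ exactly one of $\langle G\rangle^{\text{pre}}_{=0}\tau,\dots,\langle G\rangle^{\text{pre}}_{=k-1}\tau,\langle G\rangle^{\text{pre}}_{\geq k}\tau$ holds at any pointed graph (likewise for $\langle P\rangle^{\text{suf}}$), so any $(\cH,j)$ satisfying $\tau^{(\cG,i)}_{k,d}$ selects the same conjuncts and hence $\tau^{(\cH,j)}_{k,d}=\tau^{(\cG,i)}_{k,d}$.
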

\begin{proof}
    Let $T_{k,d}$ denote the set of all $\GPTLF_{\geq}$-types of depth $d$ and width $k$. Let $\Phi:=\{\tau \in T_{k,d}\mid \tau \models \varphi\}$ and $\neg\Phi:=\{\tau \in T_{k,d} \mid \tau \models \neg \varphi\}$, and let $\bigvee\Phi$ denote the disjunction of the types in $\Phi$. Obviously, we have $\Phi\cap\neg\Phi=\emptyset$ and $\bigvee\Phi\models\varphi$. Note also that $\Phi$ is finite since $T_{k,d}$ is finite.

    To show that $\varphi\models\bigvee\Phi$, it suffices to show that $\Phi\cup\neg\Phi=T_{k,d}$. Suppose that $\tau \in T_{k,d}\setminus(\Phi\cup\neg\Phi)$. Then there exist pointed graphs $(\cG,i)$ and $(\cH,j)$ that satisfy $\tau$ such that $\cG,i\models\varphi$ and $\cH,j\models\neg\varphi$. Since $(\cG,i)$ and $(\cH,j)$ satisfy the same $\GPTLF_{\geq}$-type of depth $d$ and width $k$, then by Lemma \ref{lemma:type-equivalence-implies-logic-equivalence}, there can be no formula of $\GPTLF_{\geq}$ of modal depth $d$ and width $k$ that separates $(\cG,i)$ and $(\cH,j)$, but $\varphi$ is such a formula, which is a contradiction. Thus $\bigvee\Phi$ and $\varphi$ are logically equivalent.
\end{proof}
\begin{proposition}\label{proposition: types work}
        If two pointed graphs $(\cG,w)$ and $(\cH,v)$ satisfy exactly the same $\GPTLF_{\geq}$-type of width $k$ and modal depth $d+1$, then $(\cG,w)$ and $(\cH,v)$ satisfy exactly the same $\GPTLF_{\geq}$-type of width $k$ and modal depth $d$.
\end{proposition}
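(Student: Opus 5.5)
The plan is to prove the statement by induction on $d$. The idea is that the depth-$(d+1)$ type of a pointed graph determines its depth-$d$ type. Throughout, fix $k$ and write $\tau_{k,d}$ for types as in Definition~\ref{type_definition}.

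\emph{Base case $d=0$.} The type $\tau^{(\cG,w)}_{k,1}$ contains $\tau^{(\cG,w)}_{k,0}$ as an explicit conjunct. So if $(\cH,v)$ satisfies $\tau^{(\cG,w)}_{k,1}$, it satisfies $\tau^{(\cG,w)}_{k,0}$. Since depth-$0$ types are mutually exclusive (each fixes exactly one token), the depth-$0$ types of the two graphs coincide.

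\emph{Inductive step.} Assume the claim for $d-1$; that is, equal depth-$d$ types imply equal depth-$(d-1)$ types. Suppose $(\cG,w)$ and $(\cH,v)$ have the same depth-$(d+1)$ type. The depth-$0$ conjunct agrees as before. Next, for each $\sigma\in T_{k,d-1}$ and each $\ell<k$, we must show that $\cG,w\models\langle G\rangle^{\text{pre}}_{=\ell}\sigma$ iff $\cH,v\models\langle G\rangle^{\text{pre}}_{=\ell}\sigma$. The same is needed for $\geq k$, and analogously for $\langle P\rangle^{\text{suf}}$.

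\emph{Counting depth-$(d-1)$ types.} By the induction hypothesis, every vertex's depth-$d$ type determines its depth-$(d-1)$ type. Hence the set of vertices of depth-$(d-1)$ type $\sigma$ is the disjoint union, over the depth-$d$ types $\tau$ that refine $\sigma$, of the vertices of type $\tau$. Write $S_\sigma=\{\tau\in T_{k,d}\mid \tau \text{ refines } \sigma\}$, a finite set. Then the number of prefix vertices of type $\sigma$ is $\sum_{\tau\in S_\sigma} n^{\text{pre}}_\tau$. The shared depth-$(d+1)$ type records $\min(n^{\text{pre}}_\tau,k)$ for every $\tau\in T_{k,d}$, identically for both graphs. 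So the $k$-capped value of this sum coincides, by the same two-case argument as in the proof of Lemma~\ref{lemma:type-equivalence-implies-logic-equivalence}:
\begin{itemize}
\item if some summand has capped value $k$, both sums are $\geq k$;
\item otherwise all summands agree exactly, so the sums agree.
\end{itemize}
Hence the capped count of $\sigma$ agrees, which settles the $\langle G\rangle^{\text{pre}}$ conjuncts.

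\emph{The $\langle P\rangle^{\text{suf}}$ conjuncts.} These need more care, since $\langle P\rangle^{\text{suf}}_{\geq \ell}$ is defined by nesting rather than counting. I would handle them without any counting argument. Each conjunct $\langle P\rangle^{\text{suf}}_{=\ell}\sigma$ with $\sigma\in T_{k,d-1}$ is a formula of depth at most $d$ and width at most $k$. (I am treating the counted modality as contributing depth $1$, as in the grammar of $\GPTLF_{\geq}$.) By Lemma~\ref{lemma:type-equivalence-implies-logic-equivalence}, graphs sharing a depth-$(d+1)$ type agree on all formulas of depth $\le d+1$ and width $\le k$. Both graphs therefore agree on every conjunct of the depth-$d$ type.

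This observation in fact gives a one-line proof of the whole statement. The depth-$d$ type $\tau^{(\cG,w)}_{k,d}$ has depth $d\le d+1$ and width $\le k$. Since $\cG,w\models\tau^{(\cG,w)}_{k,d}$, Lemma~\ref{lemma:type-equivalence-implies-logic-equivalence} forces $\cH,v\models\tau^{(\cG,w)}_{k,d}$. Because distinct depth-$d$ types are mutually exclusive, the depth-$d$ types coincide.

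\emph{Main obstacle.} I would present this short route, and the main thing to verify is that depth-$d$ types are pairwise contradictory. Two distinct types differ in some conjunct, and one type contains the negation-equivalent counterpart of that conjunct. This holds because, for each $\sigma$, exactly one of the options $\langle\cdot\rangle_{=0},\dots,\langle\cdot\rangle_{=k-1},\langle\cdot\rangle_{\ge k}$ is recorded, and these options are mutually exclusive.
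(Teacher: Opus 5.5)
\paragraph{Assessment.} Your final route is different from the paper's, and it breaks down at one point: it relies on a lemma whose proof in the paper uses this proposition.

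\paragraph{The circularity.} As a deduction, the short route is valid. $\tau^{(\cG,w)}_{k,d}$ has depth $d$ and width at most $k$, and $(\cG,w)$ satisfies it. Your argument that distinct types of equal depth are pairwise inconsistent is also correct: each type records exactly one of the mutually exclusive options $=0,\dots,=k-1,\geq k$ for every lower-depth type. The problem is the provenance of Lemma~\ref{lemma:type-equivalence-implies-logic-equivalence}. The inductive step of that lemma's proof begins with ``Since $(\cG,i)$ and $(\cH,j)$ satisfy also the same $\GPTLF_{\geq}$-types of depths $<d$ and width $k$, they are indistinguishable by any formula of depth $<d$''. That sentence is exactly the proposition you are proving. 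So, as the paper stands, both your one-line proof and your treatment of the $\langle P\rangle^{\text{suf}}$ conjuncts are circular. The paper's own proof of the proposition does not use the lemma. It is a direct induction on $d$: it matches the conjuncts of the two depth-$(d+1)$ types and pushes the agreement down to $T_{k,d-1}$ via the induction hypothesis.

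\paragraph{First repair: reprove the lemma.} You can prove the lemma without the offending sentence. In the structural induction, atoms are fixed by the depth-$0$ conjunct. For a modal formula $\langle G\rangle^{\text{pre}}_{\geq\ell}\psi$ or $\langle P\rangle^{\text{suf}}_{\geq\ell}\psi$ with $\md(\psi)\le d-1$, the outer induction hypothesis is needed only to see that the truth of $\psi$ at a vertex depends only on that vertex's depth-$(d-1)$ type. Combined with the capped information recorded in the depth-$d$ type, this settles the case without comparing the lower-depth types of $(\cG,i)$ and $(\cH,j)$ themselves.

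\paragraph{Second repair: finish your direct induction.} This is the cleaner option, and it is exactly the refinement argument that the paper's terse step (``the above observation also holds for $\tau\in T_{k,d-1}$'') leaves implicit. Your sum-over-refinements argument already settles the $\langle G\rangle^{\text{pre}}$ conjuncts. For $\langle P\rangle^{\text{suf}}$, replace the sum by an existential, as follows.
\begin{itemize}
\item Under the nesting semantics, $\langle P\rangle^{\text{suf}}_{\geq\ell}\sigma$ holds iff some suffix vertex at least $\ell$ positions back satisfies $\sigma$. That is, it holds iff $\langle P\rangle^{\text{suf}}_{\geq\ell}\tau$ holds for some $\tau\in S_\sigma$.
\item $\langle P\rangle^{\text{suf}}_{\geq\ell}\tau$ is monotone in $\ell$. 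Hence the option ($=m$ or $\geq k$) recorded for $\tau$ in the shared depth-$(d+1)$ type determines, for every $\ell\le k$, whether $\langle P\rangle^{\text{suf}}_{\geq\ell}\tau$ holds.
\item Therefore the $\langle P\rangle^{\text{suf}}$ option recorded for $\sigma$ is the maximum of the values recorded for the $\tau\in S_\sigma$, and it agrees for both graphs.
\end{itemize}
This gives a self-contained proof with no appeal to Lemma~\ref{lemma:type-equivalence-implies-logic-equivalence}.
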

\begin{proof}
    Let us proceed by induction over $d$. It is easy to see from the definition of $\GPTLF_{\geq}$-types, that if $(\cG,w)$ and $(\cH,v)$ satisfy exactly the same $\GPTLF_{\geq}$-type of width $k$ and modal depth $1$, then $(\cG,w)$ and $(\cH,v)$ satisfy exactly the same $\GPTLF_{\geq}$-type of width $k$ and modal depth $0$. 
    
    Let $d$ be such that the statement of the proposition holds. If $(\cG,w)$ and $(\cH,v)$ satisfy exactly the same $\GPTLF_{\geq}$-type of width $k$ and modal depth $d+1$, then the different parts of their $\GPTLF_{\geq}$-type must be equal, i.e. the sets
        \[
        \tau^{(\cG,w)}_{k,0},
        \]
        \[
        \begin{aligned}
            \bigwedge_{\ell = 0}^{k-1}\{\,\langle G \rangle_{= \ell}^{\text{pre}} \tau \mid \tau \in T_{k,d} \text{ and } \cG,w \models \langle G \rangle_{= \ell}^{\text{pre}} \tau\,\},\\
            \{\,\langle G \rangle_{\geq k}^{\text{pre}} \tau \mid \tau \in T_{k,d} \text{ and } \cG,w \models \langle G \rangle_{\geq k}^{\text{pre}} \tau\,\},\\
            \bigwedge_{\ell = 0}^{k-1}\{\,\langle P \rangle_{= \ell}^{\text{suf}} \tau \mid \tau \in T_{k,d} \text{ and } \cG,w \models \langle P \rangle_{= \ell}^{\text{suf}} \tau\,\},\\
            \{\,\langle P \rangle_{\geq k}^{\text{suf}} \tau \mid \tau \in T_{k,d} \text{ and } \cG,w \models \langle P \rangle_{\geq k}^{\text{suf}} \tau\,\}
        \end{aligned}
    \]
    must be equal to their counterpart in $(\cH,v)$ (when $(\cG,w)$ is replaced with $(\cH,v)$).
    By the induction hypothesis, the above observation also holds for $\tau\in T_{k,d-1}$. Thus $\tau^{(\cG,w)}_{k,d} =\tau^{(\cH,v)}_{k,d}$. 
\end{proof}

\begin{lemma}\label{lem:type-equivalence-implies-state-equivalence}
    Two pointed graphs $(\cG,w)$ and $(\cH,v)$ satisfy exactly the same $\GPTLF_{\geq}$-type of width $k$ and modal depth $d$ if and only if they share the same state in each round up to $d$ for each $\CPG$-automaton.
\end{lemma}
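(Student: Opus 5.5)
The plan is to prove both directions by induction on $d$, with the $\GPTLF_{\geq}$-type automaton of width $k$ serving as the bridge. Throughout, only $\CPG$-automata whose multiset bound equals $k$ are considered, since that is the only reading under which the lemma can hold. The key auxiliary claim is the following: for every pointed two-sorted graph $(\cG,w)$ and every $t\le d$, the state of the type automaton of width $k$ at $w$ in round $t$ is exactly $\tau^{(\cG,w)}_{k,t}$.

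I would prove this claim by induction on $t$. For $t=0$ it is immediate from the definition of $\pi$. For the step, assume every vertex $u$ is in state $\tau^{(\cG,u)}_{k,t}$ in round $t$. The first multiset passed to $\delta$ is then the $k$-capped multiset of depth-$t$ types of the prefix vertices. So $M(\sigma)=\ell<k$ exactly when $\cG,w\models\langle G\rangle^{\text{pre}}_{=\ell}\sigma$, and $M(\sigma)=k$ exactly when $\cG,w\models\langle G\rangle^{\text{pre}}_{\geq k}\sigma$. Likewise, $M'$ gives the capped counts of depth-$t$ types among the suffix vertices $u$ with $w\le u<v$. These counts must match the semantics of $\langle P\rangle^{\text{suf}}_{=\ell}\sigma$ and $\langle P\rangle^{\text{suf}}_{\geq k}\sigma$, and with that the transition yields exactly Definition~\ref{type_definition}.

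The main obstacle is this last matching. The abbreviation $\langle P\rangle^{\text{suf}}_{\geq k}\sigma$ is a nested past diamond, and it says that there are $u_1>u_2>\dots$ in the suffix with $\sigma$ holding at the innermost one. What the automaton counts is instead the number of earlier suffix vertices at which $\sigma$ holds. To reconcile the two, I would first establish a small auxiliary claim: the depth-$t$ type of a suffix vertex determines how many suffix vertices strictly precede it, up to the relevant bound. This works because $\langle P\rangle^{\text{suf}}\top$ iterated counts the predecessors, and types contain enough $\langle P\rangle$-information via their conjuncts. With this claim, the nested-diamond reading and the counting reading agree at all positions where the types are compared. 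If the literal abbreviation does not yield this cleanly, I would instead read $\langle P\rangle^{\text{suf}}_{\geq k}$ in Definition~\ref{type_definition} as the counting modality, which is the natural intended semantics. I would then note that it remains definable in $\GPTLF$ on word-shaped suffixes, since the number of earlier $\sigma$-positions in a linear order is expressible by nesting $\langle P\rangle^{\text{suf}}(\sigma\land\langle P\rangle^{\text{suf}}(\sigma\land\cdots))$.

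With the claim in hand, both directions follow. ($\Leftarrow$) If $(\cG,w)$ and $(\cH,v)$ share the same state in rounds up to $d$ for every $\CPG$-automaton, this holds in particular for the type automaton of width $k$ and output round $d$. By the claim, their round-$d$ states are $\tau^{(\cG,w)}_{k,d}$ and $\tau^{(\cH,v)}_{k,d}$, so these types coincide. ($\Rightarrow$) Here I would strengthen the statement by induction on $d$. The strengthened claim is that if $w$ and $v$ have the same depth-$d$ type, then for every $\CPG$-automaton $A$ and every $t\le d$ they have the same state $x^t$. Proposition~\ref{proposition: types work} gives agreement of the lower-depth types, so the induction hypothesis yields equal states up to round $d-1$. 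For round $d$, each vertex's round-$(d-1)$ state in $A$ is a function of its depth-$(d-1)$ type, again by the induction hypothesis applied to all vertices of both graphs. Hence the $k$-capped multisets of $A$-states over the prefix, and over the earlier suffix vertices, are images of the $k$-capped multisets of depth-$(d-1)$ types under that function. Those type multisets are recorded identically in the shared depth-$d$ type, by the conjuncts of Definition~\ref{type_definition} together with the counting identification from the previous paragraph. Applying $\delta$ to equal arguments then gives equal round-$d$ states.
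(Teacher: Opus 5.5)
Your argument is essentially the paper's: induction on $d$, with ($\Rightarrow$) obtained by applying $\delta$ to equal arguments and ($\Leftarrow$) by specializing to the type automaton of width $k$. Your explicit claim that the type automaton's round-$t$ state at $w$ is exactly $\tau^{(\cG,w)}_{k,t}$ is a cleaner substitute for the paper's appeal to the ``injectivity'' of that automaton's transition function. You are also right that the nested abbreviation $\langle P\rangle^{\text{suf}}_{\geq k+1}\varphi := \langle P\rangle^{\text{suf}}\langle P\rangle^{\text{suf}}_{\geq k}\varphi$ does not count. The paper's proof tacitly uses the counting reading instead, both in items 4--5 of its inductive step and in the transition function of its type automata.

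Your first proposed reconciliation fails, however, so drop it and commit to your fallback. Under the literal abbreviation, $\langle P\rangle^{\text{suf}}_{\geq \ell}\sigma$ holds at $w$ iff the earliest $\sigma$-vertex of the suffix lies at least $\ell$ positions before $w$. That is a distance-to-first-occurrence invariant, and knowing the capped number of predecessors of $w$ does not turn it into an occurrence count.

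For a concrete counterexample, take $k=2$, two graphs with the same prefix, suffixes $\BOS\,a\,b\,b\,a$ and $\BOS\,a\,b\,a\,b\,a$, and $w$ the last vertex. In both graphs each of $\BOS$, $a$, $b$ first occurs at least two positions before $w$, and no other label occurs. Hence the literal depth-$1$, width-$2$ types of $w$ coincide, and so does the capped number of predecessors. Yet $w$ has one earlier $a$ in the first graph and two in the second, so a $\CPG$-automaton with bound $2$ and injective $\pi$ separates them in round $1$. With the literal definition the lemma is therefore false.

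The repair is exactly your alternative, $\langle P\rangle^{\text{suf}}_{\geq k+1}\varphi := \langle P\rangle^{\text{suf}}(\varphi\land\langle P\rangle^{\text{suf}}_{\geq k}\varphi)$, which is a genuine counting modality over the linearly ordered suffix. With that definition your proof goes through. The one routine step you leave implicit is that the $k$-capped multiset of round-$(d-1)$ states is determined by the $k$-capped multiset $N$ of depth-$(d-1)$ types. This holds because, for the map $f$ from types to states, $\min\{\sum_{\sigma\in f^{-1}(q)}N(\sigma),k\}$ depends only on the values $\min\{N(\sigma),k\}$. This is the two-case argument from the proof of Lemma~\ref{lemma:type-equivalence-implies-logic-equivalence}.
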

\begin{proof}
    We will proceed by induction over the depth, $d$. Let us fix $k$, let $d=0$, and let $(\cG,w)$ and $(\cH,v)$ be pointed graphs. $(\cG,w)$ and $(\cH,v)$ satisfy the same $\GPTLF_{\geq}$-type of width $k$ and modal depth $0$ if and only if $\tau^{(\cG,w)}_{k,0}=\tau^{(\cH,v)}_{k,0}$. From the definition of $\CPG$-automata, $\tau^{(\cG,w)}_{k,0}=\tau^{(\cH,v)}_{k,0}$ holds if and only if each initialization function $\pi$ assigns $(\cG,w)$ and $(\cH,v)$ the same initial state.

    Let $d$ be such that the statement of the lemma holds. Clearly, two pointed graphs $(\cG,w)$ and $(\cH,v)$ satisfy exactly the same $\GPTLF_{\geq}$-type of width $k$ and modal depth $d+1$ if and only if: 
    \begin{enumerate}
        \item They satisfy the same $\GPTLF_{\geq}$-type of width $k$ and modal depth $0$.
        \item For all $\GPTLF_{\geq}$-types $\tau$ of width $k$ and modal depth $d$, there are $l<k$ vertices in the prefix of $(\cG,w)$ that satisfy $\tau$ if and only if there are $l$ vertices in the prefix of $(\cH,v)$ that satisfy $\tau$.
        \item For all $\GPTLF_{\geq}$-types $\tau$ of width $k$ and modal depth $d$, there are $l\ge k$ vertices in the prefix of $(\cG,w)$ that satisfy $\tau$ if and only if there are at least $k$ vertices in the prefix of $(\cH,v)$ that satisfy $\tau$.
        \item For all $\GPTLF_{\geq}$-types $\tau$ of width $k$ and modal depth $d$, there are $l<k$ vertices in the suffix of $(\cG,w)$ before $w$ that satisfy $\tau$ if and only if there are $l$ vertices in the suffix of $(\cH,v)$ before $v$ that satisfy $\tau$.
        \item For all $\GPTLF_{\geq}$-types $\tau$ of width $k$ and modal depth $d$, there are $l\ge k$ vertices in the suffix of $(\cG,w)$ before $w$ that satisfy $\tau$ if and only if there are at least $k$ vertices in the suffix of $(\cH,v)$ before $v$ that satisfy $\tau$.
    \end{enumerate}
    Item 1. holds if and only if each initialization function $\pi$ assigns $(\cG,w)$ and $(\cH,v)$ the same initial state. Notice that by Proposition~\ref{proposition: types work}, $\tau^{(\cG,w)}_{k,d} =\tau^{(\cH,v)}_{k,d}$.
    
    Item 2. is equivalent to, there are $l<k$ vertices in the prefix of $(\cG,w)$ with the same $\GPTLF_{\geq}$-type, $\tau$, of width $k$ and modal depth $d$ if and only if there are $l$ vertices in the prefix of $(\cH,v)$ with the same $\GPTLF_{\geq}$-type, $\tau$, of width $k$ and modal depth $d$. 
    
    Item 3. is equivalent to, there are at least $k$ vertices in the prefix of $(\cG,w)$ with the same $\GPTLF_{\geq}$-type, $\tau$, of width $k$ and modal depth $d$ if and only if there are at least $k$ vertices in the prefix of $(\cH,v)$ with the same $\GPTLF_{\geq}$-type, $\tau$, of width $k$ and modal depth $d$. 
    
    Item 4. is equivalent to, there are $l<k$ vertices in the suffix of $(\cG,w)$ before $w$ with the same $\GPTLF_{\geq}$-type, $\tau$, of width $k$ and modal depth $d$ if and only if there are $l$ vertices in the suffix of $(\cH,v)$ before $w$ with the same $\GPTLF_{\geq}$-type, $\tau$, of width $k$ and modal depth $d$. 
    
    Item 5. is equivalent to, there are at least $k$ vertices in the suffix of $(\cG,w)$ before $w$ with the same $\GPTLF_{\geq}$-type, $\tau$, of width $k$ and modal depth $d$ if and only if there are at least $k$ vertices in the suffix of $(\cH,v)$ before $w$ with the same $\GPTLF_{\geq}$-type, $\tau$, of width $k$ and modal depth $d$. 

    From the induction hypothesis, if two pointed graphs $(\cG,w')$ and $(\cH,v')$ satisfy exactly the same $\GPTLF_{\geq}$-type of width $k$ and modal depth $d$, then they share the same state in each round up to $d$ for each $\CPG$-automaton. Thus from items 1-5, $(\cG,w)$ and $(\cH,v)$ share the same state in each round up to $d$ for each $\CPG$-automaton. Finally, from the transition function, $\delta$, for each $\CPG$-automaton, the state of $(\cG,w)$ at round $d+1$ is determined by the state at round $d$ and the respective multisets, $\mathcal{M}_k(Q)$, at round $d$. By the induction hypothesis and items 1-5, these multisets are the same for $(\cG,w)$ and $(\cH,v)$. Thus, they share the same state in round $d+1$ for each $\CPG$-automaton. 

    For the other direction, suppose that $(\cG,w)$ and $(\cH,v)$ are such that they share the same state in each round up to $d+1$ for each $\CPG$-automaton. In particular, this holds for each type automaton. From the previous proposition, the transition function of a type automaton is injective (i.e. if two pointed graphs $(\cG,w')$ and $(\cH,v')$ satisfy different $\GPTLF_{\geq}$-types of width $k$ and modal depth $d$, then $(\cG,w)$ and $(\cH,v)$ satisfy different $\GPTLF_{\geq}$-types of width $k$ and modal depth $d+1$). Thus, that $(\cG,w)$ and $(\cH,v)$ share the same state in each round up to $d+1$ in a type automaton, implies that they have the same $\GPTLF_{\geq}$-types of width $k$ and modal depth $d+1$.
\end{proof}

\automatatologic*
\begin{proof}
    Assume that $A = (Q, \pi, \delta, d, b)$ is a $\CPG$-automaton.
    For each $q \in Q$, let
    \[
        \Phi_q:=\{ \tau^{(\cG,i)}_{k,d}\in T_{k,d} \mid \text{ the state given by $A$ to } (\cG,i) \text{ in round }d \text{ is } q\}.
    \]
    We show that $\cG,i\models\bigvee_{\tau \in \Phi_q}\tau$ if and only if the state given by $A$ to $(\cG,i)$ is $q$. 
    If $\cG,i\models\bigvee_{\tau\in\Phi_q}\tau$, then $\cG,i\models\tau^{(\cH,j)}_{k,d}$ for some pointed $\Sigma$-labelled graph $(\cH,j)$ that outputs $q$. This means that $(\cG,i)$ and $(\cH,j)$ satisfy the same 
    $\GPTLF_{\geq}$-type 
    of depth $d$ and width $k$. By Lemma \ref{lem:type-equivalence-implies-state-equivalence}, this means that $(\cG,i)$ and $(\cH,j)$ share the same state in $A$ in each round up to $d$. Since $A$ gives $(\cH,j)$ the state $q$ in round $d$, $A$ also gives $(\cG,i)$ the same state in round $d$. Conversely, if $A$ gives $(\cG,i)$ the state $q$ in round $d$, then $\tau^{(\cG,i)}_{k,d}\in\Phi_q$ and thus $\cG,i\models\bigvee_{\tau \in \Phi_q}\tau$.

    Now we have a formula $\varphi_q = \bigvee_{\tau \in \Phi_q}\tau$ that is true in $(\cG,i)$ exactly when $i$ is in the state $q$ in round $d$. For each $n \in [b]$, let $Q_n$ denote the set of all states in $Q$ where the $n$th bit is $1$. Now we can construct the formula $\psi_n = \bigvee_{q \in Q_n} \varphi_q$, which is clearly true in $(\cG,i)$ if and only if the $n$th output bit of $(\cG,i)$ is~$1$. The equivalent formula tuple we seek is thus $(\psi_1, \dots, \psi_b)$.
\end{proof}

\section{Variations of the transformer architecture}\label{appendix:transformer-variations}

\subsection{Multi-head cross-attention}\label{appendix:multi-head-cross-attention}

Many papers note in passing that a multi-head self-attention layer with $h$ heads can be simulated with $h$ sequential single-head self-attention layers \cite{merrill2024, li2024chain}. Here, we prove the same for cross-attention.

\begin{theorem}
    A multi-head cross-attention layer with $h$ heads can be simulated by $h$ sequential single-head cross-attention layers.
\end{theorem}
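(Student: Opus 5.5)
We prove the statement by decomposing the output projection blockwise and then using the residual connection to accumulate the head contributions one layer at a time. Let the multi-head layer be $\MHCA(X,Y') = \concat(H^{(1)}(X,Y'),\dots,H^{(h)}(X,Y'))W^O$ with $W^O \in \cF^{hd_v \times d}$. Split $W^O$ into row blocks $W^O_1,\dots,W^O_h \in \cF^{d_v\times d}$, so that over the reals $\MHCA(X,Y') = \sum_{\ell} H^{(\ell)}(X,Y')W^O_\ell$.

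The first step is to keep the query input fixed across the $h$ sequential layers. The naive simulation fails because after the first single-head layer the residual stream $Y'$ has changed, so the queries of head $2$ would be computed from a modified matrix. We avoid this by padding the internal dimension, as in the proof of Theorem~\ref{theorem: logic to transformers} and \cite{ahvonen2025expressive}. We work in dimension $d' = (h+2)d$ (or similar), where the first block holds $Y'$, block $\ell+1$ is reserved for the contribution of head $\ell$, and the last block is a scratch accumulator. The $\ell$th single-head cross-attention layer uses the queries $W^Q$ of head $\ell$, padded with zero rows so that it reads only the first block. It uses $W^K$ and $W^V$ padded to read only the encoder's original $d$ coordinates; the encoder output $X$ is embedded into the first block with zeros elsewhere. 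Its output matrix is $W^O_\ell$, placed so that the result is written into block $\ell+1$. By the residual connection, block $1$ is unchanged, so every head sees exactly $Y'$, and each head's output $H^{(\ell)}(X,Y')W^O_\ell$ ends up in its own block, computed with exactly the same floating-point operations as in the original.

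The main obstacle is the final summation. In floats, $\concat(\cdots)W^O$ computes each entry as $\sum_{k=1}^{hd_v} F_{ik}W^O_{kj}$ in increasing order of the summands (by our convention), which is not the same as first rounding each per-head partial sum and then adding. Writing per-head products $H^{(\ell)}W^O_\ell$ into separate blocks would therefore change the rounding.

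To fix this, the $\ell$th layer instead writes the raw unsummed head outputs $H^{(\ell)}(X,Y') \in \cF^{*\times d_v}$, taking $W^O$ to be an identity projection into a fresh $d_v$-block. All the $F_{ik}$ for $k\in[hd_v]$ are then stored exactly, side by side. The final multiplication by the real $W^O$ cannot be performed by the MLP directly, because its first layer $\bx W_1 + b_1$ is a float matrix product with the same summation order. We therefore use the following step. Set the MLP of the $h$th layer so that $W_1$ maps the concatenated block to itself and the negated $Y'$-irrelevant parts appropriately, using $\ReLU(x)-\ReLU(-x)=x$ splitting, then set $W_2$ to contain $W^O$. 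The sum $\sum_k F_{ik}W^O_{kj}$ is then computed as a single float matrix product over the same multiset of summands, hence with identical rounding. The splitting into positive and negative parts is needed only because $\ReLU$ sits in between. We expect this bookkeeping, and checking that exactly the same multiset of summands is produced (the zero padding adds only zeros, which do not affect $\SUM_\cF$), to be the delicate point.

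Finally, the residual adds this to $Y'$. A further zero-attention layer with a cleanup MLP subtracts the auxiliary blocks, yielding $Y'' = \MHCA(X,Y')+Y'$ in the first block. Intermediate self-attention and MLP sub-layers of the simulating decoder layers are set to zero weights, so that by the residual they act as the identity.
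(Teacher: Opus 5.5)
Your proposal is correct and follows essentially the same route as the paper: reserve a $d_v$-wide slot per head, let the $\ell$th single-head layer deposit head $\ell$'s raw output into its slot through the residual stream, and apply $W^O$ in the feedforward sublayer of the last layer. Beyond that route, you add care that the paper's proof leaves implicit: you keep $Y'$ in its own block, so every head's queries really are computed from the unmodified $Y'$ (the paper places the slots inside $Y$'s own coordinates and only checks that $Z$ is unchanged), and you use $\ReLU$ sign-splitting to reproduce the exact multiset of float summands for the $W^O$ product before the residual adds $Y'$; your separate cleanup layer can also be folded into that same final $\MLP$, so exactly $h$ cross-attention layers suffice.
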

\begin{proof}
Let $Z \in \cF^{n_x \times d}$ (the encoder output) and $Y \in \cF^{n_y \times d}$ (the decoder input) be the inputs to a multi-head cross-attention layer with $h$ heads. Recall that
\[
    \mathrm{MHCA}(Z, Y) = \mathrm{concat}(H^{(1)}(Z, Y), \dots, H^{(h)}(Z, Y)) W^O,
\]
where each head $H^{(\ell)}(Z, Y) \in \cF^{n_y \times d_v}$ is a single-head cross-attention computation with its own parameter matrices $W^Q_\ell, W^K_\ell, W^V_\ell$, and $W^O \in \cF^{h d_v \times d}$ is the output projection.

We simulate this with $h$ sequential single-head cross-attention layers. The key idea is to use $h$ designated "slots" of width $d_v$ in the hidden dimension of $Y$ to accumulate the outputs of each head one at a time. We reserve dimensions $(\ell - 1) d_v + 1, \dots, \ell d_v$ of $Y$ for the $\ell$-th head's output, assuming $d \geq h d_v$ (otherwise we widen $Y$ and $Z$ by padding with zeros, which does not affect expressivity).

At layer $\ell$, we apply a single-head cross-attention layer with parameters $(W^Q_\ell, W^K_\ell, W^V_\ell)$, producing $H^{(\ell)}(Z, Y) \in \cF^{n_y \times d_v}$. We then use the residual connection and a feedforward sublayer to write this output into the $\ell$-th slot of $Y$, leaving the other slots unchanged. Specifically, the feedforward sublayer can be configured to zero out any interference from the residual connection in the target slot and to preserve the values in all other slots.

After $h$ layers, the hidden state contains $H^{(1)}(Z, Y), \dots, H^{(h)}(Z, Y)$ concatenated in the reserved slots, exactly as in the output of $\mathrm{concat}(H^{(1)}(Z, Y), \dots, H^{(h)}(Z, Y))$. The final linear projection $W^O$ can be implemented by the feedforward sublayer of the last layer (or a subsequent feedforward-only layer).

Since the encoder output $Z$ is a fixed side input and is not modified by any of the $h$ layers, each single-head cross-attention layer sees the same $Z$ as the original multi-head layer, and the computation within each head is identical to the corresponding head in the multi-head layer. Therefore the output of the $h$-layer simulation equals $\mathrm{MHCA}(Z, Y)$.
\end{proof}

For the rest of the Appendix, we thus assume that each self- or cross-attention layer contains only a single attention head, which makes the proofs simpler.

\subsection{Masking}\label{appendix:masking}

It has been shown that transformers with strict causal masking are strictly more expressive than those with non-strict causal masking \cite{yang2024masked, licharacterizing}. We note that our characterizations easily cover both strict and non-strict causal masking with only slight modification to the logic and automaton class.

First, we introduce the alternative encoder--decoder architecture. It is otherwise the same as the one we have presented in Section~\ref{subsection:encoder-decoder-transformers}, except that the diagonal is not masked in the decoder layers, i.e., we use the masking function
\[
  (\mathrm{mask}(S))_{ij} = \begin{cases}
      -\infty &\text{when } j > i, \\
      S_{ij} &\text{otherwise}.
  \end{cases}
\]
The difference to the main section is that equality is not included in the upper condition.

Next, consider the logic $\GPTLF_2$ obtained from $\GPTLF$ by replacing the modality $\langle P \rangle^{\text{suf}}$ with the modality $\langle P \rangle_{\geq k}^{\text{suf}}$, with the following semantics. Given a pointed two-sorted graph $(\cG, w)$, we have
\[
    \cG,w\models \langle P\rangle_{\geq k}^{\text{suf}}\,\varphi \text{ if and only if }  |\{v\in V(\cG_s) \mid v \leq w \text{ and }\ \cG,v\models \varphi\}| \geq k.
\]
Note the difference to the abbreviation defined in Appendix~\ref{appendix: automata to logic}, since here $\langle P\rangle_{\geq k}^{\text{suf}}$ can potentially count the vertex $w$ itself.

Thirdly, consider $\CPG'$-automata where the automata are otherwise defined the same as $\CPG$-automata in Section~\ref{subsection: automata} except that the second multiset of states is received from prior vertices in the suffix \emph{including the vertex itself}.

We obtain the following theorem.

\begin{theorem}
    Encoder--decoder transformers with non-strict causal masking, the logic $\GPTLF_2$ and $\CPG'$-automata have the same expressive power.
\end{theorem}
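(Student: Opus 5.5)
I would follow the same three-translation scheme as in Theorem~\ref{theorem:main}: logic to transformers, transformers to automata, automata to logic, and then compose the translations. In each step I would reuse the existing proof almost verbatim and only change the part that handles the past modality or the masked self-attention, since those are the only places where the diagonal matters.

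\textbf{Logic to transformers.} I would copy the proof of Theorem~\ref{theorem: logic to transformers}. Boolean operators are again computed by the $\MLP$s, and $\langle G\rangle^{\text{pre}}_{\ge k}$ again uses unmasked self-attention or cross-attention together with the underflow trick of Proposition~\ref{proposition: underflow}. The new ingredient is the counting modality $\langle P\rangle^{\text{suf}}_{\geq k}$ over positions $v\le w$. I would simulate it with a masked self-attention head built exactly like the cross-attention construction: $W^Q$ reads the $\top$ column scaled by the largest finite float $F$, $W^K$ reads the column of $\varphi$, and $W^V$ is chosen as in \cite{ahvonen2025expressive} so that underflow happens precisely when at least $k$ attended vertices satisfy $\varphi$.

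Because the diagonal is now unmasked, every row has at least one finite entry, namely the vertex itself. So the $\softmax$ convention for all-$-\infty$ rows is not needed. Each row $i$ then sees exactly the multiset of $\varphi$-values at suffix positions $\le i$, which is the same situation as global counting restricted to that set. The remaining values of $k$ are handled by the same extra arithmetic and $\MLP$ normalisation as in \cite{ahvonen2025expressive}.

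\textbf{Transformers to automata.} I would rerun the dependency analysis from the proof of Theorem~\ref{theorem: transformers to automata}, with the masked $\softmax$ now ranging over $k\le i$ instead of $k<i$. The bounded-dependence argument via Proposition~\ref{proposition: floating-point saturation} is unchanged. The output $O_i$ now depends on a bounded multiset of states of suffix vertices $u$ with $u\le i$, which is exactly what the second multiset of a $\CPG'$-automaton supplies. As before, the vertex's own state is also available to compute the query $\bv_i^Q$ and the residual connection.

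A vertex must still detect whether it lies in the encoder part or the decoder part. In $\CPG'$-automata the second multiset is nonempty for every suffix vertex, since it contains the vertex itself. So I would let a vertex infer that it is in the prefix exactly when its second multiset is empty.

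\textbf{Automata to logic.} I would redefine types using the new $\langle P\rangle^{\text{suf}}_{=\ell}$ and $\langle P\rangle^{\text{suf}}_{\geq k}$, defined directly from the semantics of $\GPTLF_2$. Since the counting modality is now primitive, there is no need for the iterated abbreviation. I would then redo Lemma~\ref{lemma:type-equivalence-implies-logic-equivalence}, Lemma~\ref{Analogous_B6} and Lemma~\ref{lem:type-equivalence-implies-state-equivalence}, replacing ``before $w$'' by ``up to and including $w$''. The counting argument with $\min(n_\tau,k)$ goes through unchanged. The final disjunction over types then gives the formula tuple. Composing the three translations as in Theorem~\ref{theorem:main} finishes the proof; the $\GPTLF_2\Rightarrow\CPG'$ direction passes through transformers exactly as before.

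\textbf{Main obstacle.} I expect the hardest step to be the logic-to-transformer simulation of $\langle P\rangle^{\text{suf}}_{\geq k}$ for $k\ge 2$. In the strict case only existence was needed, and a nonzero average sufficed. Here genuine counting is required, so the underflow construction must be checked to work when the number of attended positions varies from row to row. I would first try to use unchanged the argument for the global case from \cite{ahvonen2025expressive}, since that argument only depends on the attended multiset and not on the fact that it is the whole graph.
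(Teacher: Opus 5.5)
Your three-translation route is the paper's. The transformer-to-automaton and automaton-to-logic directions go through as you describe. Your test that a vertex lies in the prefix iff its second multiset is empty is correct, and cleaner than in the strict case, where the $\BOS$ vertex also has an empty second multiset. The gap is in the logic-to-transformer step for $\langle P\rangle^{\text{suf}}_{\geq k}$, and you have placed the difficulty in the wrong spot.

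Varying row lengths are harmless. When $\ell\ge 1$ positions $v\le w$ satisfy $\varphi$, the non-$\varphi$ entries become $\exp(-F)=0$ after subtracting the maximum. So the weights on the $\varphi$-positions are $1/\ell'$ however many positions the row attends to. The real problem is $\ell=0$. Then all unmasked scores in row $w$ are $0$, so $\softmax$ spreads its weight uniformly over the positions $v\le w$. Since the value only reads the $\varphi$-column, every attended value is $0$ and the head outputs $0$. That is exactly what underflow gives when $\ell\ge k$.

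Hence your claim that a zero output means ``at least $k$ attended vertices satisfy $\varphi$'' fails. Your $\MLP$ normalisation would make $\langle P\rangle^{\text{suf}}_{\geq k}\varphi$ true at every vertex with no $\varphi$-position at or before it. For $k=1$ the underflow head outputs $0$ whether $\ell=0$ or $\ell\ge 1$, so it carries no information. This matters because $\langle P\rangle^{\text{suf}}_{\geq 1}$ is the only way to say plain past existence in $\GPTLF_2$, since $\langle P\rangle^{\text{suf}}$ was replaced.

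The missing ingredient is the extra paragraph of the paper's proof: a second masked head that detects $\ell\ge 1$. Take $W^Q=W^K=0$ and let $W^V$ read the $\varphi$-column; this is the $\langle P\rangle^{\text{suf}}$ head from the proof of Theorem~\ref{theorem: logic to transformers}. Under non-strict masking it attends uniformly to all $v\le w$. It outputs a nonzero value iff some such $v$ satisfies $\varphi$.

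An $\MLP$ then sets $\langle P\rangle^{\text{suf}}_{\geq k}\varphi$ to true iff this existence value is nonzero and the underflow head outputs $0$. For $k=1$ the existence head alone suffices. Saying you would reuse the global-case argument of \cite{ahvonen2025expressive} ``unchanged'' covers this only if you explicitly carry over its handling of these false positives, and your sketch does not mention them.
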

\begin{proof}
    The translations from transformers to automata and from automata to logic are analogous to the case with $\GPTLF$ and $\CPG$-automata. We only need to modify the translation from logic to automata. 
    
    We modify the proof of Theorem~\ref{theorem: logic to transformers} from Appendix~\ref{appendix: logic to transformers} by constructing a stack of decoder layers that simulates the modality $\langle P \rangle_{\geq k}^{\text{suf}}$ with the semantics defined above. The matrices $W_Q$ and $W_K$ are defined exactly as we did for the decoder layers that simulate $\langle G \rangle_{\geq k}^{\text{pre}}$. This means that before masking, each row contains the column of the input matrix that encodes the truth values of $\varphi$ multiplied by $F$, the largest positive non-infinite float in the format. After masking, entries to the right of the diagonal are set to $-\infty$, which are then set to $0$ by $\softmax$. For the other entries on each row, if at least one of them is $F$, then the bias and $\softmax$ turn each $0$ into $0$ and each $F$ into $\frac{1}{\ell'}$ (rounded to the nearest float) where $\ell'$ is the sum of $\ell$ $1$s and $\ell$ is the number of prior vertices (including self) that satisfy $\varphi$. This is the same outcome as in the proof of Theorem 47 of \cite{ahvonen2025expressive}, which means that we can construct the rest of the decoder in the same way, constructing the value matrix to trigger underflow when $\ell \geq k$. 
    
    Just like in \cite{ahvonen2025expressive}, there is the possibility of false positives; if all unmasked entries on a row are $0$, then biasing and $\softmax$ replaces each $0$ with $\frac{1}{m'}$ (rounded to the nearest float) where $m'$ is the sum of $m$ $1$s and $m$ is the number of prior vertices in the suffix including the vertex itself. Applying the value matrix construction then outputs $0$ for these rows, because each row of $XW_V$ corresponding to these non-zero entries is $0$. This is the same result as with underflow, so from the point of view of the vertex, the cases $\ell = 0$ and $\ell \geq k$ appear indistinguishable. To rectify this, we can simply use the masked self-attention head constructed in the proof of Theorem~\ref{theorem: logic to transformers} that simulates the non-counting modality $\langle P \rangle^{\text{suf}}$, which also works for the non-strict version of the modality and is enough to distinguish the cases $\ell = 0$ and $\ell \geq k$.

    The rest of the proof does not require modifications.
\end{proof}

We note that alongside non-strict causal masking, \cite{vaswani2017attention} also shifts the decoder inputs to the right. However, this occurs outside the calculations of the transformer and thus does not affect our translations.

\subsection{Layer normalization}\label{appendix:layer-normalization}

It is known that layer normalization in a decoder self-attention layer does not affect expressive power \cite[Proposition B.13]{chiang2022overcoming, licharacterizing}. In this subsection, we show that the same holds for cross-attention, and that the choice between pre-norm and post-norm does not affect expressivity.

For each $\bx \in \cF^*$, let 
\[
    \mathrm{LN}(\textbf{x})=\frac{\textbf{x}-\mu}{\sqrt{\sigma^2+\varepsilon}}\cdot \gamma+\beta
\]
be a \textbf{layer normalization} function, where $\mu \in \cF$ and $\sigma \in \cF$ are the mean and standard deviation of the elements in $\textbf{x}$, $\gamma,\beta\in\mathcal{F}$ are learned parameters, and $\varepsilon\in\mathcal{F}$ is a small constant for numerical stability. 
A \textbf{decoder layer with residual connection and pre-norm} is obtained by replacing the multi-head cross-attention layer with residual connection ($\mathrm{MHCA}(X,Y) + Y$) with
\[
    \mathrm{MHCA}(\mathrm{LN}(X),\mathrm{LN}(Y))+Y,
\]
and one with \textbf{post-norm} is obtained by replacing it instead with
\[
    \mathrm{LN}(\mathrm{MHCA}(X,Y)+Y),
\]
where $\mathrm{LN}$ is applied row-wise.

\begin{theorem}
    A decoder layer with layer normalization (pre-norm or post-norm) can be simulated by a decoder layer without layer normalization.
\end{theorem}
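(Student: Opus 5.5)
The plan is to exploit that $\mathrm{LN}$ is \emph{row-wise}: it maps each feature vector $\bx \in \cF^d$ to $\mathrm{LN}(\bx)$ without looking at other vertices. Since $\cF^d$ is finite, $\mathrm{LN}$ restricted to the vectors that can actually occur is a fixed finite look-up table. So no attention is needed to compute it, and the task reduces to pushing this local table into parts of a normalisation-free decoder layer that are already local: the weight matrices, the residual bookkeeping and the $\MLP$. I will work, as the paper does elsewhere, with a padded internal dimension that is an integer multiple of $d$, using the slot-shifting technique of \cite{ahvonen2025expressive}. After Appendix~\ref{appendix:multi-head-cross-attention} it suffices to treat single-head layers.

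\emph{Pre-norm.} The sublayer is $\MHCA(\mathrm{LN}(X),\mathrm{LN}(Y))+Y$, so the residual stream must carry the un-normalised $Y$ while attention reads the normalised copy. I would keep, in each row, both the original vector in one slot and $\mathrm{LN}$ of it in a second slot. For the encoder output $X$, the copy $\mathrm{LN}(X)$ is placed by the last encoder step (or by the embedding if there are no encoder layers). For $Y$, it is placed by the preceding sublayer of the decoder layer.

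The matrices $W^Q$, $W^K$, $W^V$ are then zero on the original slots and equal to the original parameters on the $\mathrm{LN}$ slots. Hence the head computes exactly $H(\mathrm{LN}(X),\mathrm{LN}(Y))$, and $W^O$ writes into the slot holding $Y$, so the residual gives $\MHCA(\mathrm{LN}(X),\mathrm{LN}(Y))+Y$. Note that $\mathrm{LN}$ is well defined pointwise on floats, so its floating-point rounding is reproduced exactly by storing its values.

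\emph{Post-norm.} Here $\mathrm{LN}(\MHCA(X,Y)+Y)$ is followed by the $\MLP$ and its residual. I would let the cross-attention produce $Y''_{\mathrm{raw}}=\MHCA(X,Y)+Y$ unchanged. The $\MLP$ sublayer then has to realise $\bx \mapsto M(\mathrm{LN}(\bx))+\mathrm{LN}(\bx)$ on the finite set of possible rows, with the residual term cancelled by subtracting $\bx$, which is possible since the input slot is known. This is again a finite table $g:\cF^d\to\cF^d$, the same kind of object as in the pre-norm case.

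\emph{Main obstacle.} The main difficulty is realising an arbitrary finite table $g$ \emph{exactly} in floating point with $\ReLU$ $\MLP$s, since floating-point sums and products round. I would try first to go through a one-hot encoding. For each of the finitely many floats $c$ that can occur in a coordinate, an indicator $[x=c]$ is obtained by $\ReLU$ combinations of $x$ shifted by $c$ and its neighbouring floats, scaled so the result is exactly $0$ or $1$; Lemma~40 of \cite{ahvonen2025expressive} gives the normalisation of nonzero values to $1$. A conjunction of these indicators then gives one indicator per candidate vector. A final linear layer outputs $g(\bx)$, and this is exact because only one term of each sum is nonzero and adding exact zeros does not round.

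Building the indicators needs more than two layers, so I would spend the additional depth by inserting attention-free layers (all attention weights zero, which are identities by the residual connection), exactly as in Appendix~\ref{appendix: logic to transformers}. Accordingly, ``simulated by a decoder layer'' is realised by a short stack of normalisation-free decoder layers. If the one-hot step fails for extreme floats such as $\pm\infty$ or subnormals, I would fall back on restricting the table to those inputs that are reachable from the fixed parameters.
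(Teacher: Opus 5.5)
Your strategy matches the paper's: $\mathrm{LN}$ acts row-wise on the finite set $\cF^d$, so it is a look-up table that can be pushed into neighbouring $\MLP$s. For post-norm that is the $\MLP$ after the cross-attention; for pre-norm it is earlier ones. You are more careful than the paper in two places.

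\begin{itemize}
\item Keeping both $Y$ and $\mathrm{LN}(Y)$ in separate slots is genuinely needed, because the pre-norm residual adds the un-normalised $Y$. Just making the preceding $\MLP$ output the normalised vector, as ``fold the $\mathrm{LN}$s into the $\MLP$s'' suggests, would corrupt the residual.
\item The paper simply asserts that a sufficiently wide two-layer $\ReLU$ $\MLP$ realises any such table exactly. You pay for exactness with extra attention-free layers. You therefore get simulation by a short stack of normalisation-free decoder layers. That is weaker than the literal statement, but it is all the characterization needs.
\end{itemize}

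One step, however, fails as written. In a pre-norm layer the cross-attention reads $\mathrm{LN}(Y')$, where $Y'=\MHSA_{\mathrm{mask}}(Y)+Y$ is computed inside the same decoder layer. You say this copy ``is placed by the preceding sublayer of the decoder layer''. That sublayer is the masked self-attention with its residual: it outputs softmax-weighted value vectors times $W^O$ plus $Y$, with no nonlinearity after the residual sum. So it cannot write the nonlinear function $\mathrm{LN}(Y')$ into any slot. The $\MLP$ of the previous decoder layer runs before this masked self-attention, so it never sees $Y'$ either. The paper's phrase ``the $\MLP$s of the \dots\ masked self-attention layer'' hides the same problem, since that sublayer has no $\MLP$.

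The repair uses the device you already allow: split each pre-norm decoder layer into a block of normalisation-free layers.
\begin{itemize}
\item First, a layer running the original masked self-attention with zero cross-attention (the identity, by the residual). Its $\MLP$, together with further attention-free layers if more depth is needed, writes $\mathrm{LN}(Y')$ into a spare slot while keeping $Y'$.
\item Then, a layer with zero masked self-attention whose cross-attention has $W^Q,W^K,W^V$ reading the $\mathrm{LN}$-slots of $X$ and $Y'$, and $W^O$ writing into the $Y'$-slot. This is followed by the original $\MLP$.
\end{itemize}

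Relatedly, in the post-norm case you should do the cancellation by writing $g(\bx)$ into a fresh zero slot and $-\bx$ into the old one. Do not write $g(\bx)-\bx$ into the same slot, since in floats $(g(\bx)-\bx)+\bx$ need not round back to $g(\bx)$.
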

\begin{proof}
    When the dimension of vectors is fixed to $d$, $\mathrm{LN} : \cF^d \to \cF^d$ is a function on a finite domain and can therefore be implemented as a look-up table by a sufficiently wide two-layer ReLU MLP. For post-norm, $\mathrm{LN}$ is applied after $\mathrm{CA}$ and before the MLP sublayer, so the MLP can absorb $\mathrm{LN}$. For pre-norm, $\mathrm{LN}$ is applied to the inputs before $\mathrm{CA}$, so we fold the $\mathrm{LN}$s into the MLPs of the preceding encoder layer and masked self-attention layer.
\end{proof}

\begin{theorem}\label{thm:ln-does-not-decrease}
    A decoder layer without layer normalization can be simulated by a decoder
    layer with layer normalization (pre-norm or post-norm).
\end{theorem}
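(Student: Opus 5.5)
We need to show that layer normalization can be neutralized, so that a decoder layer with LN behaves like one without. The plan is to encode each feature vector so that LN acts as a fixed, known bijection on the encoded vectors, and to let the MLPs (which, being look-up tables over the finite domain $\cF^d$, can implement any function $\cF^d\to\cF^d$, as used in the preceding theorem) decode and re-encode around it. We work at the level of expressive power, padding the internal dimension as in Appendix~\ref{appendix: logic to transformers}.

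The key encoding is a \emph{balanced doubling}. We represent a vector $\bx\in\cF^d$ by $(\bx,-\bx)\in\cF^{2d}$, possibly with two further padding coordinates $(c,-c)$ for a large fixed constant $c$. Every encoded row then has mean exactly $\mu=0$: floating-point summation in increasing order pairs the opposite values symmetrically, and we would check that the rounded sum is $0$ (if not, the look-up argument below still works). The standard deviation $\sigma$ depends on $\bx$, but the map $\bx\mapsto \mathrm{LN}(\bx,-\bx,c,-c)$ is a fixed function on a finite set. We choose $\gamma=1$ and $\beta=0$, and we pick $c$ so that this map is injective on the relevant finite set of vectors. The $\pm c$ coordinates dominate $\sigma$, and because of this domination the normalization essentially divides by a near-constant.

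\textbf{Post-norm.} The output of $\mathrm{MHCA}(X,Y)+Y$ is normalized, and the following MLP then reads the normalized vector. We simulate the original sublayer as follows. The attention weights $W^V,W^O$ are arranged to write the doubled value $(\bu,-\bu)$ into the doubled slots. The residual is likewise doubled, so the pre-LN row is the encoding of the original output. The MLP, acting as a look-up table, first inverts the injective LN map, then applies the original MLP, and finally re-encodes. The MLP's own residual and any post-norm after it are handled the same way.

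\textbf{Pre-norm.} Here LN is applied to $X$ and $Y$ before the attention, while the residual uses the raw $Y$. Attention needs $XW^K$, $YW^Q$ and $XW^V$ to equal the original quantities. These are linear in the normalized input, so we want the normalized encoding to contain the original $\bx$ linearly. We therefore let the preceding MLPs (the encoder's last MLP and the masked self-attention layer's MLP) store a pre-image: a vector $\bz$ with $\mathrm{LN}(\bz)$ containing $\bx$ in designated coordinates. The map $\bz\mapsto\mathrm{LN}(\bz)$ is a fixed function on a finite set, so this amounts to checking that its image covers the needed encodings. The raw residual $\bz$ is then corrected by the next MLP via look-up.

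The main obstacle is surjectivity in the pre-norm case. The image of LN consists of vectors with mean $0$ and variance about $1$, so an arbitrary $\bx$ cannot appear verbatim. I would first try the doubled encoding $(\bx,-\bx,c,-c)$ scaled so that LN is nearly the identity up to a known scalar. I would then absorb that scalar into $W^Q,W^K,W^V$, or alternatively replace exact recovery by injectivity: attention outputs only need to be determined by the originals, and any fixed injective re-encoding can be compensated by later MLP look-ups. The delicate point is the softmax scores, whose exact rounding must match, so the query and key products must be reproduced exactly.
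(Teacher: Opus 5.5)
Your proposal has a genuine gap, and it is exactly where you flag the main obstacle: the pre-norm case is never resolved. With the fixed padding $(c,-c)$ and $\gamma=1$, the normalizer $s(\bx)=\sqrt{\sigma^2+\varepsilon}$ still varies from row to row. The cross-attention therefore receives $\tilde{\bx}_i/s(\bx_i)$, so its scores become the original scores divided by $s(\bx_i)s(\bx_j)$, and the values are rescaled in the same way. A row-dependent factor cannot be absorbed into $W^Q$, $W^K$, $W^V$. ``Near-constant'' does not help either, since, as you note yourself, the rounding of the scores must match exactly. Your fallback also fails for attention: an injective per-row re-encoding cannot be compensated by later MLP look-ups. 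The softmax weights and the weighted sum aggregate over all positions, so a nonlinear change of each row alters the aggregated output in a way that no later per-row map can undo. Finally, the surjectivity problem is largely an artifact of fixing $\gamma=1$, which pins the output variance near $1$.

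The missing idea, which the paper's proof uses (the mirror-dimension trick of Chiang and Cholak), is to make the variance \emph{exactly the same for every row}, rather than merely dominated by a constant. Mirroring $\bx\mapsto(\bx,-\bx)$ gives $\mu=0$. In addition, an MLP writes a row-dependent value into reserved padding coordinates so that all rows have the same norm. One then takes $\gamma=\sqrt{\sigma^2+\varepsilon}$ and $\beta=0$, so that $\mathrm{LN}$ is the identity on the encoded rows. With $\tilde W^Q=\binom{W^Q}{0}$, $\tilde W^K=\binom{W^K}{0}$ and a mirrored value projection, the cross-attention sees literally the original inputs, its scores are the original ones, and the raw residual in the pre-norm formula needs no correction. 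No MLP sits between the masked self-attention and the cross-attention of one layer, so the padding for $Y'$ may have to be written in an extra layer whose unused attention sublayers are zeroed.

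The paper invokes the same device for post-norm. Your alternative there, inverting $\mathrm{LN}$ by a look-up in the following MLP, is a different route that might work. It needs a genuine injectivity argument, however, and a large fixed $c$ works against it: dividing by a large normalizer in floating point rounds distinct small entries together or underflows them to $0$.
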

\begin{proof}
    We adapt the \emph{mirror dimension} trick of \cite{chiang2022overcoming} to
    cross-attention. Given inputs $X \in \cF^{n_x \times d}$ and
    $Y \in \cF^{n_y \times d}$, define mirrored versions
    $\tilde X \in \cF^{n_x \times 2d}$ and $\tilde Y \in \cF^{n_y \times 2d}$
    by setting $\tilde X_{i, d+j} = -X_{i,j}$ and similarly for $\tilde Y$.
    Every row of a mirrored matrix thus has mean $\mu = 0$.

    We show that cross-attention preserves this structure. Choose query and key projections
    $\tilde W^Q = \binom{W^Q}{0}$ and $\tilde W^K = \binom{W^K}{0}$, so that
    the attention scores depend only on the original halves and are identical to
    those of the unmirrored computation. Choose the value projection
    $\tilde W^V = \bigl(\begin{smallmatrix} W^V & -W^V \\ 0 & 0
    \end{smallmatrix}\bigr)$, so that $\tilde X \tilde W^V = (XW^V, -XW^V)$,
    which is mirrored. Since the attention weights multiply this from the left,
    the cross-attention output is mirrored, and its first $d$ columns equal the
    original cross-attention output. The residual $+\tilde Y$ preserves mirroring
    since mirrored matrices are closed under addition.

    Following \cite{chiang2022overcoming}, we control the row variance by reserving
    additional padding dimensions whose values are set (via the MLP sublayer) so
    that every row has the same norm. With $\mu = 0$ and $\sigma^2$ constant across
    all rows, choosing $\gamma = \sqrt{\sigma^2 + \varepsilon}$ and $\beta = 0$
    makes $\mathrm{LN}$ the identity. The argument applies to both pre-norm and
    post-norm, since in either case $\mathrm{LN}$ is applied to a mirrored matrix
    with controlled norms.
\end{proof}

\end{document}